\documentclass[12pt,letterpaper]{article}
\usepackage{fullpage,indentfirst,rotating}
\usepackage{float}
\usepackage{color}

\usepackage{epsfig}
\usepackage{rotating}
\usepackage{subfigure}
\usepackage{amsfonts}
\usepackage{latexsym}
\usepackage{amsmath}
\usepackage{amssymb}
\usepackage{amsthm}
\usepackage{amstext}
\usepackage[utf8]{inputenc}
\usepackage[english]{babel}
\usepackage{mathbbol}
\usepackage{bigints}
\usepackage{longtable,array}
\usepackage{longtable}
\usepackage{mathtools}

\usepackage{bbm}
\usepackage{bm}
\usepackage{booktabs}
\usepackage{setspace}
\usepackage{xcolor}
\usepackage{blindtext}
\usepackage{sectsty}
\usepackage{todonotes}
\usepackage{bigints}
\usepackage{multicol}
\usepackage{multirow}
\usepackage{comment}
\usepackage[flushleft]{threeparttable}
\usepackage{enumitem}
\usepackage{relsize}
\sectionfont{\centering}
\subsectionfont{\centering}
\usepackage{dsfont}

\usepackage[ruled,vlined,longend]{algorithm2e}

\DontPrintSemicolon
\SetKwInput{KwData}{Data}
\SetKwInput{KwResult}{Result}

\usepackage{tikz}
\usetikzlibrary{positioning,shapes.geometric}

\usepackage[colorlinks=true, linkcolor=blue, citecolor=blue]{hyperref}
\usepackage{soul}
\usepackage[authoryear]{natbib}
\usepackage[top=1 in, bottom=1 in, left=1 in , right=1 in]{geometry}

\allowdisplaybreaks

\newcommand{\ben}{\begin{enumerate}}
\newcommand{\een}{\end{enumerate}}
\newcommand{\bit}{\begin{itemize}}
\newcommand{\eit}{\end{itemize}}

\providecommand{\keywords}[1]{\textit{Keywords:} #1}
\providecommand{\codes}[1]{\textit{JEL Classification:} #1}

\newtheorem{assumption}{Assumption}

\newtheorem{proposition}{Proposition}
\newtheorem{theorem}{Theorem}

\newtheorem{exmp}{Example}
\theoremstyle{definition}
\newtheorem{remark}{Remark}

\title{Causal Identification under Interference: The Role of Treatment Assignment Independence.}

\author{
Julius Owusu\thanks{Department of Economics, Concordia University. Email: julius.owusu@concordia.ca. This research was\\ \indent \indent supported by the Social Sciences and Humanities Research Council of Canada (SSHRC), Grant No.\\ \indent \indent 430-2025-01562.}
\and
Monika Avila M\'arquez\thanks{University of Geneva, Methods and Data Analysis. Email: monika.avila@unige.ch.}
}

\begin{document}
\doublespacing
\maketitle
\begin{abstract}
\thispagestyle{empty} 
Empirical researchers routinely invoke the no-interference or \textit{individualistic treatment response} (ITR) assumption to identify causal effects in observational studies, despite concerns that interference across units may arise in many economic settings. This paper studies the causal content of standard ITR-based identification formulas when arbitrary interference is present. We show that, under restrictions on dependence between treatment assignments across units, conventional ITR-based identification formulas---including those underlying selection-on-observables, instrumental variables, regression discontinuity designs, and difference-in-differences---identify well-defined causal objects: types of \textit{average direct effects} (ADEs). These results do not require knowledge of the interference structure or specification of exposure mappings. We also propose a sensitivity analysis framework that quantifies the robustness of statistical inference to violations of treatment-assignment independence under arbitrary interference.\\
\keywords{Average direct effect, individualistic treatment response, Interference.
}\\
 \codes{C21, C26, C31, C36.}
\end{abstract}

\setlength{\parindent}{3ex}
\newpage

\setcounter{page}{1}
\doublespacing
\section{Introduction}
A large share of empirical work in economics relies on identification strategies—such as selection-on-observables (conditional independence), instrumental variables, difference-in-differences, regression discontinuity, and related designs—that are typically
justified under the no-interference or individualistic treatment response (ITR) assumption. Under ITR, each unit’s potential outcomes depend only on its own treatment status, so that standard identification formulas admit clear causal interpretations, most commonly the average treatment effect (ATE) or the average treatment effect on the treated (ATT). In many empirical environments, however, units interact, general equilibrium effects exist, and policies generate spillovers. In such settings, outcomes may depend on the treatment assignments of other units, violating ITR in ways that are often unobserved by the researcher.

What, then, do standard identification formulas recover when the ITR assumption fails? A prominent response in the literature reformulates the identification problem by modeling interference through exposure mappings, neighborhood restrictions, or cluster-level structures in order to define and identify mean direct and indirect (spillover) effects. This approach yields well-defined causal estimands under assumptions about the form of interference and the joint distribution of potential outcomes and treatment, and it often requires detailed information about the underlying interaction structure. Early contributions focused primarily on experimental settings (see, for example, \citealt{halloran1995causal,hudgens2008toward,tchetgen2012causal, manski2013identification,aronow2017estimating}). More recent work extends this framework to a wide range of quasi-experimental and observational designs, including difference-in-differences \citep{clarke2017estimating,butts2021difference,xu2023difference}, synthetic control \citep{cao2019estimation}, instrumental variables \citep{sobel2006randomized,vazquez2023causal}, regression discontinuity designs \citep{aronow2017regression,auerbach2024regression,torrione2024regression}, and selection-on-observables approaches \citep{forastiere2021identification}.

By contrast, much applied work continues to employ standard estimators developed under the ITR framework while remaining agnostic about the presence or form of interference, even in settings where spillovers are plausible, largely due to limited information about the underlying interference structure.\footnote{Table \ref{tab:interference-literature} in Section~\ref{table of papers} of the Supplementary Material provides an illustrative, though non-exhaustive list of applied econometric studies that follow this practice.} Despite its prevalence, there is little systematic and comprehensive analysis of the causal content of ITR-based identification formulas when interference is present but unmodelled. 

This paper fills that gap by characterizing what conventional ITR-based identification formulas recover when outcomes may exhibit arbitrary interference. Our theoretical results yield three main conclusions. First, allowing outcomes to depend on the treatment of other units requires extending the identifying assumptions underlying standard designs, such as unconfoundedness, instrument exogeneity, or parallel trends, beyond their ITR formulations, in line with the interference literature discussed above.

Second, these extensions alone are not sufficient to ensure causal interpretability. Even when interference in potential outcomes is explicitly modeled, identification formulas derived under the ITR framework retain a causal interpretation only under additional restrictions that limit dependence in treatment assignment across units in the population. In the absence of such restrictions, standard ITR-based identification formulas generally combine meaningful causal effects with bias arising from systematic differences in how others’ treatment assignments affect a unit's own treatment values, and vice versa.

Third, the restrictions limiting cross-unit dependence in treatment assignment are not testable from the observed data. To address this concern, we develop a novel sensitivity analysis procedure within the selection-on-observables framework. The procedure evaluates the robustness of statistical inference---in terms of the decision from a randomization test---to violations of independent treatment assignment conditional on covariates by allowing for departures from treatment independence across units.
In particular, when interference is considered plausible, but information about its underlying structure is unavailable, the procedure quantifies the degree of dependence in treatment assignment required to overturn a rejection of a proposed sharp null. In this way, it provides a measure of how sensitive an empirical conclusion is to assignment dependence, in the spirit of the sensitivity analysis framework of \citet{rosenbaum2002observational}.
To facilitate implementation, we also provide an easy-to-use \texttt{R} package, \texttt{caisensitivity}, which implements the proposed sensitivity analysis.\footnote{The \texttt{R} package is available upon request.}


To formalize ideas, we adopt a finite-population potential-outcomes framework in which each unit’s outcome may depend arbitrarily on the full vector of treatment assignments. We study the identification formulas underlying common non-experimental designs that, under ITR, recover the ATE or ATT. We show that, under appropriate assumptions, including extensions of ITR-based restrictions on the joint distribution of treatments and potential outcomes and, importantly, restrictions on the dependence structure of treatment assignments, these formulas identify well-defined causal parameters. Thus, when interference is present, the causal object targeted by ITR-based identification formulas is not lost but instead transformed into a well-defined estimand that remains meaningful under appropriate conditions.

The remainder of the paper is organized as follows. Section \ref{sec:framework} introduces the finite-population framework with arbitrary interference and elaborates on the motivation of the paper.  Section \ref{sec:mapping} examines the interpretability of ITR-based identification formulas in standard non-experimental designs---i.e., selection-on-observable, instrumental variables, difference-in-differences, and regression discontinuity designs. Section \ref{robustness} introduces a sensitivity analysis model that quantifies the extent to which statistical inference relies on the restriction imposed on the treatment assignment mechanism, namely, the limitation on dependence across units’ treatment assignments, required for the interpretability of ITR-based identification formulas under interference. We complement the theoretical analysis by assessing the finite-sample performance of the proposed sensitivity procedure in a simulation study and an empirical application.
Section \ref{sec:montecarlo} discusses a Monte Carlo study designed to investigate the bias of ITR-based identification formulas when interference is present. The results corroborate our theoretical findings. We conclude in Section \ref{sec: conclusion}.
All Proofs are collected in Section~\ref{proofs} of the Supplementary Material.

\section{Framework and Motivation}\label{sec:framework}
\subsection{Setup}
We consider a finite population of $N$ units indexed by $i\in [N]:=\{1,\dots, N\}$. Let $D_i\in\{0,1\}$ denote treatment assignment of unit $i$, and $\mathbf{D}=(D_1, \dots, D_N)\in\{0,1\}^N$ denote the vector of treatment assignments of the population. For each unit $i$, potential outcomes are indexed by the full treatment assignment vector, allowing for arbitrary interference across units.
Potential outcomes are modeled as stochastic, with unit-level randomness captured by a latent random variable $\epsilon_i$ that is common across treatment states. Thus, for any realization of $\mathbf{D}$, denoted as $\mathbf d=(d_1, \ldots,d_N)\in\{0,1\}^N$,  the potential outcome is of the general form:
$Y_i(\mathbf d)=f(\mathbf d, \epsilon_i), \,\,\, \text{with the same latent random variable } 
\epsilon_i\,\,\text{for all  } \mathbf d\in\{0,1\}^N,
$ where $f(\cdot)$ is some known function.
Related formulations appear in \cite{leung2020treatment} and \cite{xu2023difference}. Under no interference, \citet[p.~58]{angrist2009mostly} present a comparable latent structure for potential outcomes.
Formally, we say there is arbitrary interference if the following assumption holds.

\begin{assumption}[Arbitrary Interference]
\label{ass:arbitrary-interference}
For each unit $i\in[N]$ and each treatment assignment vector 
$\mathbf d \in \{0,1\}^N$, there exists a potential outcome $Y_i(\mathbf d)$. 
\end{assumption}
Assumption \ref{ass:arbitrary-interference} places no restrictions on the form or scope of interference. In particular, it does not require partial interference, neighborhood interference, or a known exposure mapping as in \cite{aronow2017estimating}. The identity of units whose treatments affect the outcome of unit \(i\), as well as the manner in which such effects operate, are left unrestricted. Indeed, the ITR assumption is a special case of arbitrary interference, where the outcome of unit $i$ depends solely on her treatment status, i.e., for each unit $i\in[N]$ and her treatment condition $d_i\in\{0,1\}$, there exists a potential outcome $Y_i(d_i)$. To simplify notation, the subscript 
$i$ is suppressed, and we write $d_i=d$  whenever no confusion arises.

Moreover, as emphasized in the causal inference literature (e.g., see  \citealt{rubin1986comment}), the foregoing potential outcome under arbitrary interference is well defined only if the following assumption holds:
\begin{assumption}[Consistency]\label{ass:consistency} For all $i\in[N],$ and $\mathbf{d}\in \{0,1\}^{N}$
 $$Y_i = Y_i(\mathbf{d})\quad  \text{if}\quad \mathbf{D}=\mathbf{d},$$
 where $Y_i\in\mathcal{Y}\subseteq \mathbbm{R}$ denotes the observed outcome.
\end{assumption}
Assumption \ref{ass:consistency} asserts that the treatment is defined solely by the realized assignment vector, not by the mechanism that generated it. Consequently, alternative assignment procedures that produce the same treatment vector $\mathbf d$ do not correspond to distinct treatments. 

Let $\mathbf{D}_{-i}=(D_1,\ldots, D_{i-1}, D_{i+1},\ldots, D_N)\in\{0,1\}^{N-1}$ denote the $(N-1)$-dimensional random vector constructed by deleting
the $i$th element from $\mathbf{D}.$ Throughout the paper, we refer $\mathbf{D}_{-i}$ as the \textit{treatment of others}. The corresponding realizations of $\mathbf{D}_{-i}$ are denoted as  $\mathbf{d}_{-i}$. Consequently, potential outcome $Y_i(\mathbf{d})$ can then be written as $Y_i(d_i,\mathbf{d}_{-i})=Y_i(d,\mathbf{d}_{-i})$, for $d_i=d\in\{0,1\}$.

We define the random \textit{individual direct effect (IDE)}\footnote{The IDE parameter was first discussed in \cite{halloran1995causal}. Moreover, using a framework where potential outcomes are non-stochastic, \cite{savje2021average} refers to the IDE as \textit{assignment-conditional unit-level treatment effect}. However, we use the term individual direct effect at $\mathbf{d}_{-i}$ to indicate that it measures the direct effect for unit $i$ with the treatment vector of the other units in the population held fixed at $\mathbf{d}_{-i}$.} at $\mathbf{d}_{-i}$ as
\[
\Delta_i(\mathbf{d}_{-i}) := Y_i(1,\mathbf{d}_{-i}) - Y_i(0,\mathbf{d}_{-i}).
\] 
These effects compare treatment and control for unit $i$, holding the treatment status of all other units fixed at $\mathbf{d}_{-i}$. 
Marginalizing over the joint distribution of $\mathbf D_{-i}$ and $\epsilon_i$, we obtain the average direct effect (ADE):
\[
\mathbbm{E}[\Delta_i(\mathbf{D}_{-i})] = \sum_{\mathbf d_{-i}}
\mathbbm{E}[\Delta_i(\mathbf d_{-i})]\cdot
\Pr(\mathbf D_{-i}=\mathbf d_{-i})= \mathbbm{E}[Y_i(1,\mathbf{D}_{-i}) - Y_i(0,\mathbf{D}_{-i})],
\]
where the expectation in the first equality is taken with respect to the distribution of the latent random variable, $\epsilon_i$, and the expectation in the second equality is with respect to the joint distribution of $\epsilon_i$ and $\mathbf{D}_{-i}$. Notice that, we write $\sum_{\mathbf d_{-i}}=\sum_{\mathbf d_{-i}\in\{0,1\}^{N-1}}$ to simplify notation. It is worth noting that this ADE estimand is analogous to the \textit{expected average treatment effect} defined in \cite{savje2021average}, which they formulated in a framework that treats potential outcomes as fixed rather than stochastic.

We let $W_i\in\mathcal W\subseteq \mathbbm{R}^p$ denote a vector of random pretreatment variables or covariates for unit $i$.\footnote{The results in Section \ref{sec:mapping} are presented using notation that suggests covariates are discrete. This is a notational simplification, as our framework accommodates both continuous and discrete covariates.} In empirical studies that invoke the ITR assumption, the researcher observes individual-level data $(Y_i, W_i, D_i)$, jointly distributed according to a law $P$. Under ITR, the identification formulas associated with common observational designs can be written generally as
\begin{equation}
\Phi_{\mathrm{ITR}}(P)
:=
\mathbbm{E}\!\left[\psi(Y_i,W_i,D_i;P)\right],
\label{eq:sutva-functional}
\end{equation}
where $\psi(\cdot;P)$ is a design-specific functional of the observed-data
distribution $P$. For example, in designs where identification hinges on
conditional independence of potential outcomes given covariates,
$ \psi(Y_i,W_i,D_i;P) = \mathbbm{E}[Y_i\mid D_i=1,W_i] -\mathbbm{E}[Y_i\mid D_i=0,W_i].$ 

\subsection{Motivation}
In this section, we elaborate on this paper's motivation beyond the discussion in the introduction. To clarify our contribution, we also situate our framework in relation to two studies.

As noted in the introduction, many empirical studies invoke the ITR assumption and implement ITR-based estimators even in settings where outcome interference is plausibly present. 
We argue that this practice largely reflects limited information on interaction structures within the population, since unit-level data on social or economic linkages are often difficult or infeasible to obtain (see, for example, \citealt{colpitts2002targeting}).
In the following example, we revisit the influential study of \citet{lalonde1986evaluating} to illustrate that it is not uncommon for empirical studies in economics to abstract from interference, even when such effects may be plausible.

\begin{exmp}[Impact of Job-Training  and Counselling Program]\label{eg::lalonde}
Using both experimental and non-experimental data, \cite{lalonde1986evaluating} analyzed the effects of the National Supported Work Demonstration (NSW) program for female and male participants separately. In this example, we focus on the non-experimental data that combine participants in the NSW program with a comparison group of non-participants and include post-program earnings in 1978 as the primary outcome. 

In the NSW program setting, outcome interference is plausible for two reasons. First, participation in the job training program may affect local labor-market conditions through displacement or congestion effects, in which treated individuals compete with non-treated individuals for similar jobs in the post-period. Second, information sharing,
referrals, and peer interactions among participants and non-participants can transmit treatment effects across individuals in the post-periods. These channels, consistent with the evidence documented by \citet{crepon2013labor} and \citet{gautier2018estimating} for French and Danish job training programs, respectively, suggest that an individual's employment earnings may depend on the treatment status of other participants.

However, in Robert LaLonde's 1986 paper and in subsequent papers that use the data (e.g., \citealt{dehejia1999causal,dehejia2005program}), the authors are agnostic about interference. Causal estimates under the ITR and the unconfoundedness assumption are typically reported. 
\end{exmp}

The foregoing example motivates two related questions: (i) What is the causal content of ITR-based estimators of existing studies when interference is present, but without information on the underlying structure of the way units interact? In other words, under what conditions are ITR-based causal identification formulas meaningful when arbitrary interference is modelled? (ii) Can the conditions be verified using the available data in the existing studies?



To address these questions, it is important to note that under ITR and consistency assumptions, the observed outcome 
$Y_i = Y_i(D_i,\mathbf D_{-i}) = Y_i(D_i)= Y_i(1)D_i + Y_i(0)(1-D_i),$
where $(Y_i(0), Y_i(1))$ denotes the canonical pair of potential outcomes under no interference (see, e.g., \citealt{imbens2015causal}). In this setting, the joint assignment of treatments across units is irrelevant for identification: conditional comparisons of outcomes by $D_i$ retain their causal interpretation regardless of the statistical relationship between $D_i$ and $\mathbf D_{-i}$.

By contrast, under arbitrary interference and consistency assumptions, the observed outcome becomes $Y_i=\sum_{\mathbf d_{-i}}\left(Y_i(0,\mathbf d_{-i})+\Delta_i(\mathbf d_{-i})D_i\right)\cdot \mathbbm{1}(\mathbf D_{-i}=\mathbf d_{-i})$.
 This representation shows that conditional comparisons of outcomes by $D_i$, which underlie standard ITR-based identification formulas, may no longer equal a meaningful causal effect. Instead, they average treatment effects across the distribution of others’ treatments. Consequently, the causal interpretation of ITR-based identification formulas depends on the treatment assignment mechanism, and in particular on the relationship between $D_i$ and $\mathbf D_{-i}$.

 We demonstrate in the following sections that the causal interpretation of ITR-based identification formulas under arbitrary interference requires restrictions on the treatment assignment mechanism that govern the relationship between $D_i$ and $\mathbf D_{-i}$. We therefore introduce and discuss the following assumption.

\begin{assumption}[Conditional Assignment Independence]\label{ass:CAI}
For all $i\in[N]$,
\vspace{-0.5cm}
\begin{align}
\mathbf D_{-i} \;\perp\!\!\!\perp\; D_i \mid W_i.
\label{eq:CAI}
\end{align}
\end{assumption}
\vspace{-0.5cm}
Assumption~\ref{ass:CAI} requires that, conditional on pretreatment variables $W_i$, a unit’s treatment assignment is independent of the treatment assignments of other units. In designs where the identifying formula does not condition on covariates, \eqref{eq:CAI} reduces to the stronger unconditional independence restriction $\mathbf D_{-i}\perp\!\!\!\perp D_i$.

Restrictions related to \eqref{eq:CAI} have been recognized by \citet{tchetgen2012causal} and \citet{forastiere2020identification} as key for ensuring that ITR-based estimators retain a causal interpretation in the presence of \textit{specific forms of interference}. These papers show that, when information about the structure of interference is leveraged, the ITR-based identification formulas in the \textit{selection-on-observables framework} can be interpreted as direct causal effects.

\citet{tchetgen2012causal} consider a partial interference setting in which units interact within groups but not across groups, so that each individual’s outcome may depend on the treatment assignments of other members of the same group. They show that when interference is present, the standard ITR-based inverse probability weighted (IPW) estimator of ATE is biased and does not target any meaningful estimand. Unbiasness is recovered under a restriction that eliminates within-group dependence in treatment assignment, namely that an individual’s treatment is independent of other group members’ treatments, conditional on observed covariates. Thus, the expectation of the estimator equals a direct effect. 

\citet{forastiere2020identification} study observational network settings in which interference is determined by a known network and summarized by an exposure mapping $g(\cdot)$ of neighbors’ treatment assignments. They show that, under unconfoundedness and the restriction
\vspace{-0.3cm}
\[
g(\mathbf D_{-i}) \perp\!\!\!\perp D_i \mid W_i,
\vspace{-0.3cm}
\]
the ITR-based functional in~\eqref{eq:sutva-functional}, with
$
\psi(Y_i, W_i, D_i; P)
= \mathbbm{E}[Y_i\mid D_i=1, W_i]
- \mathbbm{E}[Y_i\mid D_i=0, W_i],
$
identifies the ADE.\footnote{Note that in the notation of \citet{forastiere2020identification}, $\mathbf D_{-i}$ here denotes the vector of treatment assignments of unit $i$’s neighbors rather than that of the entire population excluding unit $i$'s treatment.}  

The restrictions that deliver interpretability in both papers are related to condition~\eqref{eq:CAI}. When group membership of a unit in the partial interference setting or network links in the network setting are treated as fixed or stochastic but independent of treatment, the independence conditions imposed in both papers to obtain interpretability are implied by the condition in \eqref{eq:CAI}. By contrast, when a unit's group assignment or network links are stochastic and depend on her treatment assignment, then there is no universal ordering between the restrictions required for interpretability in these papers and the Conditional Assignment Independence (CAI) condition in \eqref{eq:CAI}.

In practice, empirical studies that employ ITR-based estimators, abstracting away from interference even when it is plausible (see Example \ref{eg::lalonde}), do so largely because information on the underlying interaction structure---such as networks or group linkages---is not observed. In such settings, although one could impose conditions that rely on the interaction structure to recover interpretability, as proposed by \citet{tchetgen2012causal} and \citet{forastiere2020identification}, these conditions cannot be empirically verified. For example, the data used in \citet{lalonde1986evaluating} contain no information on networks or group membership (see Example \ref{eg::lalonde}). As a result, the conditions proposed in the related papers to interpret ITR-based estimators as direct causal effects in the presence of interference cannot be empirically assessed using the LaLonde data.

This motivates an investigation into the causal content of standard ITR-based functionals in settings where the interaction structure is unobserved. Our approach, therefore, characterizes interpretability under conditions that do not rely on observed networks or group structure, aligning the identifying assumptions with the informational environment in which ITR-based estimators are typically applied.  Relative to \citet{tchetgen2012causal} and \citet{forastiere2020identification}, our analysis also considers designs beyond selection on observables. Moreover, because the condition we propose for recovering interpretability (CAI condition) does not depend on the unobserved interaction structure, we also introduce a sensitivity analysis procedure to empirically assess it.

We conclude this section with the following remark on the plausibility of the CAI condition in settings with interference.
\begin{remark}(On the Plausibility of CAI under Interference).
We acknowledge that, in practice, the CAI and its related conditions may be challenging to maintain precisely in settings with interference. The mechanisms that generate outcome interference, such as labor market competition, information diffusion through social networks, or geographic spillovers, are often the same mechanisms that induce dependence in treatment take-up. For example, in a job-training program operating in a local labor market, peer referrals and employer connections may simultaneously cause treated individuals to affect the employment prospects of untreated neighbors (outcome interference) and influence those same neighbors' program participation decisions (assignment dependence). In such settings, CAI is an idealization rather than a literal description of the data-generating process, and the sensitivity analysis developed in Section \ref{robustness} is precisely designed to quantify how much such dependence would need to be present to overturn the empirical conclusions.

That said, there are observational settings where outcome interference is plausible, yet treatment assignment may be conditionally independent across units. A canonical example arises in programs where eligibility and participation are determined by individual-level administrative rules and idiosyncratic take-up decisions.
Consider a job-training or income-support program in which eligibility depends on predetermined covariates such as lagged income, age, or household composition. Conditional on these covariates, participation decisions are influenced by idiosyncratic factors such as information, preferences, or application costs. When the program is not subject to binding local capacity constraints---or when any constraints operate at a sufficiently aggregate level relative to the unit of analysis---treatment assignments can be well approximated as independent across individuals conditional on their covariates.
At the same time, individuals may interact in local labor markets or social networks, so that participation by some individuals affects the outcomes of others through job competition, referrals, or information diffusion, thereby generating outcome interference. In this setting, CAI holds as a property of the assignment mechanism conditional on observed covariates, even though interference in outcomes may be substantial.

\end{remark}



\section{Standard Identification Approaches}\label{sec:mapping}
In this section, we examine whether identification formulas derived under the ITR assumption retain a causal interpretation in the presence of arbitrary interference. For each observational design, we summarize the identifying assumptions imposed under ITR, state the corresponding identification functional, and characterize the object---potentially involving both observable and unobservable components---that is recovered in the presence of interference, together with the additional conditions required for interpretation. Across designs, we highlight the central role of the CAI condition (Assumption~\ref{ass:CAI}) and related restrictions.
 
\subsection{Selection on Observables}\label{sec:identification}

This section focuses on the selection-on-observables design, one of the most widely used identification strategies under ITR that relies on conditioning on observed covariates.  Under ITR and consistency assumptions, \cite{rosenbaum1983central} show that causal identification primarily relies on the classical \textit{strong ignorability assumption} written as: 
\begin{align}
&\{Y_i(1),Y_i(0)\} \perp\!\!\!\perp D_i \mid W_i,\,\,\,\, \quad \quad \quad \quad \quad \quad \quad (\text{Unconfoundedness}) \label{eq:so}\\
& 0<\Pr(D_i=1|W_i)<1\,\, \text{a.s.}, \,\,\quad \quad \quad \quad \quad \quad \, (\text{Overlap}) \label{eq:overlap}
\end{align}
 where \eqref{eq:so} means that, conditional on observable covariates, the treatment assignment is as good as random. The overlap condition \eqref{eq:overlap} ensures that each unit has a nonzero probability of receiving either treatment arm, conditional on the pretreatment variables. 

The ITR, consistency, unconfoundedness, and overlap conditions (e.g., see \citealt{imbens2015causal}) ensure that
\vspace{-0.5cm}
\begin{align}
\Phi_{\mathrm{ITR}}(P)
:=&\mathbbm{E}\!\left[
\mathbbm{E}[Y_i \mid D_i=1,\, W_i]-\mathbbm{E}[Y_i \mid D_i=0,\, W_i]\right]
= \mathbbm{E}[Y_i(1) - Y_i(0)].\label{ro}
\end{align}

Several estimators developed under ITR---including outcome regression, inverse probability weighting (IPW), propensity-score-reduced, and augmented inverse probability weighting (AIPW) estimators---are designed to consistently estimate the functional in \eqref{ro}.

Under arbitrary interference, an analog of the unconfoundedness and overlap conditions in \eqref{eq:so} and \eqref{eq:overlap} is provided in the following assumption.
\begin{assumption} \label{ass: uncon}
For all $i\in[N]$ and $\mathbf{d}_{-i}\in \{0,1\}^{N-1}$
\vspace{-0.5cm}
\begin{align}
\{Y_i(1, \mathbf{d}_{-i}),Y_i(0, \mathbf{d}_{-i})\} \perp\!\!\!\perp (D_i,\mathbf{D}_{-i}) \mid W_i,\, \label{eq:uso}
\end{align}
\vspace{-1cm}
\vspace{-1cm}
\begin{align}
    0<\Pr(D_i=1|W_i)<1\,\, \text{a.s.}. \label{eq:overlap 2}
\end{align}
\end{assumption}

Condition \eqref{eq:uso} states that for all \(i\) and all \(\mathbf d\in\{0,1\}^N\),
$$\Pr(\mathbf D=\mathbf d \,\big|\, W_i,\{Y_i(d_i,\mathbf d_{-i}) : \ d_i\in\{0,1\},\ \mathbf d_{-i}\in\{0,1\}^{N-1}\})
=
\Pr(\mathbf D=\mathbf d \mid W_i).$$
Thus, conditional on unit \(i\)'s observed covariates \(W_i\), the distribution of the treatment assignment vector is invariant to unit \(i\)'s potential outcomes.
Imposed for all \(i\), Condition \eqref{eq:uso} excludes assignment rules for which, after conditioning on \(W_i\), the probability of any assignment vector \(\mathbf d\) varies with unit \(i\)'s potential outcomes. In this sense, it extends the usual unconfoundedness idea to settings with interference: once \(W_i\) is held fixed, neither unit \(i\)'s own treatment nor the treatment assignments of other units depend on unit \(i\)'s potential outcomes.

The relevant comparison with the classical no-interference condition
$(Y_i(0),Y_i(1))\perp\!\!\!\perp D_i\mid W_i$
lies in the assignment object with respect to which independence is imposed. Under no interference, it is sufficient to require that own treatment \(D_i\) be independent of unit \(i\)'s potential outcomes given $W_i$. Under interference, however, unit \(i\)'s realized outcome depends not only on \(D_i\) but also on \(\mathbf D_{-i}\). Accordingly, Condition \eqref{eq:uso} requires independence between unit \(i\)'s potential outcomes and the entire assignment vector \((D_i,\mathbf D_{-i})\), conditional on \(W_i\).

An important feature of Condition \eqref{eq:uso} of Assumption \ref{ass: uncon} is that conditioning is on \(W_i\), not on the full covariate matrix \(\mathbf W\). The restriction is therefore local in nature. It does not impose a restriction on the general assignment mechanism  that
\begin{align}\label{popular unconf}
    \Pr(\mathbf D=\mathbf d\mid \mathbf W, \boldsymbol{\mathcal{Y}})
=
\Pr(\mathbf D=\mathbf d\mid \mathbf W),
\end{align}
where $\boldsymbol{\mathcal{Y}}:=\{Y_i(d_i,\mathbf d_{-i}) : i\in [N],\ d_i\in\{0,1\},\ \mathbf d_{-i}\in\{0,1\}^{N-1}\}$,
nor does it imply that treatment assignments are independent. The joint distribution of \(\mathbf D\) may still depend on $\mathbf W$ and may exhibit arbitrary cross-unit dependence. The condition does not rule out dependence in assignments, but instead it rules out dependence of the assignment distribution on unit \(i\)'s potential outcomes once \(W_i\) is fixed.

It is worth noting that in \citet{tchetgen2012causal} and \citet{leung2022unconfoundedness}, variants of \eqref{popular unconf} are imposed to identify causal estimands under different forms of interference. However, for our objective of reinterpreting the ITR-based identification formula, the covariates in \eqref{eq:uso} need include only the conditioning variables under no interference, which typically excludes covariates from other units in the population \citep{forastiere2021identification}.

In sum, Condition \eqref{eq:uso} of Assumption \ref{ass: uncon} excludes selection on potential outcomes at the unit level while remaining agnostic about the broader dependence structure of the treatment assignment mechanism. It is therefore appropriately viewed as an interference analog of unconfoundedness stated in terms of local conditioning on \(W_i\).

 
 For completeness, we restate the overlap condition in \eqref{eq:overlap 2}.

Next, we examine the interpretability of the identification formula in  \eqref{ro} when outcomes may depend on the entire treatment assignment vector, allowing for arbitrary interference across units.

\begin{theorem}\label{bias-so}
Under arbitrary interference with consistent outcomes,
\begin{itemize}
\item[(i)] If Assumption~\ref{ass:CAI} fails and Assumption~\ref{ass: uncon} holds, then
\begin{align}
\Phi_{\mathrm{ITR}}(P)
&=
\mathbbm{E}\Big[
\mathbbm{E}\!\left[
\Delta_i(\mathbf{D}_{-i}) \mid D_i=1, W_i
\right]
\Big]
+
B_{ADE}^1
\label{eq:decomp-adtc-x}
\\
&=
\mathbbm{E}\Big[
\mathbbm{E}\!\left[
\Delta_i(\mathbf{D}_{-i}) \mid D_i=0, W_i
\right]
\Big]
+
B_{ADE}^0
\label{eq:decomp-adtt-x}
\\
&=
\mathbbm{E}\Big[
\mathbbm{E}\!\left[
\Delta_i(\mathbf{D}_{-i}) \mid W_i
\right]
\Big]
+
B_{ADE},
\label{eq:decomp-ade-x}
\end{align}
where the bias terms are given by
\begin{align}
B_{ADE}^0
:=
\sum_{w\in\mathcal W}
\sum_{\mathbf d_{-i}}
\bar Y(1,\mathbf d_{-i};w)
\Big(
P(\mathbf d_{-i};1, w)-P(\mathbf d_{-i};0, w)
\Big)
\Pr(W_i=w),
\end{align}
\begin{align}
B_{ADE}^1
:=
\sum_{w\in\mathcal W}
\sum_{\mathbf d_{-i}}
\bar Y(0,\mathbf d_{-i};w)
\Big(
P(\mathbf d_{-i};0, w)-P(\mathbf d_{-i};1, w)
\Big)
\Pr(W_i=w),
\end{align}
and
\small
\begin{align}
B_{ADE}
&:=
\sum_{w\in\mathcal W}\Pr(W_i=w)
\sum_{\mathbf d_{-i}}
\Big(
\bar Y(1,\mathbf d_{-i};w)
-
\bar Y(1,\mathbf d'_{-i};w)
\Big)
\Big(
P(\mathbf d_{-i};1, w)
-
P(\mathbf d_{-i};w)
\Big)
\notag\\
&\
-
\sum_{w\in\mathcal W}\Pr(W_i=w)
\sum_{\mathbf d_{-i}}
\Big(
\bar Y(0,\mathbf d_{-i};w)
-
\bar Y(0,\mathbf d'_{-i};w)
\Big)
\Big(
P(\mathbf d_{-i};0, w)
-
P(\mathbf d_{-i};w)
\Big),
\end{align}
\normalsize
with
$\bar{Y}_i(d,\mathbf{d}_{-i};w):= \mathbbm{E}\big[ Y_i(d,\mathbf{d}_{-i}) \mid W_i=w \big]$,
$
P(\mathbf d_{-i};d, w)
:=
\Pr(\mathbf D_{-i}=\mathbf d_{-i}\mid D_i=d, W_i=w),
\quad
P(\mathbf d_{-i};w)
:=
\Pr(\mathbf D_{-i}=\mathbf d_{-i}\mid W_i=w),
$
and $\mathbf{d}'_{-i}$ is any other realization of $\mathbf{D}_{-i}$ different from $\mathbf{d}_{-i}$.

\item[(ii)] If Assumptions~\ref{ass:CAI} and~\ref{ass: uncon} both hold, then
\begin{equation}
\Phi_{\mathrm{ITR}}(P)
=
\mathbbm{E}\Big[
\mathbbm{E}\!\left[
\Delta_i(\mathbf{D}_{-i}) \mid W_i
\right]
\Big].
\label{eq:ade-x}
\end{equation}
\end{itemize}
\end{theorem}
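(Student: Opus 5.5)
The plan is to compute each conditional mean $\mathbbm{E}[Y_i\mid D_i=d,W_i=w]$ explicitly as a mixture over the treatment of others. All three decompositions in part (i) then follow by adding and subtracting suitable terms, and part (ii) follows because the mixing weights coincide under Assumption~\ref{ass:CAI}. Overlap \eqref{eq:overlap 2} guarantees that the conditioning events $\{D_i=d,W_i=w\}$ have positive probability, so all conditional expectations below are well defined. I will write the argument for discrete $W_i$, as the statement does; the continuous case replaces sums over $w$ by integrals.

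\textbf{Step 1: conditional means as mixtures.} By the law of total probability over $\mathbf D_{-i}$ and Assumption~\ref{ass:consistency},
\[
\mathbbm{E}[Y_i\mid D_i=d,W_i=w]=\sum_{\mathbf d_{-i}}\mathbbm{E}[Y_i(d,\mathbf d_{-i})\mid D_i=d,\mathbf D_{-i}=\mathbf d_{-i},W_i=w]\,P(\mathbf d_{-i};d,w).
\]
Condition \eqref{eq:uso} says the potential outcomes are independent of the whole vector $(D_i,\mathbf D_{-i})$ given $W_i$. Hence the inner expectation equals $\bar Y(d,\mathbf d_{-i};w)$, and
\[
\mathbbm{E}[Y_i\mid D_i=d,W_i=w]=\sum_{\mathbf d_{-i}}\bar Y(d,\mathbf d_{-i};w)P(\mathbf d_{-i};d,w).
\]
The same argument, applied to $\Delta_i(\mathbf D_{-i})$ instead of $Y_i$, gives
\[
\mathbbm{E}[\Delta_i(\mathbf D_{-i})\mid D_i=d,W_i=w]=\sum_{\mathbf d_{-i}}\big(\bar Y(1,\mathbf d_{-i};w)-\bar Y(0,\mathbf d_{-i};w)\big)P(\mathbf d_{-i};d,w).
\]
Conditioning only on $W_i=w$ instead gives the analogous expression with weights $P(\mathbf d_{-i};w)$. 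This step, which uses the joint (rather than own-treatment-only) independence in \eqref{eq:uso}, is the only genuinely substantive point.

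\textbf{Step 2: add and subtract.} Write the inner difference in $\Phi_{\mathrm{ITR}}(P)$ as
\[
\sum_{\mathbf d_{-i}}\bar Y(1,\mathbf d_{-i};w)P(\mathbf d_{-i};1,w)-\sum_{\mathbf d_{-i}}\bar Y(0,\mathbf d_{-i};w)P(\mathbf d_{-i};0,w).
\]
\begin{itemize}
\item For \eqref{eq:decomp-adtc-x}, add and subtract $\sum_{\mathbf d_{-i}}\bar Y(0,\mathbf d_{-i};w)P(\mathbf d_{-i};1,w)$. This yields $\mathbbm{E}[\Delta_i\mid D_i=1,W_i=w]$ plus $\sum_{\mathbf d_{-i}}\bar Y(0,\mathbf d_{-i};w)\big(P(\mathbf d_{-i};1,w)-P(\mathbf d_{-i};0,w)\big)$.
\item For \eqref{eq:decomp-adtt-x}, symmetrically add and subtract $\sum_{\mathbf d_{-i}}\bar Y(1,\mathbf d_{-i};w)P(\mathbf d_{-i};0,w)$.
\item For \eqref{eq:decomp-ade-x}, add and subtract both $\sum_{\mathbf d_{-i}}\bar Y(1,\mathbf d_{-i};w)P(\mathbf d_{-i};w)$ and $\sum_{\mathbf d_{-i}}\bar Y(0,\mathbf d_{-i};w)P(\mathbf d_{-i};w)$. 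This yields $\mathbbm{E}[\Delta_i\mid W_i=w]$ plus
\[
\sum_{\mathbf d_{-i}}\bar Y(1,\mathbf d_{-i};w)\big(P(\mathbf d_{-i};1,w)-P(\mathbf d_{-i};w)\big)-\sum_{\mathbf d_{-i}}\bar Y(0,\mathbf d_{-i};w)\big(P(\mathbf d_{-i};0,w)-P(\mathbf d_{-i};w)\big).
\]
\end{itemize}
Averaging each identity over $\Pr(W_i=w)$ gives the three decompositions.

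\textbf{Step 3: matching the stated bias terms.} The terms obtained in Step 2 must be checked against the displayed bias formulas, and the main obstacle is bookkeeping here. The remainder from the first decomposition is $\sum_{\mathbf d_{-i}}\bar Y(0,\mathbf d_{-i};w)\big(P(\mathbf d_{-i};1,w)-P(\mathbf d_{-i};0,w)\big)$. This has the opposite sign from the displayed $B^1_{ADE}$, which carries the factor $P(\mathbf d_{-i};0,w)-P(\mathbf d_{-i};1,w)$. Likewise, the remainder from the second decomposition is $\sum_{\mathbf d_{-i}}\bar Y(1,\mathbf d_{-i};w)\big(P(\mathbf d_{-i};1,w)-P(\mathbf d_{-i};0,w)\big)$, which matches the displayed $B^0_{ADE}$. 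So the pairing of superscripts to decompositions needs to be read carefully, and I would state it explicitly (and flag the sign of $B^1_{ADE}$) when writing up.

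For $B_{ADE}$, the key fact is that $\sum_{\mathbf d_{-i}}\big(P(\mathbf d_{-i};d,w)-P(\mathbf d_{-i};w)\big)=0$, since both sets of weights sum to one. Therefore subtracting a reference value $\bar Y(d,\mathbf d'_{-i};w)$ does not change the sum, provided $\mathbf d'_{-i}$ is held fixed across the summation index. That produces the displayed centered form. I would make explicit that $\mathbf d'_{-i}$ is a fixed reference realization, since the statement's wording "any other realization" is ambiguous on this point.

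\textbf{Part (ii).} Under Assumption~\ref{ass:CAI}, $P(\mathbf d_{-i};1,w)=P(\mathbf d_{-i};0,w)=P(\mathbf d_{-i};w)$ for every $\mathbf d_{-i}$ and $w$. Every bias term then vanishes, and the three targets coincide, which gives \eqref{eq:ade-x}.
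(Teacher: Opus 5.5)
Your proposal is correct and takes essentially the same route as the paper's proof. You write each conditional mean as a mixture over $\mathbf D_{-i}$ using consistency and the joint unconfoundedness in \eqref{eq:uso}; you add and subtract cross terms to get the three decompositions; you recenter $B_{ADE}$ at a fixed reference $\mathbf d'_{-i}$ because the weight differences sum to zero; and you let Assumption~\ref{ass:CAI} equalize the weights for part (ii). Your sign flag is also right: the paper's own proof defines $B^1_{ADE}$ with the factor $P(\mathbf d_{-i};1,w)-P(\mathbf d_{-i};0,w)$, exactly as you derive, so the displayed formula for $B^1_{ADE}$ in the statement has a sign typo.
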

Theorem~\ref{bias-so} clarifies the interpretation of the ITR-based identification functional under the selection-on-observable design under arbitrary interference by distinguishing cases in which CAI fails from those in which it holds. Part (i) describes the situation in which CAI is violated. In this case, treated and untreated units with the same covariates face systematically different distributions of others’ treatment assignments,
so that $\Phi_{\mathrm{ITR}}(P)$ no longer reduces to a well-defined estimand. Instead, the functional combines direct effects with the effects of simultaneously changing the assignment mechanism of other units, generating bias.

Equations~\eqref{eq:decomp-adtc-x} and \eqref{eq:decomp-adtt-x} provide two complementary representations $\Phi_{\mathrm{ITR}}(P)$ under interference. In \eqref{eq:decomp-adtc-x}, $\Phi_{\mathrm{ITR}}(P)$ equals the population average direct effect---where the averaging is taken with respect to the distribution of other units, given the untreated state---plus a bias term reflecting the discrepancy between the conditional distribution of other units’ treatments given the treatment state. Analogously, \eqref{eq:decomp-adtt-x} expresses  $\Phi_{\mathrm{ITR}}(P)$ as the population average direct effect---where the averaging is taken with respect to the distribution of other units’ treatments given the treated state---with a corresponding bias term reflecting the mismatch in the assignment distribution of others in the population when a unit is treated and untreated 

Equation~\eqref{eq:decomp-ade-x} decomposes $\Phi_{\mathrm{ITR}}(P)$ into the ADE plus an overall bias term. The magnitude of this bias depends on the extent to which CAI fails---i.e., how strongly one's own treatment status predicts treatment of others---and on the
strength of interference, as reflected in the responsiveness of conditional mean outcomes to changes in others’ treatment assignments. When the distribution of $\mathbf{D}_{-i}$ differ slightly across treatment states of unit $i$, $D_i$, or interference is weak, the bias is small; when both deviations from CAI and interference are large, $\Phi_{\mathrm{ITR}}(P)$ can be substantially different from the ADE.

Part (ii) of Theorem~\ref{bias-so} shows the case in which the aforementioned biases vanish. When CAI holds, treatment assignment of unit $i$ conveys no information about the treatment of other units in the population, conditional on covariates. Thus, treated and untreated units with the same covariates face the same distribution of others' assignments. In this case, the bias terms disappear and $\Phi_{\mathrm{ITR}}(P)$ reduces to the ADE.  Consequently, the selection-on-observable ITR-based identification formula retains a causal interpretation under arbitrary interference.

Overall, Theorem~\ref{bias-so} shows that CAI is the key condition separating settings in which the selection-on-observable ITR-based identification formula recovers a meaningful ADE from those in which it does not. 

\begin{remark}
In the absence of covariates, Assumption~\ref{ass: uncon} reduces to independence between the population treatment vector and all potential outcomes, $\{Y_i(1,\mathbf d_{-i}), Y_i(0,\mathbf d_{-i})\}\perp\!\!\!\perp (D_i,\mathbf D_{-i}),$ as in a randomized experiment. Accordingly, Theorem~\ref{bias-so} applies directly to experimental settings: when interference is present, the dependence structure of the assignment mechanism becomes central to the interpretability of ITR-based functionals even in randomized experiments. For instance, under a \textit{Bernoulli design}, where treatments are assigned independently, we can show that the difference-in-means functional equals the ADE. By contrast, under \textit{complete randomization}, no meaningful estimand can be recovered from the difference-in-means functional in the presence of interference. \qedsymbol
\end{remark}

\subsection{Selection on Unobservables}\label{s-o-u}
This section discusses observational designs that do not primarily rely on the strong ingnorability assumption for identification. In contrast to the selection on observables setting---where we obtain the common identification formula in \eqref{ro}  which can be estimated by a relatively unified class of estimators---there is no single identification formula and class of estimators for these designs. Instead, these designs differ substantially in their identifying assumptions and causal inferential targets. In section \ref{sec:IV}, we study the instrumental variables (IV) design. Then, in section \ref{sec:rdd} we study the regression discontinuity design, and in section \ref{sec:did} we study the difference-in-differences design.

\subsubsection{Instrumental Variables}\label{sec:IV}
We study identification under interference using instrumental variables. To highlight the key conceptual issues, we focus on the simplest setting with a single binary instrument, as in \cite{angrist1996identification}.
Thus, we let $Z_i$ represent a binary instrument. Under ITR, the identification formula, which is the population analogue of the Wald estimator \citep{wald1940fitting}, is
\vspace{-0.5cm}
\begin{align}\label{iv wald}
   \Phi_{\mathrm{ITR}}(P)
:=
\frac{\mathbbm{E}[Y_i\mid Z_i=1]-\mathbbm{E}[Y_i\mid Z_i=0]}
{\mathbbm{E}[D_i\mid Z_i=1]-\mathbbm{E}[D_i\mid Z_i=0]}.
\end{align}
We can re-write \eqref{iv wald} in terms of \eqref{eq:sutva-functional} with $W_i=Z_i$ such that $\psi(Y_i,Z_i, D_i;P)=(\mathbbm{E}[Y_i\mid Z_i=1]-\mathbbm{E}[Y_i\mid Z_i=0])/
(\mathbbm{E}[D_i\mid Z_i=1]-\mathbbm{E}[D_i\mid Z_i=0]).$
\cite{angrist1996identification} shows that \eqref{iv wald} equals the local average treatment effect (LATE) for compliers under the following conditions:
\vspace{-0.5cm}
\begin{align}
&(Y_i(d),D_i(z)) \perp\!\!\!\perp Z_i, \,\, \forall\,\, d,z \in \{0,1\}  \quad\text{(Indepedence)} \label{indep}\\
&D_i(1) \ge D_i(0)\,\, \text{almost surely}, \quad \,\,\,\quad\quad\text{(monotonicity)}\label{mono}\\
&Y_i(d,z) = Y_i(d)\quad \quad \quad\quad \quad \quad \quad \quad \quad\,\,\,\, \text{(exclusion)}, \label{excl} \,\,\, \text{and} \\
&D_i = D_i(Z_i)\quad \quad \quad \quad \quad \quad \quad \quad \quad \quad\,\,\,\,\,\text{(Treatment consistency)}, \label{Tconsistency}
\end{align}
where $D_i(z)$ denotes the random counterfactual treatment assignment for unit 
$i$ under instrument value $z$. Specifically, $D_i(z)=h(z, \nu_i)$, where $\nu_i$ is an \textit{independent and identically distributed (i.i.d)} latent random variable, and $h(\cdot)$ is some known function (See, e.g., \citealt{heckman1978dummy} and \citealt[p.~157]{angrist2009mostly} where $h(\cdot)$ is a single index indicator function).   Hence, the randomness of potential treatment is due to the latent variable $\nu,$ similar to the source of randomness of potential outcomes.

Under arbitrary interference, let $\mathbf Z=(Z_1,\ldots,Z_n)$ denote the vector of instrumental variables and write $\mathbf{Z}_{-i}$ for the subvector excluding unit $i$. For any realization $\mathbf z_{-i}$, let $\mathbf D_{-i}(\mathbf z_{-i}):=(D_1(z_1), \dots, D_{i-1}(z_{i-1}), D_{i+1}(z_{i+1}),\dots, D_{N}(z_N))$ denote the vector of potential treatments of all units other than $i$ when $\mathbf{Z}_{-i}=\mathbf z_{-i}$.  We impose the following assumption, which extends the standard instrumental variables conditions to settings with arbitrary interference in outcomes.

\begin{assumption}[IV conditions under Interference]\label{ass: IV conditions}
For all $i \in[ N],$
\vspace{-0.5cm}
\begin{align}
&D_i=D_i(Z_1, \dots, Z_N)=D_i(Z_i), \label{iv: no interference in treatment}\\
&(Y_i(d, \mathbf{d}_{-i}),D_i(z), \mathbf{D}_{-i}(\mathbf{z}_{-i})) \perp\!\!\!\perp (Z_i,\mathbf{Z}_{-i}),\,\, \forall\,\, d,z \in \{0,1\}\,\, \text{and}\,\,\, \mathbf{z}_{-i},\mathbf{d}_{-i}\in\{0,1\}^{N-1}, \,\label{iv indep}\\
&D_i(1) \ge D_i(0)\,\, \text{almost surely},\label{iv: mono}\\
&Y_i(d, \mathbf{d}_{-i},\mathbf{z}) = Y_i(d,\mathbf{d}_{-i}), \forall\,\, d\in\{0,1\},\mathbf{z} \in \{0,1\}^N\,\,\text{and}\,\, \mathbf{d}_{-i}\in\{0,1\}^{N-1}, \label{iv exclusion} \\
&\mathbf{D}_{-i}=\mathbf{D}_{-i}(\mathbf z_{-i}) \,\,\, \text{if}\,\,\,\mathbf{Z}_{-i}=\mathbf{z}_{-i}.\label{iv:consistency}
\end{align}
\end{assumption}
This assumption generalizes the independence and exclusion restrictions of \cite{angrist1996identification} to settings with arbitrary interference.  Condition \eqref{iv: no interference in treatment} imposes a no-interference restriction on treatment assignment, requiring unit $i$’s treatment to depend only on its own instrument. Thus, an individual's potential treatment is generated according to the same latent model as in the case of no interference.  The unconditional independence condition in \eqref{iv indep} requires the instrument to be as good as randomly assigned with respect to both potential outcomes and potential treatments. Since there is no interference in treatment (Condition \eqref{iv: no interference in treatment}), we do not need to modify the monotonicity assumption, and we restate it in  Condition \eqref{iv: mono}  for completeness.  The exclusion restriction in \eqref{iv exclusion} rules out any direct effect of the instrument on outcomes beyond its effect through own treatment. Condition \eqref{iv:consistency} is a consistency requirement for others’ treatments: when the vector of instruments for units other than $i$ equals $\mathbf z_{-i}$, the realized treatment vector $\mathbf D_{-i}$ coincides with the corresponding potential treatment vector $\mathbf D_{-i}(\mathbf z_{-i})$. 

Together, conditions \eqref{iv: no interference in treatment} and \eqref{iv: mono} imply that the population can be partitioned into three principal strata: \textit{compliers}, for whom $D_i(1) > D_i(0)$; \textit{always-takers}, for whom $D_i(1)=D_i(0)=1$; and \textit{never-takers}, for whom $D_i(1)=D_i(0)=0$. 

In this IV setting, since the treatment status of unit $i$ is a function of her instrument and an i.i.d. latent variable, we can obtain a CAI-type restriction on treatment assignments by restricting the dependence of instrumental variables across units.  
\begin{assumption}[Independence of Instruments]\label{Instrument indepen}
    For all units $i \in[ N]$,  $\mathbf{Z}_{-i} \perp\!\!\!\perp Z_i$
\end{assumption}
\noindent This assumption rules out cross-unit dependence in instrument assignment, so that each unit’s instrument is assigned independently of others. This is an analog of the unconditional version of the CAI assumption. 

Next, we study the interpretability of the Wald ratio in \eqref{iv wald} when arbitrary interference is present in the outcomes.

\begin{theorem}
\label{prop:iv}
Suppose outcomes exhibit arbitrary interference with consistent outcomes (Assumption \ref{ass:consistency} holds).

\begin{itemize}
\item[(i)] If Assumption~\ref{ass: IV conditions} holds and Assumption~\ref{Instrument indepen} fails then the Wald identification functional admits the decomposition
\[
\frac{\mathbbm{E}[Y_i\mid Z_i=1]-\mathbbm{E}[Y_i\mid Z_i=0]}
{\mathbbm{E}[D_i\mid Z_i=1]-\mathbbm{E}[D_i\mid Z_i=0]}
=
\mathrm{LADE}
+
\mathrm{Bias}_{\mathrm{IV}},
\]
where
\[
\mathrm{LADE}
:=
\mathbbm{E}\!\left[
\mathbbm{E}\!\left[
Y_i(1,\mathbf D_{-i})-Y_i(0,\mathbf D_{-i})
\mid \mathbf D_{-i},\,D_i(1) > D_i(0)
\right]
\mid D_i(1) > D_i(0)
\right],
\]
and the bias term equals
\begin{align*}
\mathrm{Bias}_{\mathrm{IV}}
:=
\frac{B_1-B_0}{\Pr(D_i(1) > D_i(0))},
\end{align*}
with
\begin{align*}
B_z
:=
\sum_{d\in\{0,1\}}
\sum_{\mathbf z_{-i}}
\sum_{\mathbf d_{-i}}
\bar{Y}\!\left(d,\mathbf d_{-i};\mathbf z_{-i}\right)\,
\Pr\!\big(\mathbf D_{-i}(\mathbf z_{-i})=\mathbf d_{-i}\mid D_i(z)=d\big)\,
\delta_z(\mathbf z_{-i})\,
\Pr\!\big(D_i(z)=d\big),
\end{align*}
where
\[
\bar{Y}\!\left(d,\mathbf d_{-i};\mathbf z_{-i}\right)
=
\mathbbm{E}\!\left[
Y_i(d,\mathbf d_{-i})
\mid D_i(z)=d,\ \mathbf D_{-i}(\mathbf z_{-i})=\mathbf d_{-i}
\right]
\]
and
\[
\delta_z(\mathbf z_{-i})
=
\Pr(\mathbf{Z}_{-i}=\mathbf z_{-i}\mid Z_i=z)
-
\Pr(\mathbf{Z}_{-i}=\mathbf z_{-i}).
\]

\item[(ii)]
If both Assumptions~\ref{ass: IV conditions} and \ref{Instrument indepen} hold, then
$\delta_z(\mathbf z_{-i})=0$ for all $z$ and $\mathbf z_{-i}$, and hence $\mathrm{Bias}_{\mathrm{IV}}=0$. Consequently, the Wald ratio identifies the local average direct effect:
\[
\frac{\mathbbm{E}[Y_i\mid Z_i=1]-\mathbbm{E}[Y_i\mid Z_i=0]}
{\mathbbm{E}[D_i\mid Z_i=1]-\mathbbm{E}[D_i\mid Z_i=0]}
=
\mathrm{LADE}.
\]
\normalsize
\end{itemize}
\end{theorem}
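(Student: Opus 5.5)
The strategy is to expand each conditional mean $\mathbbm{E}[Y_i\mid Z_i=z]$ by consistency. Each is written as a sum over $(d,\mathbf z_{-i},\mathbf d_{-i})$. We then compare the resulting weights with those that would arise if $\mathbf Z_{-i}$ were independent of $Z_i$. The difference is exactly $\delta_z(\mathbf z_{-i})$.

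\textbf{Step 1: expand the conditional mean.} By Assumption~\ref{ass:consistency}, \eqref{iv: no interference in treatment} and \eqref{iv:consistency}, on the event $\{Z_i=z,\mathbf Z_{-i}=\mathbf z_{-i}\}$ we have
\[
Y_i=\sum_{d}\sum_{\mathbf d_{-i}} Y_i(d,\mathbf d_{-i})\,\mathbbm 1\{D_i(z)=d,\ \mathbf D_{-i}(\mathbf z_{-i})=\mathbf d_{-i}\},
\]
where the exclusion restriction \eqref{iv exclusion} removes the dependence on $\mathbf z$. Conditioning on $Z_i=z$ and summing over $\mathbf z_{-i}$ with weights $\Pr(\mathbf Z_{-i}=\mathbf z_{-i}\mid Z_i=z)$ gives
\[
\mathbbm{E}[Y_i\mid Z_i=z]=\sum_{d}\sum_{\mathbf z_{-i}}\sum_{\mathbf d_{-i}}\bar Y(d,\mathbf d_{-i};\mathbf z_{-i})\,\Pr\big(\mathbf D_{-i}(\mathbf z_{-i})=\mathbf d_{-i}\mid D_i(z)=d\big)\Pr(D_i(z)=d)\,\Pr(\mathbf Z_{-i}=\mathbf z_{-i}\mid Z_i=z).
\]
The independence condition \eqref{iv indep} lets us drop the conditioning on $(Z_i,\mathbf Z_{-i})$ inside the expectation of the potential quantities.

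\textbf{Step 2: split the weights.} Write $\Pr(\mathbf Z_{-i}=\mathbf z_{-i}\mid Z_i=z)=\Pr(\mathbf Z_{-i}=\mathbf z_{-i})+\delta_z(\mathbf z_{-i})$. The $\delta_z$ part is exactly $B_z$. The remaining part equals
\[
A_z:=\mathbbm{E}\big[Y_i(D_i(z),\mathbf D_{-i}(\mathbf Z^*_{-i}))\big],
\]
where $\mathbf Z^*_{-i}$ is drawn from the marginal law of $\mathbf Z_{-i}$, independently of the latent variables by \eqref{iv indep}. In particular, the law of the others' treatments $\tilde{\mathbf D}_{-i}:=\mathbf D_{-i}(\mathbf Z^*_{-i})$ is the same for $z=0,1$. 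Hence
\[
\mathbbm{E}[Y_i\mid Z_i=1]-\mathbbm{E}[Y_i\mid Z_i=0]=A_1-A_0+B_1-B_0.
\]

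\textbf{Step 3: the LATE argument with $\tilde{\mathbf D}_{-i}$ fixed.} Partition units into always-takers, never-takers and compliers using \eqref{iv: mono}. For always-takers and never-takers, $D_i(1)=D_i(0)$, so their contributions to $A_1-A_0$ cancel. This cancellation requires the joint law of $(D_i(z),\tilde{\mathbf D}_{-i},\epsilon_i)$ within each stratum to be the same for $z=0,1$, which holds because $\tilde{\mathbf D}_{-i}$ does not depend on $z$. For compliers, the contribution is
\[
\Pr(\text{complier})\,\mathbbm{E}\big[Y_i(1,\tilde{\mathbf D}_{-i})-Y_i(0,\tilde{\mathbf D}_{-i})\mid \text{complier}\big].
\]
Iterating expectations over $\tilde{\mathbf D}_{-i}$ gives $\Pr(\text{complier})\cdot\mathrm{LADE}$, reading $\mathbf D_{-i}$ in LADE as distributed under the marginal instrument law. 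The denominator is $\mathbbm{E}[D_i\mid Z_i=1]-\mathbbm{E}[D_i\mid Z_i=0]=\Pr(D_i(1)>D_i(0))$, using \eqref{iv indep} and monotonicity as in the standard argument, because $D_i$ depends only on $Z_i$. Dividing yields $\mathrm{LADE}+\mathrm{Bias}_{\mathrm{IV}}$, which proves (i). For (ii), Assumption~\ref{Instrument indepen} gives $\delta_z\equiv 0$, so $B_1=B_0=0$.

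\textbf{Main obstacle.} The delicate step is Step 3: justifying that the stratum-level cancellation survives interference. We must check that the conditional law of $\mathbf D_{-i}(\mathbf z_{-i})$ given $D_i(z)=d$ and the stratum is the same under $z=1$ and $z=0$. Conditioning on $D_i(z)=d$ mixes strata, so I would first rewrite the $A_z$ terms stratum by stratum, conditioning on the stratum $(D_i(0),D_i(1))$ rather than on $D_i(z)=d$. Within a stratum, $D_i(z)$ is deterministic, so $z$ enters only through the first argument of $Y_i$. This relies on the joint independence in \eqref{iv indep}, taken to cover $(D_i(0),D_i(1))$ and all potential outcomes jointly.
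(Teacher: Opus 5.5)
Your proposal is correct and follows essentially the same route as the paper: expand $\mathbbm{E}[Y_i\mid Z_i=z]$ by consistency and iterated expectations using \eqref{iv indep}, then add and subtract the version weighted by the marginal law of $\mathbf Z_{-i}$ to isolate $B_z$, and finally apply the principal-strata argument under monotonicity and divide by the first stage. Your coupling $\tilde{\mathbf D}_{-i}=\mathbf D_{-i}(\mathbf Z^*_{-i})$ and your joint-independence reading of \eqref{iv indep} make precise what the paper leaves implicit when it labels the first term $\mathbbm{E}[Y_i(D_i(z),\mathbf D_{-i})]$; also, once $A_z$ is written as an unconditional expectation, the always-taker and never-taker cancellation holds pointwise, so your ``main obstacle'' is not actually a difficulty.
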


Theorem~\ref{prop:iv} characterizes the behavior of the canonical instrumental--variables identification formula when outcomes exhibit arbitrary interference. Part~(i) shows that, even under instrument exogeneity and exclusion at the individual level, the Wald ratio fails to identify a causal effect when the instrument is dependent across units. In this case, the reduced-form difference decomposes into the local average direct effect (LADE) plus a bias term that reflects systematic differences in the distribution of others' instruments across values of the own instrument. When a unit’s instrument assignment alters the distribution of instruments among other units, then instrument-induced variation in own treatment is accompanied by changes in the treatments received by others, confounding the effect and inducing bias.

Part~(ii) establishes that the bias vanishes when the instrument is independent across units. Under Assumption~\ref{Instrument indepen}, the distribution of others' instruments is invariant to the value of one's own instrument, and hence the distribution of others' treatment assignment vector faced by compliers does not differ across instrument realizations. In this case, the Wald ratio recovers the LADE, defined as the local average direct effect evaluated at the realized treatment assignment vector of compliers.
Together, these results show that the standard IV identification formula has a causal interpretation in settings with interference only when the instrument assignment is independent across individuals.

\subsubsection{Regression Discontinuity} \label{sec:rdd}
The target parameter in the regression discontinuity design (RDD) admits two distinct identification routes: the traditional \textit{continuity-based} route proposed by \cite{hahn2001identification} and the \textit{local-randomization-based} route proposed by \cite{cattaneo2015randomization}. Using the latter identification strategy, \cite{aronow2017regression} shows that the ITR-based RDD identification formula equals the ADE under arbitrary interference. As such, in this section, we focus on the former identification approach under a \textit{sharp} design.
Here,  $W_i$ denotes a continuous running variable with cutoff $w_0$, i.e., $D_i=\mathbbm{1}(W_i\geq w_0)$. Under ITR and the condition:
\vspace{-0.5cm}
\begin{align}
&\lim_{w\downarrow w_0} \mathbbm{E}[Y_i(0)\mid W_i=w]=
\lim_{w\uparrow w_0}\mathbbm{E}[Y_i(0)\mid W_i=w] \quad \qquad  \qquad  \text{(continuity condition),}\label{rdd: cont Y0}
\end{align} the RDD indentification formula 
\[
\Phi_{\mathrm{ITR}}(P):={\lim_{w\downarrow w_0} \mathbbm{E}[Y_i\mid W_i=w] - \lim_{w\uparrow w_0}\mathbbm{E}[Y_i\mid W_i=w]}
\]
identifies the ATE at the cutoff. Relative to the generalized formula in \eqref{eq:sutva-functional}, $\psi(Y_i,W_i,D_i;P)={\lim_{w\downarrow w_0} \mathbbm{E}[Y_i\mid W_i=w] - \lim_{w\uparrow w_0}\mathbbm{E}[Y_i\mid W_i=w]}$.

Under arbitrary interference, the consistency assumption (Assumption \ref{ass:consistency}) implies that the observed outcome admits the representation
\[
Y_i
=
\sum_{\mathbf d_{-i}\in\{0,1\}^{N-1}}
\Big(
\alpha_i(\mathbf d_{-i})+\Delta_i(\mathbf d_{-i})D_i
\Big)\,
\mathbbm{I}(\mathbf D_{-i}=\mathbf d_{-i}),
\]
where $\alpha_i(\mathbf d_{-i}) := Y_i(0,\mathbf d_{-i})$ and
$\Delta_i(\mathbf d_{-i}) := Y_i(1,\mathbf d_{-i})-Y_i(0,\mathbf d_{-i})$.

We consider the following assumption, which extends the standard RDD continuity
condition into a setting with arbitrary interference and imposes a conditional independence restriction.

\begin{assumption}[RDD conditions under Interference]\label{rdd continuity}
For all $i\in[N]$,
\begin{align}
&\lim_{w\downarrow w_0}\mathbbm{E}[\alpha_i(\mathbf d_{-i})\mid W_i=w]
=
\lim_{w\uparrow w_0}\mathbbm{E}[\alpha_i(\mathbf d_{-i})\mid W_i=w],
\qquad
\forall\, \mathbf d_{-i}\in\{0,1\}^{N-1}, \label{rdd int: cont }\\
&\alpha_i(\mathbf d_{-i})\perp\!\!\!\perp \mathbf D_{-i}\mid W_i,
\qquad
\forall\, \mathbf d_{-i}\in\{0,1\}^{N-1}\label{rdd int: uncon}.
\end{align}
\end{assumption}
This assumption requires the conditional mean of the untreated potential outcome to be continuous in the running variable at the cutoff for each fixed treatment vector of other units. In addition, it requires that the untreated potential outcome is independent of the assignment of other units, conditional on the running variable. Note that, in the ITR-based RDD with heterogeneous treatment effects, \cite{hahn2001identification} also implicitly assumes joint unconfoundedness of potential outcomes with respect to the own treatment. Thus, the assumption of unconfoundedness is not new in the RDD literature.

Since $D_i$ is a deterministic function of $W_i$, i.e.,  $D_i=\mathbbm{1}(W_i\geq w_0)$, the dependence structure of $W_i$ determines the dependence between individual treatments. As a result, we introduce the following assumption.
\begin{assumption}[Independence of Running Variables]\label{running indepen}
    For all units $i \in[ N]$,  $\mathbf{W}_{-i} \perp\!\!\!\perp W_i$,
    where $\mathbf{W}_{-i}$ collects the running variable for all units except $i$.
\end{assumption}
\noindent This assumption rules out cross-unit dependence in the running variable, so that each unit’s running variable is independent of others. Assumption \ref{running indepen} is sufficient to ensure the unconditional version of the CAI condition that $\mathbf{D}_{-i} \perp\!\!\!\perp D_i$ for all $i\in[N]$. Moreover, it implies that $\mathbf{D}_{-i} \perp\!\!\!\perp W_i$ for all $i\in[N]$.

\begin{theorem}
\label{prop:rd}
Under arbitrary interference with consistent outcomes, consider a sharp regression discontinuity design
with cutoff $w_0$. Suppose treatment effects are homogeneous\footnote{The results may extend to the case of heterogeneous effects. It is left for future research.}, i.e.,
\[
\Delta_i(\mathbf d_{-i})=\Delta(\mathbf d_{-i})
\quad\text{for all }\mathbf d_{-i}\in\{0,1\}^{N-1}\text{ and all }i\in[N].
\]

\begin{itemize}
\item[(i)] If Assumption~\ref{rdd continuity} holds and $\mathbf{D}_{-i} \not\perp\!\!\!\perp W_i$, then, the sharp RDD identification formula admits the decomposition
\begin{align*}
\lim_{w\downarrow w_0}\mathbbm{E}[Y_i\mid W_i=w]
-
\lim_{w\uparrow w_0}\mathbbm{E}[Y_i\mid W_i=w] 
=
\mathbbm{E}[\Delta(\mathbf D_{-i})]
\;+\;
\mathrm{Bias}_{1,\mathrm{RDD}}
\;+\;
\mathrm{Bias}_{2,\mathrm{RDD}},
\end{align*}
where
\small
\begin{align*}
\mathrm{Bias}_{1,\mathrm{RDD}}
&:=
\sum_{\mathbf d_{-i}}
\lim_{w\downarrow w_0}\mathbbm{E}[\alpha_i(\mathbf d_{-i})\mid W_i=w]
\Big(
\lim_{w\downarrow w_0}\Pr(\mathbf D_{-i}=\mathbf d_{-i}\mid W_i=w)
-\\
&\qquad   \qquad \qquad \qquad \qquad\qquad \qquad \qquad\qquad \qquad \qquad \lim_{w\uparrow w_0}\Pr(\mathbf D_{-i}=\mathbf d_{-i}\mid W_i=w)
\Big),\\
\mathrm{Bias}_{2,\mathrm{RDD}}
&:=
\sum_{\mathbf d_{-i}}
\Delta(\mathbf d_{-i})
\Big(
\lim_{w\uparrow w_0}
\Pr(\mathbf D_{-i}=\mathbf d_{-i}\mid W_i=w_0)- \Pr(\mathbf D_{-i}=\mathbf d_{-i})
\Big).
\end{align*}
\normalsize
\item[(ii)]
If Assumption~\ref{rdd continuity} and \ref{running indepen} hold,
then, $\mathrm{Bias}_{1,\mathrm{RDD}}=\mathrm{Bias}_{2,\mathrm{RDD}}=0$, and the sharp RDD identification identifies the average direct effect, i.e., 
\[
\lim_{w\downarrow w_0}\mathbbm{E}[Y_i\mid W_i=w]
-
\lim_{w\uparrow w_0}\mathbbm{E}[Y_i\mid W_i=w]
=
\mathbbm{E}[\Delta(\mathbf D_{-i})].
\]
\end{itemize}
\end{theorem}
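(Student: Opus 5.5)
We start from the consistency representation $Y_i=\sum_{\mathbf d_{-i}}\big(\alpha_i(\mathbf d_{-i})+\Delta(\mathbf d_{-i})D_i\big)\mathbbm{I}(\mathbf D_{-i}=\mathbf d_{-i})$, with homogeneous effects, and take conditional expectations given $W_i=w$. Condition \eqref{rdd int: uncon} factorizes the first term:
\[
\mathbbm{E}[\alpha_i(\mathbf d_{-i})\mathbbm{I}(\mathbf D_{-i}=\mathbf d_{-i})\mid W_i=w]=\mathbbm{E}[\alpha_i(\mathbf d_{-i})\mid W_i=w]\,\Pr(\mathbf D_{-i}=\mathbf d_{-i}\mid W_i=w).
\]
Because $D_i=\mathbbm{1}(W_i\ge w_0)$ is deterministic given $W_i$ and $\Delta$ is nonrandom, the second term equals $\mathbbm{1}(w\ge w_0)\sum_{\mathbf d_{-i}}\Delta(\mathbf d_{-i})\Pr(\mathbf D_{-i}=\mathbf d_{-i}\mid W_i=w)$. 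Hence
\[
\mathbbm{E}[Y_i\mid W_i=w]=\sum_{\mathbf d_{-i}}\Big(\mathbbm{E}[\alpha_i(\mathbf d_{-i})\mid W_i=w]+\mathbbm{1}(w\ge w_0)\Delta(\mathbf d_{-i})\Big)\Pr(\mathbf D_{-i}=\mathbf d_{-i}\mid W_i=w).
\]

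Next we take the one-sided limits $w\downarrow w_0$ and $w\uparrow w_0$. The sum over $\mathbf d_{-i}$ is finite, so the limit passes inside it. We assume the one-sided limits of $\Pr(\mathbf D_{-i}=\mathbf d_{-i}\mid W_i=w)$ exist; this is implicit in the statement. Denote the $\alpha$-limits by $a(\mathbf d_{-i})$; they are common to both sides by \eqref{rdd int: cont }. Denote the probability limits by $p^+(\mathbf d_{-i})$ and $p^-(\mathbf d_{-i})$. Subtracting the two limits gives
\[
\sum_{\mathbf d_{-i}} a(\mathbf d_{-i})\big(p^+-p^-\big)(\mathbf d_{-i})+\sum_{\mathbf d_{-i}}\Delta(\mathbf d_{-i})\,p^+(\mathbf d_{-i}).
\]
The first sum is exactly $\mathrm{Bias}_{1,\mathrm{RDD}}$.

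For the second sum, we add and subtract $\sum_{\mathbf d_{-i}}\Delta(\mathbf d_{-i})\Pr(\mathbf D_{-i}=\mathbf d_{-i})=\mathbbm{E}[\Delta(\mathbf D_{-i})]$. This is the ADE under homogeneity, since $\Delta$ does not depend on $\epsilon_i$. The remainder is $\sum\Delta(\mathbf d_{-i})\big(p^{\pm}(\mathbf d_{-i})-\Pr(\mathbf D_{-i}=\mathbf d_{-i})\big)$, which matches $\mathrm{Bias}_{2,\mathrm{RDD}}$. Here lies the main subtlety: the statement writes the limit $w\uparrow w_0$ evaluated "at $W_i=w_0$". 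I will read this as the one-sided limit that arises from the treated side. If the limit is taken from below, one reconciles the two forms by moving $\sum\Delta(p^+-p^-)$ into the bias terms. I would present the decomposition with the side that appears naturally and note the equivalence. This bookkeeping, together with justifying the existence of the limits, is the only delicate step; the rest is algebra. This proves (i).

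For (ii), Assumption \ref{running indepen} implies $\mathbf D_{-i}\perp\!\!\!\perp W_i$, because $\mathbf D_{-i}$ is a measurable function of $\mathbf W_{-i}$. So $\Pr(\mathbf D_{-i}=\mathbf d_{-i}\mid W_i=w)=\Pr(\mathbf D_{-i}=\mathbf d_{-i})$ for every $w$. Both one-sided limits then equal the unconditional probability, so $p^+=p^-$ and both bias terms vanish identically. The formula therefore equals $\mathbbm{E}[\Delta(\mathbf D_{-i})]$.
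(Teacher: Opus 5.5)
Your proof is correct and follows essentially the same route as the paper's. You factor the untreated term with the conditional-independence part of Assumption~\ref{rdd continuity}, and use the sharp design so the treated term contributes $\sum_{\mathbf d_{-i}}\Delta(\mathbf d_{-i})\,p^+(\mathbf d_{-i})$. You then use continuity for common $\alpha$-limits and add and subtract $\mathbbm{E}[\Delta(\mathbf D_{-i})]$, with part (ii) following from $\mathbf D_{-i}\perp\!\!\!\perp W_i$.

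You were right to read the limit in $\mathrm{Bias}_{2,\mathrm{RDD}}$ as the treated-side (right-hand) limit, which is exactly what the paper's proof uses. The ``$w\uparrow w_0$'' in the statement is therefore a typo; taken literally, it would leave the extra term $\sum_{\mathbf d_{-i}}\Delta(\mathbf d_{-i})\big(p^+-p^-\big)(\mathbf d_{-i})$ unaccounted for, as you noted.
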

Theorem~\ref{prop:rd} highlights the central role of Assumption \ref{running indepen} (a CAI-type condition) in the causal interpretation of regression discontinuity designs under arbitrary interference. 

Part~(i) shows that, in the absence of the independence between the running variable of a unit and the treatment assignments of other units, the sharp RDD identification formula generally combines the ADE with bias terms reflecting the differences between the distribution of others' treatment assignments near both sides of the cutoff.

Part~(ii) demonstrates that imposing independence of the running variable across units, which implies independence between one's own treatment and others' treatments, eliminates these sources of bias and restores a causal interpretation of the sharp RDD identification formula. Under this CAI-type condition, the running variable generates exogenous variation in own treatment while holding the distribution of others' treatments fixed, so the discontinuity in outcomes at the cutoff identifies the ADE. 

Taken together, these results show that Assumption \ref{running indepen} (a CAI-type condition) is not merely a technical condition but a key requirement for the ITR-based sharp RDD identification formula to be interpretable in the presence of arbitrary interference. 

\subsubsection{Difference-in-Differences} \label{sec:did}
In this section, we study difference-in-differences (DiD) identification that exploits variation across both time and units. We focus on the canonical two-group (treated and untreated) and two-period ($t=0,1$) design, in which no unit is treated at period 0 and units in the treated group become treated at period 1.

Let $Y_{it}(0)$ and $Y_{it}(1)$ denote unit $i$’s potential outcomes without and with treatment at period $t\in\{0,1\}$. Because treatment is not implemented in period 0 for any unit, the observed outcome in period 0 is $Y_{i0} = Y_{i0}(0)$, so that untreated potential outcomes are observed for all units. This setup rules out anticipation effects, in the sense that treatment assignment in period 1 does not affect outcomes in period 0.\footnote{A setup that allows anticipation effects is discussed in Section \ref{sec:did_dynamic} of the Supplementary Material.} In period 1, the observed outcome is $Y_{i1} = Y_{i1}(0)(1-D_i) + Y_{i1}(1)D_i,$ where $D_i \in \{0,1\}$ indicates whether unit $i$ belongs to the treated group and is treated in period 1.

Following \cite{abadie2005semiparametric}, we allow for covariate-driven differences in outcome dynamics between treated and control units.\footnote{The theoretical results in this section extend trivially to the traditional DiD framework as in \cite{card1994minimum}, where there are no covariate-driven differences in outcome dynamics.} Therefore, under the ITR framework, the DiD identification formula is 
\begin{align}
   \Phi_{\mathrm{ITR}}(P)
:=&
\mathbbm{E}\!\left[
\mathbbm{E}[Y_{i1}\mid W_i,D_i=1]
-
\mathbbm{E}[Y_{i0}\mid W_i,D_i=1] \mid D=1
\right]\nonumber\\
& \quad \quad -
\left(\mathbbm{E}\!\left[
\mathbbm{E}[Y_{i1}\mid W_i,D_i=0]
-
\mathbbm{E}[Y_{i0}\mid W_i,D_i=0]\mid D=1
\right]\right) \label{DiD indentification formula}\\ 
=&
\mathbbm{E}\!\Bigg[\frac{D_i}{\Pr(D_i=1)}\cdot\Big\{
\left(
\mathbbm{E}[Y_{i1}\mid W_i,D_i=1]
-
\mathbbm{E}[Y_{i0}\mid W_i,D_i=1]
\right)\nonumber
\\
& \quad \quad \quad \quad \quad \quad \quad \quad  \quad\quad -
\left(
\mathbbm{E}[Y_{i1}\mid W_i,D_i=0]
-
\mathbbm{E}[Y_{i0}\mid W_i,D_i=0]
\right)\Big\}\Bigg]\nonumber.
\end{align}
Thus, relative to the generalized identification formula in \eqref{eq:sutva-functional}, $\psi(Y_i,W_i,D_i;P)=D_i\cdot \Pr(D_i=1)^{-1}\cdot\big\{
\left(
\mathbbm{E}[Y_{i1}\mid W_i,D_i=1]
-
\mathbbm{E}[Y_{i0}\mid W_i,D_i=1]
\right)
 -
\big(
\mathbbm{E}[Y_{i1}\mid W_i,D_i=0]
-
\mathbbm{E}[Y_{i0}\mid W_i,D_i=0]
\big)\big\}$, where $Y_i:=(Y_{i0},Y_{i1})$.

\cite{baker2025difference} shows that the identification formula in \eqref{DiD indentification formula}
equals the ATT in period 1, i.e.,
$\Phi_{\mathrm{ITR}}(P)=\mathbbm{E}[Y_{i1}(1)-Y_{i1}(0)\mid D_i=1]$, under the following conditions: 
\begin{align}
    &\mathbbm{E}[Y_{i1}(0)-Y_{i0}(0)\mid W_i,D_i=1]
=
\mathbbm{E}[Y_{i1}(0)-Y_{i0}(0)\mid W_i,D_i=0] \quad \text{(Conditional parallel trends),}\label{CPT}\\
&\text{There exists some}\quad \varepsilon>0,\quad \varepsilon<\Pr(D_{i}=1\mid W_i)<1-\varepsilon \quad \quad \quad \quad \quad \quad\text{(Strong overlap)}\label{SO}. 
\end{align}
Condition \eqref{CPT} requires that, conditional on covariates, treated and control units would have experienced the same average evolution of untreated potential outcomes over time. Condition \eqref{SO} asserts that the conditional probability of treatment given observed covariates $W_i$ is bounded away from zero and one.

We now extend this DiD framework to allow for arbitrary interference. Since no unit is treated in period 0, the observed outcome under arbitrary interference in this period is $Y_{i0} = Y_{i0}(\mathbf{0})=Y_{i0}(0,\mathbf{0}_{-i})$, where $\mathbf{0}$ is the $N$-dimensional vector of zeros and $\mathbf{0}_{-i}$ is the $(N-1)$-dimensional vector of zeros. In period 1,
the observed outcome under arbitrary interference and consistency assumptions is $Y_{i1}=\sum_{\mathbf{d}_{-i}} \left(Y_{i1}(0, \mathbf{d}_{-i})(1-D_i) + Y_{i1}(1, \mathbf{d}_{-i})D_i\right)\cdot \mathbbm{1}(\mathbf{D}_{-i}=\mathbf{d}_{-i})$.
Because the outcome of a unit depends on others' treatment statuses, the classical parallel trends assumption must be strengthened accordingly. We impose the following assumption analogous to conditions \eqref{CPT} and \eqref{SO}.

\begin{assumption}\label{did: parallel trend}
For all $i\in[N]$,
\vspace{-0.7cm}
\begin{align}
&\mathbbm{E}[Y_{i1}(0,\mathbf d_{-i})\mid W_i,D_i=1]
-
\mathbbm{E}[Y_{i0}(\mathbf 0)\mid W_i,D_i=1] \nonumber\\
&\quad=
\mathbbm{E}[Y_{i1}(0,\mathbf d_{-i})\mid W_i,D_i=0]
-
\mathbbm{E}[Y_{i0}(\mathbf 0)\mid W_i,D_i=0],
\qquad
\forall\,\,\mathbf{d}_{-i}\in\{0,1\}^{N-1}, \label{PT}\\
&\text{There exists some}\quad \varepsilon>0,\quad \varepsilon<\Pr(D_{i}=1\mid W_i)<1-\varepsilon. \label{SO 2}
\end{align}
\end{assumption}
Condition \eqref{PT} of Assumption \ref{did: parallel trend} extends the conditional parallel trends to environments with interference by requiring that, conditional on covariates, the treated and control groups would have exhibited the same average evolution of untreated potential outcomes across periods, even when outcomes may depend on others' treatment conditions. Condition \eqref{SO 2} restates the strong overlap condition for completeness.

Next, we study the interpretability of the ITR-based  DiD identification formula under arbitrary interference, with particular emphasis on the role of the CAI restriction.

\begin{theorem}
\label{thm:did}
Suppose potential outcomes exhibit arbitrary interference and consistency (Assumption \ref{ass:consistency} holds).
\begin{itemize}
    \item[(i)] If Assumption~\ref{did: parallel trend} holds and Assumption \ref{ass:CAI} fails then  
    \begin{align*}
\Phi_{\mathrm{ITR}}(P)
&=
\underbrace{
\mathbbm{E}\!\left[
\mathbbm{E}\!\left[
Y_{i1}(1,\mathbf D_{-i})
-
Y_{i1}(0,\mathbf D_{-i})
\mid W_i,D_i=1
\right]\Big |D_i=1\right]
}_{\mathrm{ADTT}}
\;+\;
\mathrm{Bias}_{\mathrm{DiD}},
\end{align*}
where 
\quad $\mathrm{Bias}_{\mathrm{DiD}}:= \mathbbm{E}\Big[\text{trend}_1(W_i)-\text{trend}_0(W_i)\Big |D_i=1\Big],$ \quad with
\begin{align*}
\text{trend}_d(W_i)
:=&
\sum_{\mathbf d_{-i}}
\Big(
\mathbbm{E}[Y_{i1}(0,\mathbf d_{-i})\mid W_i,D_i=d]
-
\mathbbm{E}[Y_{i0}(\mathbf 0)\mid W_i,D_i=d]
\Big)\\
&\quad \quad \quad \cdot \Pr(\mathbf D_{-i}=\mathbf d_{-i}\mid W_i,D_i=d).
\end{align*}

\item[(ii)]  If Assumptions~\ref{did: parallel trend} and  \ref{ass:CAI} hold, 
then $\mathrm{Bias}_{\mathrm{DiD}}=0$, and therefore
\[
\Phi_{\mathrm{ITR}}(P)
=
\mathbbm{E}\!\left[
\mathbbm{E}\!\left[
Y_{i1}(1,\mathbf D_{-i})
-
Y_{i1}(0,\mathbf D_{-i})
\mid W_i,D_i=1
\right]\mid D_i=1\right],
\]
which identifies the average direct effect on the treated (ADTT) in period~1.
\end{itemize}

\end{theorem}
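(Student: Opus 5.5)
The plan is to expand each of the four conditional means in $\Phi_{\mathrm{ITR}}(P)$ by conditioning additionally on $\mathbf D_{-i}$. I then regroup the treated-group terms into the ADTT plus a ``trend'' piece. The control-group terms are exactly the other trend piece. Parallel trends (Condition \eqref{PT}) together with CAI (Assumption~\ref{ass:CAI}) then make the two trend pieces coincide.

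\textbf{Step 1 (expansion).} Write $m_{dt}(W_i):=\mathbbm{E}[Y_{it}\mid W_i,D_i=d]$ and $P(\mathbf d_{-i};d,W_i):=\Pr(\mathbf D_{-i}=\mathbf d_{-i}\mid W_i,D_i=d)$.
\begin{itemize}
\item For $t=0$: by Assumption~\ref{ass:consistency} and the absence of treatment in period $0$, $Y_{i0}=Y_{i0}(\mathbf 0)$. Hence $m_{d0}(W_i)=\mathbbm{E}[Y_{i0}(\mathbf 0)\mid W_i,D_i=d]$. Because $\sum_{\mathbf d_{-i}}P(\mathbf d_{-i};d,W_i)=1$, this can be written as $\sum_{\mathbf d_{-i}}\mathbbm{E}[Y_{i0}(\mathbf 0)\mid W_i,D_i=d]\,P(\mathbf d_{-i};d,W_i)$.
\item For $t=1$: use the representation of $Y_{i1}$ as a sum over $\mathbbm 1(\mathbf D_{-i}=\mathbf d_{-i})$ and apply the law of iterated expectations. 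This gives
$m_{11}(W_i)=\sum_{\mathbf d_{-i}}\mathbbm{E}[Y_{i1}(1,\mathbf d_{-i})\mid W_i,D_i=1,\mathbf D_{-i}=\mathbf d_{-i}]\,P(\mathbf d_{-i};1,W_i)$, and analogously for $m_{01}$ with $Y_{i1}(0,\mathbf d_{-i})$.
\end{itemize}

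\textbf{Step 2 (add and subtract).} In $m_{11}-m_{10}$, add and subtract $Y_{i1}(0,\mathbf d_{-i})$ inside each term of the sum.
\begin{itemize}
\item The $\Delta$-part reassembles, again by iterated expectations, into $\mathbbm{E}[Y_{i1}(1,\mathbf D_{-i})-Y_{i1}(0,\mathbf D_{-i})\mid W_i,D_i=1]$. Taking $\mathbbm{E}[\cdot\mid D_i=1]$ yields the ADTT.
\item The remainder is $\sum_{\mathbf d_{-i}}\big(\mathbbm{E}[Y_{i1}(0,\mathbf d_{-i})\mid W_i,D_i=1,\mathbf D_{-i}=\mathbf d_{-i}]-\mathbbm{E}[Y_{i0}(\mathbf 0)\mid W_i,D_i=1]\big)P(\mathbf d_{-i};1,W_i)$.
\item Similarly, $m_{01}-m_{00}$ is the same expression with $D_i=0$.
\end{itemize}

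\textbf{Main obstacle: the extra conditioning on $\mathbf D_{-i}$.} The stated $\text{trend}_d(W_i)$, and Condition \eqref{PT}, involve $\mathbbm{E}[Y_{i1}(0,\mathbf d_{-i})\mid W_i,D_i=d]$ without conditioning on $\mathbf D_{-i}=\mathbf d_{-i}$. To identify the remainders with $\text{trend}_1$ and $\text{trend}_0$, I will need
\[
\mathbbm{E}[Y_{i1}(0,\mathbf d_{-i})\mid W_i,D_i,\mathbf D_{-i}=\mathbf d_{-i}]=\mathbbm{E}[Y_{i1}(0,\mathbf d_{-i})\mid W_i,D_i].
\]
This is a mean-independence of untreated potential outcomes from others' assignments given $(W_i,D_i)$. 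I would first try to argue that it is implicit in the design, the analogue of the unconfoundedness-type restriction in \eqref{rdd int: uncon}. Failing that, I would state it explicitly as part of the maintained setup, or read $\text{trend}_d$ with the $\mathbf D_{-i}$-conditioning built in. With that in place, $\Phi_{\mathrm{ITR}}(P)=\mathrm{ADTT}+\mathbbm{E}[\text{trend}_1(W_i)-\text{trend}_0(W_i)\mid D_i=1]$, which is part (i). No use of CAI is needed for this step.

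\textbf{Step 3 (part (ii)).} Condition \eqref{PT} says the bracketed trend difference does not depend on $d$ for each fixed $\mathbf d_{-i}$; call it $\tau(W_i,\mathbf d_{-i})$. Assumption~\ref{ass:CAI} gives $P(\mathbf d_{-i};1,W_i)=P(\mathbf d_{-i};0,W_i)=\Pr(\mathbf D_{-i}=\mathbf d_{-i}\mid W_i)$. Therefore $\text{trend}_1(W_i)=\text{trend}_0(W_i)=\sum_{\mathbf d_{-i}}\tau(W_i,\mathbf d_{-i})\Pr(\mathbf D_{-i}=\mathbf d_{-i}\mid W_i)$ pointwise in $W_i$, so $\mathrm{Bias}_{\mathrm{DiD}}=0$. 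Strong overlap \eqref{SO 2} is used only to guarantee that all conditional expectations given $D_i=0$ and $D_i=1$ are well defined on the support of $W_i\mid D_i=1$.
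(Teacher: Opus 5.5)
Your proposal follows the paper's proof step for step. The paper also adds and subtracts $\mathbbm{E}[\mathbbm{E}[Y_{i1}(0,\mathbf D_{-i})\mid W_i,D_i=1]\mid D_i=1]$, expands both untreated trends over $\mathbf d_{-i}$ with weights $\Pr(\mathbf D_{-i}=\mathbf d_{-i}\mid W_i,D_i=d)$, equalizes the weights via CAI, and cancels via \eqref{PT}. The obstacle you flag is genuine, and it is exactly the step the paper takes without comment. Its proof simply writes $\mathbbm{E}[Y_{i1}(0,\mathbf D_{-i})\mid W_i,D_i=d]=\sum_{\mathbf d_{-i}}\mathbbm{E}[Y_{i1}(0,\mathbf d_{-i})\mid W_i,D_i=d]\Pr(\mathbf D_{-i}=\mathbf d_{-i}\mid W_i,D_i=d)$, which is the mean-independence you wrote down.

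Your first option, arguing that this is implicit in the design, cannot succeed. Nothing in Assumption~\ref{did: parallel trend}, consistency, or CAI restricts dependence between $Y_{i1}(0,\mathbf d_{-i})$ and $\mathbf D_{-i}$ given $(W_i,D_i)$; contrast \eqref{eq:uso} and \eqref{rdd int: uncon} in the other designs. Without such a restriction even part (ii) is false. Take $N=2$, $i=1$, no covariates, and the following design:
\begin{itemize}
\item $D_1,D_2$ i.i.d.\ Bernoulli$(1/2)$, with latent $\epsilon_1=(2D_1-1)(2D_2-1)$;
\item $Y_{10}(\mathbf 0)=0$ and $Y_{11}(d,d_2)=(2d_2-1)\epsilon_1+\tau d$.
\end{itemize}
Then $\mathbbm{E}[Y_{11}(0,d_2)\mid D_1=d]=(2d_2-1)(2d-1)\mathbbm{E}[2D_2-1]=0$ for all $d,d_2$. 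So \eqref{PT}, overlap and CAI hold, $\text{trend}_1=\text{trend}_0=0$, and the ADTT equals $\tau$. Yet $Y_{11}=(2D_1-1)+\tau D_1$, so $\Phi_{\mathrm{ITR}}(P)=(1+\tau)-(-1)=\tau+2$.

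Your third option also fails. Building the $\mathbf D_{-i}$-conditioning into $\text{trend}_d$ makes part (i) a pure identity, but it breaks your Step 3. Condition \eqref{PT} says nothing about the $\mathbf D_{-i}$-conditional trends; in the example above these equal $1$ and $-1$.

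So you need the second option. Either maintain $\mathbbm{E}[Y_{i1}(0,\mathbf d_{-i})\mid W_i,D_i,\mathbf D_{-i}=\mathbf d_{-i}]=\mathbbm{E}[Y_{i1}(0,\mathbf d_{-i})\mid W_i,D_i]$, for example via $Y_{i1}(0,\mathbf d_{-i})\perp\!\!\!\perp\mathbf D_{-i}\mid (W_i,D_i)$. Or strengthen \eqref{PT} to hold conditionally on $\mathbf D_{-i}=\mathbf d_{-i}$ and define the trends accordingly. With either addition your argument is complete and coincides with the paper's.
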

Theorem~\ref{thm:did} clarifies the behavior of the ITR-based DiD identifying functional under arbitrary interference and states the sources of bias when treatment assignment is arbitrarily dependent.
Part~(i) shows that even when the parallel-trends-type restriction in
Assumption~\ref{did: parallel trend} holds; the DiD functional fails to identify the average direct effect on the treated. The resulting bias arises from two distinct but interacting forces.

First, \emph{untreated potential outcomes depend on others' treatment
assignments}. When interference is present, the counterfactual evolution
$Y_{i1}(0,\mathbf d_{-i}) - Y_{i0}(\mathbf 0)$ varies across $\mathbf d_{-i}$. Condition \eqref{PT} of Assumption \ref{did: parallel trend}  ensures that,  for any fixed treatment vector of other units, treated and control units share the same untreated potential outcome trend in conditional expectation. However, it does not restrict how untreated trends vary \emph{across} $\mathbf d_{-i}$.

Second, \emph{the distribution of others' treatment differs by treatment group},
even after conditioning on covariates $W_i$. The bias term in
Theorem~\ref{thm:did} aggregates $\mathbf d_{-i}$--specific untreated trends using treatment
group-specific weights, $\Pr(\mathbf D_{-i}=\mathbf d_{-i}\mid W_i, D_i=d),$ so that systematic differences in these distributions induce differences in average untreated trends between treated and control groups. As a result, DiD compares weighted averages of untreated potential outcomes taken over different conditional distributions.

These two forces jointly induce the bias. Thus, in general, DiD compares counterfactual trends evaluated under different conditional distributions, which creates a wedge between the DiD estimand and the ADTT.

Part~(ii) shows that restricting the post-treatment dependency of assignment eliminates the second source of bias by equalizing treatment asignment distribution of other's across treatment groups conditional on $W_i$. When combined with Assumption~\ref{did: parallel trend}, CAI ensures that both the $\mathbf d_{-i}$--specific untreated trends and the weights used to aggregate them coincide across groups, eliminating the bias term. Under these conditions, the DiD functional equals the ADTT. See \cite{xu2023difference} for a parallel discussion in a setting with network information, exposure mappings, and nonstochastic covariates.

\section{Robustness to CAI Violation}\label{robustness}
The preceding section highlights the importance of the CAI and its related conditions for the interpretability of ITR-based identification formulas under interference. As with several identifying assumptions in observational studies, the CAI and its related conditions are not testable because they impose restrictions on the dependence structure of the treatment assignment mechanism that are not identified from a single cross-sectional realization. We therefore assess the robustness of the conclusions of a test of a sharp null hypothesis\footnote{See chapter 5 of \cite{imbens2015causal} for the definition of a sharp null hypothesis.} to violations of CAI through a sensitivity analysis. For brevity, we focus on a selection-on-observables design, though the proposed sensitivity framework may be adapted to other designs.

\subsection{Sharp Null under Interference and the CAI Condition}
We study the sensitivity of $P$-value to violations of the
CAI condition when testing a Fisher-style sharp null.  Specifically, we consider the sharp null hypothesis
\begin{equation}
\label{eq:null-dir}
H_0:\quad
Y_i(1,\mathbf d_{-i}) = Y_i(0,\mathbf d'_{-i})
\quad\text{for all } i\in[N] \text{ and } \mathbf d_{-i}, \mathbf d'_{-i}\in\{0,1\}^{N-1},
\end{equation}
 which asserts that treatment assignment vector has no effect on outcomes (no interference and no direct treatment effects).
 Thus, under the alternative hypothesis, treatment assignment vector affects outcomes.

Conditional on the observed data, or equivalently, for a fixed realization of the latent variables underlying the potential outcomes, the only remaining source of randomness is the treatment assignment mechanism. This allows the sharp null hypothesis $H_0$ to be tested using Fisher randomization inference.\footnote{In  Section~\ref{extra sim} of the Supplementary Material, we numerically assess the size control and power of the Fisher randomization applied to $H_0$.} Following \citet{rosenbaum2002observational}, we examine the sensitivity of decisions based on Fisher $P$-values to departures from the CAI condition.

In principle, computing Fisher $P$-values under the sharp null hypothesis
requires knowledge of the propensity scores $e(W_i)=\Pr(D_i=1\mid W_i)$ for $W_i\in \mathcal{W}$, $i\in[N]$. In observational studies, the propensity scores are unknown. Under unconfoundedness (condition~\ref{eq:uso}) and CAI
(Assumption~\ref{ass:CAI}), however, $e(\cdot)$ is identified from the joint distribution of $(D_i,W_i)$ and can therefore be estimated \citep{imbens2015causal}.

Formally, if Assumptions~\ref{ass:CAI} and condition~\ref{eq:uso} hold, the assignment
mechanism satisfies
\begin{equation*}
\Pr\!\left(
D_i=1
\mid
\mathbf D_{-i}, W_i,
\{Y_i(d,\mathbf d_{-i}) : d\in\{0,1\},\, \mathbf d_{-i}\}
\right)
=
\Pr(D_i=1\mid \mathbf D_{-i}, W_i)
=
\Pr(D_i=1\mid W_i),
\end{equation*}
where the first equality follows from condition~\ref{eq:uso} and the second from
Assumption~\ref{ass:CAI}.

By contrast, if Assumption~\ref{ass:CAI} fails while condition~\ref{eq:uso}
continues to hold, the assignment mechanism satisfies
\begin{equation*}
\Pr\!\left(
D_i=1
\mid
\mathbf D_{-i}, W_i,
\{Y_i(d,\mathbf d_{-i}) : d\in\{0,1\},\, \mathbf d_{-i}\}
\right)
=
\Pr(D_i=1\mid \mathbf D_{-i}, W_i),
\end{equation*} so that the treatment condition of unit $i$ may depend on the treatment of other units in the population, even after conditioning on observed covariates.

Our objective is therefore to assess how replacing the benchmark assignment mechanism $\Pr(D_i=1\mid W_i)$ with the more general mechanism $\Pr(D_i=1\mid \mathbf D_{-i}, W_i)$ affects the conclusions of the test of $H_0$. We emphasize that our objective is to quantify sensitivity to violations of CAI rather than to violations of unconfoundedness, and therefore condition~\ref{eq:uso} is imposed.
While $\Pr(D_i=1\mid W_i)$ is identified and estimable from the observed data, $\Pr(D_i=1\mid \mathbf D_{-i}, W_i)$ is not identified in a single cross-section. We therefore propose a sensitivity analysis framework to model $\Pr(D_i=1\mid \mathbf D_{-i}, W_i)$.

\subsection{Stratification and Assignment Mechanisms}\label{2-stage mechanism}
As in Rosenbaum’s framework, we operationalize conditioning on $W_i$ by forming strata of units with similar values of $W_i$ or the propensity score $\Pr(D_i=1|W_i)$. Let $S_i=S(W_i)\in\{1,\ldots,K\}=:[K]$ index strata obtained by coarsening $W_i$ or $\Pr(D_i=1|W_i)$. 
Thus, when CAI holds, the assignment mechanism satisfies $\Pr(D_i=1\mid S_i)$; on the other hand, when it fails, one's own treatment may additionally depend on the treatment of others, yielding $\Pr(D_i=1\mid \mathbf D_{-i}, S_i)$.

To motivate the sensitivity model introduced in the next section, we describe a two-stage stratified treatment assignment mechanism. Let $n_s$ denote the number of units in stratum $s\in[K]$ and let $M_s=\sum_{i:S_i=s} D_i$ denote the random treated count in stratum $s\in[K]$. Within-stratum treatment is generated as follows:
\begin{enumerate}
\item Draw the treated count $M_s$ from a distribution $\pi_s(m_s)=\Pr(M_s=m_s)$ for $m_s=0,\dots,n_s$.
\item Conditional on $M_s=m_s$, select $m_s$ treated units uniformly among the $\binom{n_s}{m_s}$ subsets of size $m_s$.
\end{enumerate}
Hence, for any within stratum assignment vector $\mathbf{d}_s=(d_1,\dots,d_{n_s})$ with
$m_s=\sum_{i=1}^{n_s} d_i$, the unconditional within stratum treatment assignment mechanism  is 
\begin{align}\label{two-stage mech}
    \Pr(\mathbf{D}_s=\mathbf{d}_s)=\sum_{{m}_s=0}^{n_s}\Pr(\mathbf{D}_s=\mathbf{d}_s|M_s=m_s)\cdot\Pr(M_s=m_s)=\frac{\pi_s(m_s)}{\binom{n_s}{m_s}},
\end{align}
which implies that the within-stratum treatment assignment mechanism is exchangeable 
and depends only on the number of treated units.

Now, if the treated count is drawn from a binomial distribution with parameters $n_s$ and $e(s)$, i.e, $M_s\sim Binomial(n_s,e(s))$, then the assignment mechanism in \eqref{two-stage mech} simplifies as 
\begin{align*}
    \Pr(\mathbf{D}_s=\mathbf{d}_s)=\frac{\pi_s(m_s)}{\binom{n_s}{m_s}}=\frac{\binom{n_s}{m_s}\cdot e(s)^{m_s}(1-e(s))^{n_s-m_s}}{\binom{n_s}{m_s}}=e(s)^{m_s}(1-e(s))^{n_s-m_s},
\end{align*}
which represents the independent unconditional Bernoulli within-stratum assignment mechanism (a stratified Bernoulli assignment), such that, $D_i \;\perp\!\!\!\perp\; D_j \mid S_i=S_j=s, s\in[K].$  This assignment mechanism satisfies CAI. Thus, using the estimated stratum-level propensity scores $\hat e(s)$ for $s\in[K]$, we can approximate the assignment distribution under unconfoundedness and the CAI conditions.

However, in general, for $\pi_s(m_s)\neq \binom{n_s}{m_s}\cdot e(s)^{m_s}(1-e(s))^{n_s-m_s}$, the resulting assignment mechanism is not indepedent Bernoulli.

In the following Proposition, which holds for each stratum, the probabilities and moments are conditional on $S_i=s$. 
\begin{proposition}\label{prop:cov_two_step}
Let $\mathbf{D}_s\in\{0,1\}^{n_s}$ be generated by the foregoing two-step assignment mechanism.
Then, for each $i, j\in\{1,\dots,n_s\}$, $\Pr(D_i=1)=\mathbbm{E}[M_s]/n_s$ and for any $i\neq j$, $\Pr(D_i=1,D_j=1)=(\mathbbm{E}[M_s(M_s-1)])/(n_s(n_s-1))$.
Therefore,
\[
\mathrm{Cov}(D_i,D_j)
=
\frac{\mathbbm{E}[M_s(M_s-1)]}{n_s(n_s-1)}
-
\left(\frac{\mathbbm{E}[M_s]}{n_s}\right)^2.
\]
If, in addition, $\mathbbm{E}[M_s]=n_se(s)$ for some $e(s)\in[0,1]$, then
\begin{align}\label{covariance formula}
\mathrm{Cov}(D_i,D_j)=
\frac{\mathrm{Var}(M_s)-n_se(s)(1-e(s))}{n_s(n_s-1)}. 
\end{align}
\end{proposition}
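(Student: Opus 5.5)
The plan is to condition on the treated count $M_s$ and use the uniform-subset structure of the second stage. Given $M_s=m$, the assignment $\mathbf D_s$ is uniform over the $\binom{n_s}{m}$ subsets of size $m$, so by symmetry every unit is equally likely to be treated. This gives
\[
\Pr(D_i=1\mid M_s=m)=\frac{\binom{n_s-1}{m-1}}{\binom{n_s}{m}}=\frac{m}{n_s}.
\]
Equivalently, since $\sum_i D_i=m$ and by exchangeability each $\mathbbm{E}[D_i\mid M_s=m]$ is the same, each equals $m/n_s$. For a pair $i\neq j$ the same counting argument gives
\[
\Pr(D_i=1,D_j=1\mid M_s=m)=\frac{\binom{n_s-2}{m-2}}{\binom{n_s}{m}}=\frac{m(m-1)}{n_s(n_s-1)}.
\]
Alternatively, note that $\sum_{i\neq j}D_iD_j=m(m-1)$ has $n_s(n_s-1)$ equiprobable terms. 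Both formulas remain valid at the boundary cases $m=0,1$, where the binomial coefficients with negative lower index vanish and the right-hand sides are zero.

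Next I apply the law of total probability over $\pi_s$, as in \eqref{two-stage mech}. This yields $\Pr(D_i=1)=\mathbbm{E}[M_s]/n_s$ and $\Pr(D_i=1,D_j=1)=\mathbbm{E}[M_s(M_s-1)]/(n_s(n_s-1))$. Since the $D_i$ are binary, $\mathrm{Cov}(D_i,D_j)=\Pr(D_i=1,D_j=1)-\Pr(D_i=1)\Pr(D_j=1)$, and the first covariance display follows directly.

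For \eqref{covariance formula}, I substitute $\mathbbm{E}[M_s]=n_se(s)$ and write $\mathbbm{E}[M_s(M_s-1)]=\mathrm{Var}(M_s)+n_s^2e(s)^2-n_se(s)$. Putting everything over the common denominator $n_s(n_s-1)$, the numerator becomes
\[
\mathrm{Var}(M_s)+n_s^2e^2-n_se-(n_s-1)n_se^2=\mathrm{Var}(M_s)+n_se^2-n_se=\mathrm{Var}(M_s)-n_se(1-e),
\]
which is the claimed formula.

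The only step needing care is the conditional pair-probability, in particular checking the small-$m$ edge cases and that $n_s\ge 2$ is implicitly required for $i\neq j$. Everything else is routine algebra. As a sanity check, the binomial case gives $\mathrm{Var}(M_s)=n_se(1-e)$ and hence zero covariance, consistent with the stratified Bernoulli discussion above.
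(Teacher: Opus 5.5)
Your proposal is correct and follows essentially the same route as the paper's proof. You condition on $M_s$, use uniform-subset counting to get $m/n_s$ and $\binom{n_s-2}{m-2}/\binom{n_s}{m}=m(m-1)/(n_s(n_s-1))$, average over $\pi_s$ by iterated expectations, and substitute $\mathbbm{E}[M_s(M_s-1)]=\mathrm{Var}(M_s)+n_s^2e(s)^2-n_se(s)$; your explicit handling of the cases $m\in\{0,1\}$ and of the requirement $n_s\ge 2$ adds a bit of rigor the paper leaves implicit.
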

Proposition \ref{prop:cov_two_step} shows that the two-stage design leads to assignments that are generally dependent. In particular, if $\mathbbm{E}[M_s]=n_se(s)$, then from \eqref{covariance formula}, the covariance of any pair of treatments is nonzero unless $\mathrm{Var}(M_s)=n_se(s)(1-e(s))$, which corresponds to the case where $M_s\sim Binomial(n_s,e(s))$, i.e., under a stratified Bernoulli assignment.

Therefore, using the two-stage treatment assignment mechanism proposed in Section \ref{2-stage mechanism} where we set $\mathbbm{E}[M_s]=n_se(s)$, dependence in treatment assignments within strata is reflected in changes in the dispersion of the treated count within each stratum, since $\mathrm{Var}(M_s)=n_s(n_s-1)\cdot\mathrm{Cov}(D_i, D_j)+n_se(s)(1-e(s))$.  We can therefore use the within-stratum variance of the treated count as a measure of deviations from the CAI. This motivates the sensitivity parameter introduced in the following section.

\subsection{The Sensitivity Analysis Model: Quantifying CAI Violation}
\label{subsec:edi_sensitivity_xi}

To index departures from the CAI condition, we measure the extent to which the treated count within each stratum is restricted or dispersed relative to the stratified Bernoulli (CAI) benchmark. 

In the spirit of \citet{rosenbaum2002observational}, we define a one-parameter family of deviations from CAI indexed by $\xi\ge1$ that bounds departures from the stratified Bernoulli assignment within each stratum. Specifically, we allow for violations of CAI by restricting the variance of the number of treated units in each stratum $s\in[K]$ to satisfy
\begin{equation}
\label{eq:xi_variance_band}
1
\;\le\;
\frac{\operatorname{Var}(M_s)}
{n_s e(s)(1-e(s))}
\;\le\;
\xi. 
\end{equation}

 If $\mathbbm{E}[M_s]=n_se(s)$, then the parameter $\xi$ quantifies the maximum proportional departure from the benchmark assignment mechanism under CAI, bounding the extent of dispersion of the treated count, which affects the dependence structure of treatment assignment. Therefore, we say treatment assignment satisfies the \textit{CAI Sensitivity Model} if $\mathbbm{E}[M_s]=n_se(s)$ and  \eqref{eq:xi_variance_band} holds.

Under the two-stage treatment assignment mechanism in Section \ref{2-stage mechanism}, where we set $\mathbbm{E}[M_s]=n_se(s)$, $\xi=1$ corresponds to the benchmark case in which CAI holds exactly and $M_s\sim\mathrm{Binomial}(n_s,e(s))$. 
From \eqref{covariance formula}, the inequality in \eqref{eq:xi_variance_band} can be written as
\[
0
\;\le\;
\operatorname{Cov}(D_i, D_j)
\;\le\;
(\xi-1)(n_s-1)^{-1}\cdot e(s)(1-e(s)),\,\,\forall\,\, i\neq j\in \{1,\dots, n_s\}.
\]
Thus, when $\xi>1$, the parameter enlarges the set of admissible treatment assignment distributions within each stratum. In particular, larger values of $\xi$ allow assignment mechanisms that exhibit stronger dependence between units' treatment assignments. The resulting admissible set, therefore, contains assignment laws that satisfy CAI as well as those that violate CAI. For example, when $\xi=2$, the cross-sectional covariance between units' treatment assignments may be as large as $(n_s-1)^{-1}e(s)(1-e(s))$ relative to the benchmark case in which treatment assignments satisfy CAI (i.e., conditional independence).

It is worth noting that all assignment mechanisms under the CAI Sensitivity Model satisfy the unconfoundedness condition for any $\xi \geq 1$. In particular, for any stratum $s \in [K]$ and unit $i \in \{1,\dots,n_s\}$,
$\Pr(D_i=1)= \mathbbm{E}[M_s]/n_s = e(s)$.
Thus, units within the same stratum have the same probability of receiving treatment regardless of whether CAI holds or is violated. This feature ensures that the proposed sensitivity analysis procedure isolates sensitivity to violations of CAI from sensitivity to hidden bias in treatment assignment.

\subsection{Test Statistic and the Procedure}
To implement the sensitivity analysis procedure characterized by the model in \eqref{eq:xi_variance_band}, we will consider classes of the two-stage assignment mechanism indexed by $\xi$. Within each stratum $s$, a treated count $M_s$ is drawn with mean $\mathbbm{E}[M_s]= n_s \hat{e}(s)$ and variances satisfying \eqref{eq:xi_variance_band} for a given value of $\xi$. Conditional on the realization of $M_s$, say $M_s=m_s$, treatment is assigned uniformly at random among all vectors with exactly $m_s$ treated units in stratum $s$. 

In particular, for a given $\xi>1$, we obtain a set of vectors of stratum-level variance of treated counts that satisfy \eqref{eq:xi_variance_band}. Consequently, for each vector of stratum-level variance of treated counts, we can simulate  the randomization distribution under $H_0$ of the stratified Wilcoxon rank sum test statistic \citep{wilcoxon1945individual} defined as
\begin{align}\label{sensitivity:statistic}
  \mathrm{T}((Y_i, W_i, D_i)_{i=1}^N)
:=
\sum_{s=1}^K \sum_{i:S_i=s} q_i D_i,  
\end{align}
where $q_i$ is the fixed within-stratum
ranks of the observed outcome of unit $i$.\footnote{Alternative test statistics could be considered; we focus on the stratified Wilcoxon rank-sum statistic because it is robust to outliers \citep{imbens2015causal}.} Conditional on the observed data, note that the distribution of the test statistic depends solely on the assignment mechanism, which in turn is governed by vectors of stratum-level variance of treated counts, which depends on the value of the sensitivity parameter $\xi$.

Since each $\xi$ value greater than one produces several randomization tests and their corresponding $P$-values, for each $\xi$, we compute lower and upper bounds of the set of $ P$-values by solving
$$\inf_{\mathcal{A} \in \mathcal{A}_\xi}
\Pr_{\mathcal{A}}\!\left(T \ge T_{\mathrm{obs}}\right)\quad \text{and}
\quad
\sup_{\mathcal{A} \in \mathcal{A}_\xi}
\Pr_{\mathcal{A}}\!\left(T \ge T_{\mathrm{obs}}\right),$$
where $\mathcal{A}_\xi$ denote all the assignment mechanisms consistent with \eqref{eq:xi_variance_band} for $\xi$ and $\Pr_{\mathcal{A}}(\cdot)$ denotes a probability distribution over the assignment mechanism $\mathcal{A}$.
As $\xi$ increases, the feasible set of assignment mechanisms expands, yielding weakly larger upper bounds and weakly smaller lower bounds on the $P$-value.

To summarize, the foregoing sensitivity analysis addresses the following question: \emph{When the unconfoundedness condition holds, how large must departures from CAI---measured by the extent to which the dispersion of the treated count within strata differ from the stratified Bernoulli benchmark---be in order to overturn inference based on the sharp null of no direct effect and no interference?} As in Rosenbaum’s framework, we summarize robustness by the robustness value\footnote{To the best of our knowledge, the term robustness value was coined by \cite{cinelli2020making}.} $\xi^*$, defined as the smallest value of $\xi$ for which the upper bound on the $P$-value exceeds a pre-specified significance level $\alpha$. Larger values of $\xi^*$ indicate greater robustness of the conclusions to violations of CAI.

For ease of exposition, we outline the proposed sensitivity analysis in Procedure \ref{alg:sensitivity}. Also, see a flow chart of the procedure in Figure \ref{sen flowchart} of Section~\ref{Implementation Guide} of the Supplementary Material. The following remark summarizes the use and interpretation of the proposed sensitivity analysis. Technical details on implementation are deferred to Section~\ref{Implementation Guide} of the Supplementary Material.
\begin{remark}[Usage and Interpretation]\label{re:implem}\qquad\\
 The proposed sensitivity analysis assesses how conclusions drawn from the data depend on the CAI condition. The procedure begins by evaluating the randomization-based $P$-value under the benchmark case $\xi=1$, corresponding to assignment mechanisms that satisfy CAI.

If $p(\xi=1)\le \alpha$ (the $P$-value when $\xi=1$ is less than or equal to the nominal size), the data provide evidence against the sharp null. In this case, the researcher may proceed by increasing $\xi$ to assess how sensitive the data-based conclusion is to potential violations of CAI. In particular, the resulting $P$-value bounds quantify how inference changes as progressively larger deviations from CAI are allowed.

In contrast, if $p(\xi=1)>\alpha$, the sensitivity analysis is not informative, as the upper bound of the $P$-values is increasing in $\xi$, implying that the testing decision cannot be overturned by allowing larger deviations from CAI. \qedsymbol{}
\end{remark}

\RestyleAlgo{ruled}
\SetKwComment{Comment}{/* }{ */}
\renewcommand{\algorithmcfname}{Procedure}
\begin{algorithm}[!tb]
\caption{Sensitivity to CAI under Unconfoundedness}\label{alg:sensitivity}
\KwData{$(\mathbf{Y},\mathbf{W}, \mathbf{D}):=(Y_i, W_i, D_i)_{i=1}^N$}
\KwResult{Bounds of $P$-values for each $\xi$ value; Robustness value $\xi^*$. }

\vspace{0.2cm}

1. \textbf{Stratification:} Partition units into strata $S_i \in \{1,\dots,K\}$ based on observed\\\,\,\,\, \,covariates $W_i$ and estimate stratum-level propensity scores $\hat e(s)$.
\\
\vspace{0.4cm}
2. \textbf{Observed test statistic:}  
Use the stratified test statistic \eqref{sensitivity:statistic} to compute the \\\quad \,\, observed value $T_{\mathrm{obs}} = T(\mathbf{Y},\mathbf{W}, \mathbf{D})$.\\
\vspace{0.4cm}
3. \textbf{Benchmark assignment (CAI):}  
Under CAI, treatment assignments are \\ \quad\,\,conditionally independent within strata. The treated count  $M_s = \sum_{i:S_i=s} D_i$ and\\\quad \,\,satisfies
$\mathbbm{E}[M_s] =   n_s \hat{e}(s)$ and $\mathrm{Var}_B(M_s) = n_s \hat e(s)\{1-\hat e(s)\}.$\\
\vspace{0.2cm}
\quad\,\,Conditional on $M_s$, treatment is assigned uniformly at random among all subsets\\ \quad\,\,of size $M_s$ within the stratum.  
The Fisher $P$-value under CAI is\\
\quad\,\,\,$p(1) = \Pr_{\mathcal{A}_1}\!\left(T \ge T_{\mathrm{obs}}\right),$
where $\mathcal{A}_1$ denotes the stratified Bernoulli assignment\\\quad\,\, mechanism.\\
\vspace{0.4cm}
4. \textbf{Modeling deviations from CAI:}  
To allow for dependence in treatment \\ \quad\,\,assignment while preserving marginal treatment rates, we consider assignment \\ \quad\,\,mechanisms such that
$\mathbbm{E}[M_s] = \lfloor  n_s \hat{e}(s)\rfloor$, but the variance of $M_s$ satisfies
\[\mathrm{Var}_B(M_s)
\;\le\;
\mathrm{Var}(M_s)
\;\le\;
\xi\cdot\mathrm{Var}_B(M_s),
s=1,\dots,K.
\]
\quad\,Conditional on $M_s$, treatment is assigned uniformly at random within the\\ \quad\,\,\,\,stratum, so that cross-unit dependence arises solely through the dispersion of \\ \quad\,\,\,\,the treated count.\\
\vspace{0.4cm}
5. \textbf{Admissible assignment mechanisms for $\xi>1$:}  
For each $\xi>1$, let $\mathcal{A}_\xi$ denote \\ \quad\,\,\,the class of stratified assignment mechanisms satisfying the mean and variance\\\quad\,\,\,restrictions in Step 4.\\
\vspace{0.4cm}
6. \textbf{Worst-case randomization inference:}  
For each $\xi>1$, compute bounds on \\ \quad\,\,\,the  $P$-value:
$\underline p(\xi)
=
\inf_{\mathcal{A} \in \mathcal{A}_\xi}
\Pr_{\mathcal{A}}\!\left(T \ge T_{\mathrm{obs}}\right)$ and $\bar{p}(\xi)
=\sup_{\mathcal{A} \in \mathcal{A}_\xi}\Pr_{\mathcal{A}}\!\left(T \ge T_{\mathrm{obs}}\right).
$\\

\vspace{0.4cm}
7. \textbf{Compute the robustness value:}  
Compute $\xi^* = \inf \left\{\xi : \overline p(\xi) > \alpha \right\}$
\end{algorithm}

\subsection{Realistic Simulation Study and Empirical Application Leveraging National Supported Work (NSW) Program} \label{sen appl}

In this section, we use the male subsample of the LaLonde job-training dataset \citep{lalonde1986evaluating}---distributed by the \texttt{MatchIt} package in \textsf{R} \citep{ho2018package}---to illustrate how the proposed sensitivity analysis procedure can be applied. The data combine male participants in the National Supported Work (NSW) program with a comparison group of non-participants, and include post-program earnings in 1978 as the primary outcome. Treatment status indicates participation in the training program.  Based on the specification in Robert Lalonde's paper, we condition on a standard set of pre-treatment covariates: age, years of schooling, race, marital status, an indicator for lacking a high-school degree, and lagged earnings in 1974 and 1975. 

As we discussed in Example \ref{eg::lalonde}, interference is plausible in the setting of this study. Thus, we conduct a simulation study calibrated to the data. Moreover, we apply the sensitivity analysis procedure to the observed data.

\paragraph{Simulation Study:}
To assess the finite-sample performance of the proposed procedure under realistic conditions, the simulation design is calibrated to the LaLonde sample to preserve key features of the empirical data. Covariates, treatment assignment, and outcomes are generated to closely mimic the observed structure.

Covariates are generated by drawing observations with replacement from the empirical distribution of the LaLonde sample. Each simulated unit is assigned a vector of pre-treatment characteristics sampled from the observed covariate matrix, which preserves the marginal distributions and joint dependence structure of the covariates.
Outcomes are generated using a model calibrated to the LaLonde sample.  Specifically, a linear regression of the outcome on covariates is estimated using control units to obtain estimates of the regression coefficients on $W_i$, ${\gamma}$, and a residual standard deviation $\hat{\sigma}$. For each simulated unit, the baseline untreated outcome is generated as
\vspace{-0.5cm}
\[
Y_i(0) = W_i^\top \hat{\gamma} + \varepsilon_i,
\vspace{-0.5cm}
\]
where $\varepsilon_i$ has zero mean and variance $\hat{\sigma}^2$.
The observed outcome is then given by $ Y_i = Y_i(0) + \tau D_i + \zeta \Pi_i(\mathbf{D}_{-i}),$
where $\Pi_i(\mathbf{D}_{-i}) = (N-1)^{-1} \sum_{j \neq i} D_j$.  $\tau$ denotes the constant direct effect and $\zeta=3000$ captures constant spillover effects. Thus, outcomes depend on $\mathbf{D}$.

To examine the performance of the sensitivity analysis under alternative outcome environments, we consider two specifications for the disturbance term:
\begin{enumerate}
\item \textit{Gaussian specification.} The disturbance is generated as $\varepsilon_i \sim N(0,\hat{\sigma}^2)$, so that the baseline untreated outcomes are approximately normally distributed within each propensity score stratum.  In this case, within-stratum ranks are relatively stable, implying that changes in the dispersion of the treated count around its mean have a limited effect on the Wilcoxon test statistic. We set the constant direct effect $\tau=8000$ under this specification.

\item \textit{Heavy-tailed specification.} The disturbance is generated as $\varepsilon_i = u_i + Z_i \cdot c\hat{\sigma},$
where $u_i \sim N(0,\hat{\sigma}^2)$ and $Z_i \sim \text{Bernoulli}(p_{\mathrm{spike}})$ independently. In the simulations, $p_{\mathrm{spike}}=0.01$ and $c=10$, so that a small fraction of units receive large positive shocks. This specification generates extreme baseline untreated outcomes within strata, making the rank-based statistic highly sensitive to changes in the variance of the treated count within strata. Consequently, departures from CAI can substantially affect the test statistic.  We set the constant direct effect $\tau=4000$ under this specification.\footnote{The relatively large magnitude of the direct effect reflects the scale of the outcome variable, which is measured in thousands of dollars.}
\end{enumerate}
Overall, these specifications generate environments with interference, significant direct effects, and varying sensitivity to dispersion in treatment assignment within strata.

To implement the proposed sensitivity analysis procedure for each outcome specification, we estimate propensity scores using a logistic regression of treatment status on the observed covariates, and construct strata by discretizing the estimated propensity scores into $K=6$ quantile bins (reducing the number of bins only if necessary to ensure nonempty strata). Within each stratum, outcomes are ranked, and the test statistic is computed as the sum of the treated ranks across strata. The sensitivity parameter $\xi$ is evaluated over the grid $\{1,1.25,1.5,2,3,5,8\}$. Fisher $P$-value bounds are computed using Monte Carlo simulation. 
Baseline Fisher $P$-values at $\xi=1$ (under CAI) are obtained using 50000 simulation draws. For $\xi>1$ (departures of CAI), Fisher $P$-value bounds are obtained by first computing the assignment distribution within each stratum that maximizes (or minimizes) the Fisher $P$-value subject to the mean and variance constraints implied by $\xi$. This optimization is implemented iteratively, using Monte Carlo simulation with 2500 draws per iteration to approximate the objective function (See Section~\ref{Implementation Guide} of the Supplementary Material for intuition and details of the optimization algorithm). After convergence, the $P$-value bound is estimated using 30000 draws from the resulting worst-case assignment distribution.

\begin{figure}[h]
    \centering
    \includegraphics[width=0.8\linewidth]{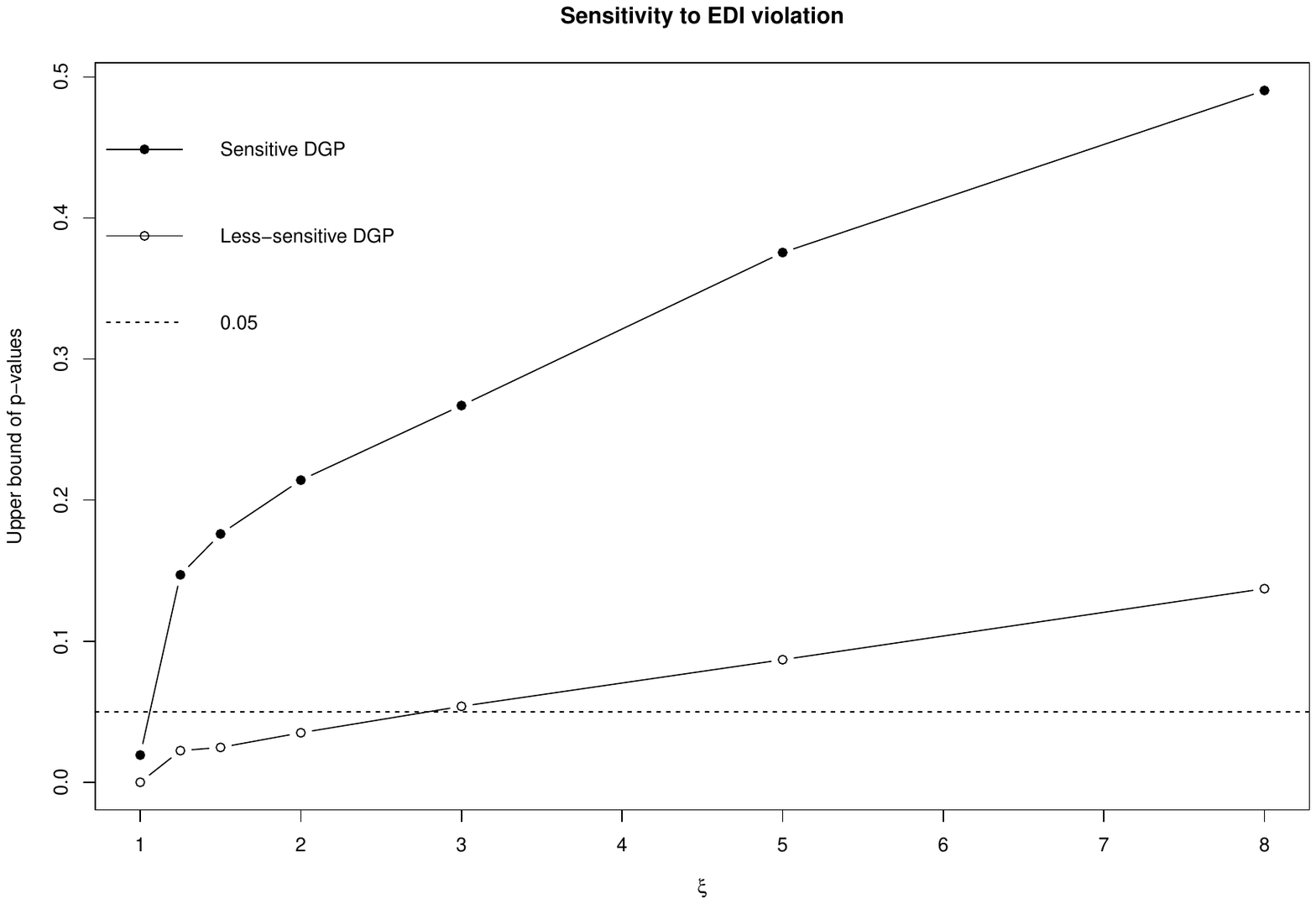}
    \caption{Sensitivity to CAI violation using Calibrated Data}
    \label{fig:sensitivity}
\end{figure}

Figure~\ref{fig:sensitivity} reports the upper bounds of the Fisher $P$-values as a function of the sensitivity parameter $\xi$ for the two outcome specifications. Under the Gaussian specification (less sensitive design), the Fisher $P$-value at $\xi=1$ is less than 0.01 and exceeds 5\% prespecified significance level at $\xi=3$. This implies a robustness value of approximately $\xi \approx 2.9$. 

In contrast, under the heavy-tailed specification (sensitive design), the Fisher $P$-value at $\xi=1$ is 0.019 and exceeds the 5\% threshold already at $\xi=1.25$, yielding a robustness value close to $\xi \approx 1.1$. 

These results show that the test's sensitivity to deviations from CAI depends on the outcome distribution within propensity score strata. When outcomes are normally distributed, inference remains relatively robust to moderate CAI violations. In contrast, when outcomes contain extreme values, the rank-based statistic becomes substantially more sensitive to such deviations, leading to a much smaller robustness value.

\paragraph{Empirical Application:} We apply the proposed sensitivity analysis to the actual male subsample of the LaLonde job-training dataset. The implementation follows the same implementation parameters like the number of randomizations  as in the simulation study, with the sensitivity parameter $\xi$ evaluated over the grid $\{1,1.25,1.5\}$. Table~\ref{tab:lalonde-edi-ps} reports the resulting bounds on the $P$-value under the CAI Sensitivity Model.
\begin{table}[H]
\centering
\caption{CAI Sensitivity Analysis: LaLonde Data}
\label{tab:lalonde-edi-ps}
\begin{tabular}{cccc}
\toprule
$\xi$ & Lower bound of $P$-value & Upper bound of $P$-value \\
\midrule
1.00 & 0.537 & 0.537 \\
1.25 & 0.200 & 0.673 \\
1.50 & 0.206 & 0.758 \\
\bottomrule
\end{tabular}
\end{table}

When $\xi=1$, corresponding to the benchmark assignment mechanism under which CAI holds, the $P$-value is approximately 0.537. Thus, the sensitivity analysis using the observed LaLonde data is uninformative (see Remark \ref{re:implem}). An application to a different data set, where the procedure is informative, is presented in Section~\ref{app::application} of the Supplementary Material.

\section{A Monte Carlo Study to Investigate Bias}\label{sec:montecarlo}
We conduct a Monte Carlo simulation to illustrate the behavior of the IPW estimator, which is unbiased for the selection-on-observables identification formula in Section~\ref{sec:mapping} in the presence of interference, and under CAI (Assumption \ref{ass:CAI}). The data-generating process allows outcomes to depend on both own treatment and the treatment assignments of all other units, thereby violating ITR while preserving a well-defined notion of ADE.

In each replication (1000 replications in total), units $i\in[N]=[500]$ are endowed with covariates $W_i\sim\mathcal N(0,1)$. Treatment assignment follows a logistic model. Specifically, treatment is assigned according to
\[
D_i = \mathbbm{1}\!\left\{ U_i \le 
\frac{\exp(a_0+a_1 W_i + \rho\,\eta)}{1+\exp(a_0+a_1 W_i + \rho\,\eta)}
\right\},
\]
where $U_i\sim\mathrm{Uniform}(0,1)$, $\eta\sim \mathcal N(4000,1)$ is a common shock shared by all units in the population, $a_0=-0.2$, and $a_1=0.8$. The parameter $\rho$ governs the strength of dependence in treatment assignment across units. When $\rho=0$, treatment is assigned independently across units conditional on $W_i$, so that the distribution of others' treatment assignments $\mathbf D_{-i}$ is independent of one's own treatment given covariates, and CAI (Assumption \ref{ass:CAI}) holds. When $\rho\neq 0$, the common shock induces correlation between $D_i$ and $\mathbf D_{-i}$ even after conditioning on $W_i$, violating CAI (Assumption~\ref{ass:CAI}).

Observed outcome is generated according to the model
\[
Y_i
=
\alpha_0+\alpha_1 W_i+\alpha_2  \Pi_i(\mathbf{D}_{-i})
+
(\tau_0+\tau_1 \Pi_i(\mathbf{D}_{-i}))D_i
+
\varepsilon_i,
\]
where $\Pi_i(\mathbf{D}_{-i})$ is the leave-one-out mean of treatment assignments, $\alpha_0=0.0$, $\alpha_1=1.0$, $\tau_0=0.5$, $\tau_1=1.0$, and $\varepsilon_i$ is an idiosyncratic error term which follows a standard normal distribution. 

The estimand of interest is the ADE, which coincides with the causal object identified by the population analog of the IPW estimator when the CAI condition in Assumption~\ref{ass:CAI} holds.

We compute the IPW estimator using the true propensity scores $e(W_i)=\Pr(D_i=1\mid W_i)$\footnote{We use the true propensity score to focus on the failure of the CAI condition.}. For each design, we report the Monte Carlo mean of the estimator, the true ADE target, the bias,  and the root mean squared error (RMSE).


Table~\ref{tab:mc-ipw} reports Monte Carlo results of the simulation exercise under varying degrees of violation of the CAI condition.

\begin{table}[H]
\centering
\caption{Monte Carlo Results: IPW Estimation under Interference}
\label{tab:mc-ipw}
\begin{tabular}{lcccc}
\toprule
Design &
Mean(IPW) &
ADE Target &
Bias &
RMSE \\
\midrule
Assumption~\ref{ass:CAI} holds ($\rho=0$)
& 0.960
& 0.956
& 0.004
& 0.176
\\
Assumption~\ref{ass:CAI} fails \,\,($\rho=0.5$)
& 1.025
& 0.958
& 0.067
& 0.636
 \\
Assumption~\ref{ass:CAI} fails \,\,($\rho=1.0$)
& 1.220
& 0.963
& 0.257
& 1.130
 \\ 
Assumption~\ref{ass:CAI} fails \,\,($\rho=1.5$)
& 1.383
& 0.966
& 0.416
& 1.489
 \\ 
\bottomrule
\end{tabular}
\begin{flushleft}
\footnotesize
\emph{Note:} The table reports Monte Carlo results over 1000 replications.
The ADE target corresponds to $\mathbbm{E}[\Delta_i(\mathbf D_{-i})]$. Bias is computed as the difference between the Monte Carlo mean of the IPW estimator and the ADE target. 
\end{flushleft}
\end{table}

When Assumption~\ref{ass:CAI} holds ($\rho=0$), the ITR-based IPW estimator is essentially unbiased for the ADE. The Monte Carlo mean of the estimator (0.960) is very close to the ADE target (0.956), yielding negligible bias and a small RMSE. This confirms the theoretical result that, under arbitrary interference, ITR-based estimators of ATE under the selection on observable design are unbiased for the ADE when treatment assignment is conditionally independent across units.

As Assumption~\ref{ass:CAI} is progressively violated, the performance of the IPW estimator deteriorates. For $\rho=0.5$, the estimator exhibits a noticeable upward bias of 0.067, despite the ADE target remaining essentially unchanged. Increasing $\rho$ further amplifies this bias: when $\rho=1.0$, the bias rises to 0.257, and when $\rho=1.5$, it exceeds 0.4. The RMSE increases sharply with $\rho$, reflecting both growing bias and increased dispersion of the estimator.

These patterns illustrate that the failure of Assumption~\ref{ass:CAI} induces systematic differences in others' treatment assignment distributions faced by treated and untreated units that cannot be corrected by conditioning on observed covariates alone. As a result, the IPW identification formula no longer equals the ADE but instead conflates it with bias.

\section{Conclusion}\label{sec: conclusion}
This paper examines the interpretation of standard identification formulas when the ITR assumption is relaxed, and the outcomes may exhibit arbitrary interference. A central message of the analysis is that extending identification arguments developed under no interference to settings with interference necessarily requires additional restrictions on the treatment assignment mechanism for the interpretability of ITR-based identification formulas. In particular, while interference in outcomes may be unrestricted, meaningful interpretation of ITR-based identification formulas hinges on limiting the dependence of one's own treatment assignment on others’ assignments.

We show that when treatment assignment dependence is suitably restricted, the identification formulas underlying common quasi-experimental designs---such as selection-on-observables, instrumental variables, difference-in-differences, and regression discontinuity---continue to identify meaningful causal parameters in the presence of arbitrary interference. Specifically, under (un)conditional assignment independence, the identification formulas derived under ITR recover ADEs. Motivated by this result, we propose a novel sensitivity analysis framework that quantifies deviations from conditional assignment independence under the selection-on-observables design.

The identification results have direct implications for estimation and inference. Estimators developed under ITR---such as outcome regression, matching, IPW, two-stage least squares, and difference-in-differences regressions---remain unbiased for the ADEs identified by their corresponding formulas, provided assignment dependence is restricted in the manner required by each design. These functions of the data, therefore, retain their usual status as valid estimators once interpreted as targeting ADEs rather than ATEs or ATTs.

Inference, however, is more complicated. Arbitrary interference induces unrestricted cross-sectional dependence, invalidating variance formulas and asymptotic approximations derived under ITR. Standard variance estimators need not be consistent in the presence of arbitrary interference, and the limiting distribution of standardized estimators may fail to be normal. Restoring reliable inference, therefore, requires alternative approaches, such as permutation-based inference, exact Hoeffding-type confidence intervals \citep{tchetgen2012causal}. We leave the development of such procedures for future research.

\onehalfspacing
\bibliographystyle{chicago} 
\bibliography{references}  
\doublespacing
\newpage
\appendix

\pagebreak
\begin{center}
\Large{Supplementary Material to ``Identification under Interference: The Role of Treatment Assignment Independence."}\label{supp mat}\\
\vspace{0.5cm}
\large{
Julius Owusu  \& Monika Avila M\'arquez}
\end{center}

\setcounter{section}{0}
\setcounter{equation}{0}
\setcounter{figure}{0}
\setcounter{table}{0}
\setcounter{page}{1}
\makeatletter
\renewcommand{\theequation}{S\arabic{equation}}
\renewcommand{\thefigure}{S\arabic{figure}}
\renewcommand{\thetable}{S\arabic{table}}
\renewcommand{\bibnumfmt}[1]{[S#1]}
\renewcommand{\citenumfont}[1]{S#1}
\renewcommand{\thesubsection}{\thesection.\arabic{subsection}}

\section{Proof of Theorems and Proposition}\label{proofs}
\paragraph{Proof of Theorem \ref{bias-so}}
\begin{proof}
Given that $W_i$, we have 
\[
\Phi_{\mathrm{ITR}}(P)
=
\mathbbm{E}\big[ \mu_1(W_i) - \mu_0(W_i) \big],
\qquad
\mu_d(w) := \mathbbm{E}[Y_i \mid D_i=d,\, W_i=w].
\]
Fix $i$ and $w$ and work under arbitrary interference. Then, due to the \textit{consistency} assumption,
\[
\mu_d(w)
=
\mathbbm{E}[Y_i \mid D_i=d,\, W_i=w]
=
\mathbbm{E}\big[ Y_i({D}_i,\mathbf{D}_{-i}) \mid D_i=d,\, W_i=w \big]
=
\mathbbm{E}\big[ Y_i(d,\mathbf{D}_{-i}) \mid D_i=d,\, W_i=w \big]
\]
Apply the law of iterated expectations with respect to $\mathbf{D}_{-i}$:
\begin{align}
\mu_d(w)
&=
\sum_{\mathbf{d}_{-i}}
\mathbbm{E}\big[ Y_i(D_i,\mathbf{D}_{-i}) \mid D_i=d,\, W_i=w,\, \mathbf{D}_{-i}=\mathbf{d}_{-i} \big]\,
\mathbbm{P}(\mathbf{D}_{-i}=\mathbf{d}_{-i} \mid D_i=d,\, W_i=w) \notag\\
&=
\sum_{\mathbf{d}_{-i}}
\mathbbm{E}\big[ Y_i(d,\mathbf{d}_{-i}) \mid D_i=d,\, W_i=w,\, \mathbf{D}_{-i}=\mathbf{d}_{-i} \big]\,
\mathbbm{P}(\mathbf{D}_{-i}=\mathbf{d}_{-i} \mid D_i=d,\, W_i=w) \notag\\
&=
\sum_{\mathbf{d}_{-i}}
\mathbbm{E}\big[ Y_i(d,\mathbf{d}_{-i}) \mid W_i=w\big]\,
\mathbbm{P}(\mathbf{D}_{-i}=\mathbf{d}_{-i} \mid D_i=d,\, W_i=w) \notag\\
&=
\sum_{\mathbf{d}_{-i}}
\bar{Y}_i(d,\mathbf{d}_{-i};w)\cdot \,
\mathbbm{P}(\mathbf{D}_{-i}=\mathbf{d}_{-i} \mid D_i=d,\, W_i=w)\label{soo: expansion}
\end{align}
where  the third equality is due to Assumption \ref{ass: uncon}, with
$\bar{Y}_i(d,\mathbf{d}_{-i};w):= \mathbbm{E}\big[ Y_i(d,\mathbf{d}_{-i}) \mid W_i=w \big]$ and $\mathbbm{P}(\cdot):=\Pr(\cdot)$.

Hence
\begin{align}
\mu_1(w)-\mu_0(w)
&=
\sum_{\mathbf{d}_{-i}}
\bar{Y}_i(1,\mathbf{d}_{-i};w) 
\mathbbm{P}(\mathbf{D}_{-i}=\mathbf{d}_{-i} \mid D_i=1,\, W_i=w) \nonumber \\
&\quad
- \sum_{\mathbf{d}_{-i}} \bar{Y}_i(0,\mathbf{d}_{-i};w) 
\mathbbm{P}(\mathbf{D}_{-i}=\mathbf{d}_{-i} \mid D_i=0,\, W_i=w)
.
\label{eq:decomp}
\end{align}

\textbf{Part (i):}\\
   First, we prove the equation in \eqref{eq:decomp-adtc-x}.  Adding and subtracting $$\sum_{\mathbf{d}_{-i}} \bar{Y}_i(0,\mathbf{d}_{-i};w)
\mathbbm{P}(\mathbf{D}_{-i}=\mathbf{d}_{-i} \mid D_i=1,\, W_i=w)$$ to equation \eqref{eq:decomp}, we have 
   \begin{align}
\mu_1(w)-\mu_0(w)
&=
\sum_{\mathbf{d}_{-i}}
\big( \bar{Y}_i(1,\mathbf{d}_{-i};w) - \bar{Y}_i(0,\mathbf{d}_{-i};w) \big)\,
\mathbbm{P}(\mathbf{D}_{-i}=\mathbf{d}_{-i} \mid D_i=1,\, W_i=w) \nonumber \\
&\quad
+ \sum_{\mathbf{d}_{-i}} \bar{Y}_i(0,\mathbf{d}_{-i};w) \Big[
\mathbbm{P}(\mathbf{D}_{-i}=\mathbf{d}_{-i} \mid D_i=1,\, W_i=w)
-
\mathbbm{P}(\mathbf{D}_{-i}=\mathbf{d}_{-i} \mid D_i=0,\, W_i=w)
\Big].
\label{eq:decomp 2}
\end{align}
Thus, marginalizing over $W_i$, we have 
 \begin{align}
&\Phi_{\mathrm{ITR}}(P)
=
\sum_{w\in\mathcal{W}}\sum_{\mathbf{d}_{-i}}
\big( \bar{Y}_i(1,\mathbf{d}_{-i};w) - \bar{Y}_i(0,\mathbf{d}_{-i};w) \big)\,
\mathbbm{P}(\mathbf{D}_{-i}=\mathbf{d}_{-i} \mid D_i=1,\, W_i=w)\cdot\Pr(W_i=w) \nonumber \\
&
+ \sum_{w\in\mathcal{W}}\sum_{\mathbf{d}_{-i}} \bar{Y}_i(0,\mathbf{d}_{-i};w) \Big[
\mathbbm{P}(\mathbf{D}_{-i}=\mathbf{d}_{-i} \mid D_i=1,\, W_i=w)
-
\mathbbm{P}(\mathbf{D}_{-i}=\mathbf{d}_{-i} \mid D_i=0,\, W_i=w)
\Big]\cdot\Pr(W_i=w) \nonumber\\
&\quad \quad=\sum_{w\in\mathcal{W}}\sum_{\mathbf{d}_{-i}}
 \mathbbm{E}\big[ Y_i(1,\mathbf{d}_{-i}) - Y_i(0,\mathbf{d}_{-i}) \mid D_i=1, W_i=w\big] 
\mathbbm{P}(\mathbf{D}_{-i}=\mathbf{d}_{-i} \mid D_i=1,\, W_i=w)\cdot\Pr(W_i=w)\nonumber\\
&\quad \quad\quad\quad\quad \quad\quad\quad\quad\quad\quad\quad\quad\quad\quad \,\,\,\,\,+ B_{_{ADE}}^1\nonumber\\
&\quad \quad
=\mathbbm{E}\Big[\, \mathbbm{E}\!\left[ \Delta_i(\mathbf{D}_{-i}) \mid D_i=1, W_i\right] \Big] + B_{_{ADE}}^1,
\label{eq:bias 1}
\end{align}
where the second equality follows from the fact that under Assumption \ref{ass: uncon}, $\bar{Y}_i(1,\mathbf{d}_{-i};w) - \bar{Y}_i(0,\mathbf{d}_{-i};w)=\mathbbm{E}\big[ Y_i(1,\mathbf{d}_{-i}) - Y_i(0,\mathbf{d}_{-i}) \mid  W_i=w\big]=\mathbbm{E}\big[ Y_i(1,\mathbf{d}_{-i}) - Y_i(0,\mathbf{d}_{-i}) \mid D_i=1, W_i=w\big].$ We call $B_{_{ADE}}^1 =\sum_{w\in\mathcal{W}}\sum_{\mathbf{d}_{-i}} \bar{Y}_i(0,\mathbf{d}_{-i};w) \Big[
\mathbbm{P}(\mathbf{D}_{-i}=\mathbf{d}_{-i} \mid D_i=1,\, W_i=w)
-
\mathbbm{P}(\mathbf{D}_{-i}=\mathbf{d}_{-i} \mid D_i=0,\, W_i=w)
\Big]\cdot\Pr(W_i=w)$.

Next, we prove the equality in \eqref{eq:decomp-adtt-x}.  Adding and subtracting
$$\sum_{\mathbf{d}_{-i}} \bar{Y}_i(1,\mathbf{d}_{-i};w)
\mathbbm{P}(\mathbf{D}_{-i}=\mathbf{d}_{-i} \mid D_i=0,\, W_i=w)$$ to equation \eqref{eq:decomp}, we have 
\begin{align*}
\mu_1(w)-\mu_0(w)
&=
\sum_{\mathbf{d}_{-i}}
\big( \bar{Y}_i(1,\mathbf{d}_{-i};w) - \bar{Y}_i(0,\mathbf{d}_{-i};w) \big)\,
\mathbbm{P}(\mathbf{D}_{-i}=\mathbf{d}_{-i} \mid D_i=0,\, W_i=w) \nonumber \\
&\quad
+ \sum_{\mathbf{d}_{-i}} \bar{Y}_i(1,\mathbf{d}_{-i};w) \Big[
\mathbbm{P}(\mathbf{D}_{-i}=\mathbf{d}_{-i} \mid D_i=0,\, W_i=w)
-
\mathbbm{P}(\mathbf{D}_{-i}=\mathbf{d}_{-i} \mid D_i=1,\, W_i=w)
\Big].
\end{align*}
Marginalizing over $W_i$ and applying Assumption \ref{ass: uncon}, we have 
 \begin{align}
\Phi_{\mathrm{ITR}}(P)
=\mathbbm{E}\Big[\, \mathbbm{E}\!\left[ \Delta_i(\mathbf{D}_{-i}) \mid D_i=0, W_i\right] \Big] + B_{_{ADE}}^0,
\label{eq:bias 2}
\end{align}

with $B_{_{ADE}}^0$ defined in Theorem \ref{bias-so}.

Finally, we prove the equation in \eqref{eq:decomp-ade-x}.
Fix a unit $i$.
Let $\mathbf d'_{-i}\in\{0,1\}^{N-1}$ be another arbitrary reference treatment vector deleting unit $i$ treatment, where $\mathbf d'_{-i}\neq \mathbf d_{-i}$.
From \eqref{soo: expansion}, recall that the ITR identification functional
\[
\Phi_{\mathrm{ITR}}(P)
:=
\mathbbm{E}\!\left[\mathbbm{E}[Y_i\mid D_i=1,W_i]-\mathbbm{E}[Y_i\mid D_i=0,W_i]\right]
\]
can be written as
\begin{align}
\Phi_{\mathrm{ITR}}(P)
&=
\sum_{w\in\mathcal W} \Pr(W_i=w)
\left(
\sum_{\mathbf d_{-i}}\bar{Y}_i(1,\mathbf{d}_{-i};w)\,P( \mathbf d_{-i};1,w)
-
\sum_{\mathbf d_{-i}}\bar{Y}_i(0,\mathbf{d}_{-i};w)\,P( \mathbf d_{-i};0,w)
\right),
\label{eq:phi-expand}
\end{align}
where $P( \mathbf d_{-i};d,w):=\Pr(\mathbf{D}_{-i}=\mathbf{d}_{-i} \mid D_i=d,\, W_i=w).$

\medskip
Define
\[
\mathrm{ADE}
:=
\mathbbm{E}\!\left[\mathbbm{E}[\Delta_i(\mathbf D_{-i})\mid W_i]\right],
\qquad
\Delta_i(\mathbf D_{-i})
:=Y_i(1,\mathbf D_{-i})-Y_i(0,\mathbf D_{-i}).
\]
In terms of conditional means and $P( \mathbf d_{-i};w):=\Pr(\mathbf{D}_{-i}=\mathbf{d}_{-i} \mid  W_i=w)$, we have
\begin{align}
\mathrm{ADE}
&=
\sum_{w\in\mathcal W} \Pr(W_i=w) \Bigg(
\sum_{\mathbf d_{-i}}
\Big(\bar{Y}_i(1,\mathbf{d}_{-i};w)-\bar{Y}_i(0,\mathbf{d}_{-i};w)\Big)\,
P( \mathbf d_{-i};w)\Bigg).
\label{eq:ade-expand}
\end{align}

\medskip
Add and subtract  \eqref{eq:ade-expand} to
\eqref{eq:phi-expand}:
\begin{align}
\Phi_{\mathrm{ITR}}(P)
&=
\sum_{w\in\mathcal W} \Pr(W_i=w)\Bigg(
\sum_{\mathbf d_{-i}}
\Big(\bar{Y}_i(1,\mathbf{d}_{-i};w)-\bar{Y}_i(0,\mathbf{d}_{-i};w)\Big)\,
P( \mathbf d_{-i};w)\Bigg)
\notag\\
&\quad
+
\sum_{w\in \mathcal{W}}\Pr(W_i=w)\Bigg(\sum_{\mathbf d_{-i}}
\bar{Y}_i(1,\mathbf{d}_{-i};w)\Big(P( \mathbf d_{-i};1,w)-P( \mathbf d_{-i};w)\Big)\Bigg)
\notag\\
&\quad
-
\sum_{w\in \mathcal{W}}\Pr(W_i=w)\Bigg(\sum_{\mathbf d_{-i}}
\bar{Y}_i(0,\mathbf{d}_{-i};w)\Big(P( \mathbf d_{-i};0,w)-P( \mathbf d_{-i};w)\Big)\Bigg).
\label{eq:phi-split}
\end{align}
 Hence,
\begin{align}
\Phi_{\mathrm{ITR}}(P)
&=
\mathrm{ADE}
+
\sum_{w\in \mathcal{W}}\Pr(W_i=w)\Bigg(\sum_{\mathbf d_{-i}}
\bar{Y}_i(1,\mathbf{d}_{-i};w)\Big(P( \mathbf d_{-i};1,w)-P( \mathbf d_{-i};w)\Big)\Bigg)
\notag\\
&\quad
-
\sum_{w\in \mathcal{W}}\Pr(W_i=w)\Bigg(\sum_{\mathbf d_{-i}}
\bar{Y}_i(0,\mathbf{d}_{-i};w)\Big(P( \mathbf d_{-i};0,w)-P( \mathbf d_{-i};w)\Big)\Bigg).
\label{eq:phi-ade-bias0}
\end{align}

\medskip
For each $d\in\{0,1\}$ and $w\in\mathcal W$,
both $P( \mathbf d_{-i};d,w)$ and $P( \mathbf d_{-i};w)$ are probability distributions over
$\mathbf d_{-i}$, so
\begin{equation}
\sum_{\mathbf d_{-i}}\Big(P( \mathbf d_{-i};d,w)-P( \mathbf d_{-i};w)\Big)=1-1=0.
\label{eq:zero-sum}
\end{equation}
which implies that 
$$\sum_{\mathbf d_{-i}}
\bar{Y}(d, \mathbf d'_{-i};w)
\Big(P( \mathbf d_{-i};d,w)-P( \mathbf d_{-i};w)\Big)=\bar{Y}(d, \mathbf d'_{-i};w)\Bigg(\sum_{\mathbf d_{-i}}
\Big(P( \mathbf d_{-i};d,w)-P( \mathbf d_{-i};w)\Big)\Bigg)=0$$
Thus, substracting  $\sum_{\mathbf d_{-i}}
\bar{Y}(d, \mathbf d'_{-i};w)
\Big(P( \mathbf d_{-i};d,w)-P( \mathbf d_{-i};w)\Big)$ from $\sum_{\mathbf d_{-i}}
\bar{Y}(d, \mathbf d_{-i};w)
\Big(P( \mathbf d_{-i};d,w)-P( \mathbf d_{-i};w)\Big)$, we have
\begin{align}
&\sum_{\mathbf d_{-i}}
\bar{Y}(d, \mathbf d_{-i};w)\Big(P( \mathbf d_{-i};d,w)-P( \mathbf d_{-i};w)\Big)
\nonumber\\
&=\sum_{\mathbf d_{-i}}
\Big(\bar{Y}(d, \mathbf d_{-i};w)-\bar{Y}(d,\mathbf d'_{-i},w)\Big)
\Big(P( \mathbf d_{-i};d,w)-P( \mathbf d_{-i};w)\Big). 
\label{eq:recenter}
\end{align}

Applying \eqref{eq:recenter} to both $d=1$ and $d=0$ in \eqref{eq:phi-ade-bias0}
yields
\begin{align*}
\Phi_{\mathrm{ITR}}(P)
&=
\mathrm{ADE}
+
\sum_{w\in\mathcal W}\Pr(W_i=w)\Bigg(\sum_{\mathbf d_{-i}}
\Big(\bar{Y}_i(1,\mathbf{d}_{-i};w)-\bar{Y}(1, \mathbf d'_{-i};w)\Big)
\Big(P( \mathbf d_{-i};1,w)-P( \mathbf d_{-i};w)\Big)\Bigg)
\notag\\
&\quad
-
\sum_{w\in\mathcal W}\Pr(W_i=w)\Bigg(\sum_{\mathbf d_{-i}}
\Big(\bar{Y}_i(0,\mathbf{d}_{-i};w)-\bar{Y}(0, \mathbf d'_{-i};w)\Big)
\Big(P( \mathbf d_{-i};0,w)-P( \mathbf d_{-i};w)\Big)\Bigg).
\end{align*}

\textbf{Part {(ii)}:}\\
By Assumption~\ref{ass:CAI}, $\mathbf{D}_{-i} \perp\!\!\!\perp D_i \mid W_i$, so for all $\mathbf{d}_{-i}$ and $w$,
\[
\mathbbm{P}(\mathbf{D}_{-i}=\mathbf{d}_{-i} \mid D_i=1,\, W_i=w)
=
\mathbbm{P}(\mathbf{D}_{-i}=\mathbf{d}_{-i} \mid D_i=0,\, W_i=w)
=
\mathbbm{P}(\mathbf{D}_{-i}=\mathbf{d}_{-i} \mid W_i=w).
\]
Thus, the equation in \eqref{eq:decomp 2} becomes 
\begin{align*}
\mu_1(w)-\mu_0(w)
&=
\sum_{\mathbf{d}_{-i}}
\big( \bar{Y}_i(1,\mathbf{d}_{-i};w) - \bar{Y}_i(0,\mathbf{d}_{-i};w) \big)\,
\mathbbm{P}(\mathbf{D}_{-i}=\mathbf{d}_{-i} \mid\, W_i=w) \nonumber \\
&\quad
+ \sum_{\mathbf{d}_{-i}} \bar{Y}_i(0,\mathbf{d}_{-i};w) \Big[
\mathbbm{P}(\mathbf{D}_{-i}=\mathbf{d}_{-i} \mid \, W_i=w)
-
\mathbbm{P}(\mathbf{D}_{-i}=\mathbf{d}_{-i} \mid \, W_i=w)
\Big]\\
&=
\sum_{\mathbf{d}_{-i}}
\big( \bar{Y}_i(1,\mathbf{d}_{-i};w) - \bar{Y}_i(0,\mathbf{d}_{-i};w) \big)\,
\mathbbm{P}(\mathbf{D}_{-i}=\mathbf{d}_{-i} \mid\, W_i=w). 
\end{align*}
Finally, marginalizing over $W_i$ yields
\[
\Phi_{\mathrm{ITR}}(P)
=
\mathbbm{E}\big[ \mu_1(W_i) - \mu_0(W_i) \big]
=
\mathbbm{E}\Big[\, \mathbbm{E}\big[ \Delta_i(\mathbf{D}_{-i}) \mid W_i \big] \Big],
\]
which is \eqref{eq:ade-x}. This establishes that the ITR identification formula delivers an average direct effect under Assumption~\ref{ass:CAI} and \ref{ass: uncon}.
\end{proof}

\paragraph{Proof of Theorem \ref{prop:iv}}
\begin{proof}
\textbf{Part (i):}\\
By exclusion and consistency,
\[
Y_i
=
Y_i(D_i,\mathbf D_{-i})
=
Y_i(D_i(Z_i),\mathbf D_{-i}(\mathbf Z_{-i})),
\]
so for each $z\in\{0,1\}$,
\[
\mathbbm{E}[Y_i\mid Z_i=z]
=
\mathbbm{E}\!\left[Y_i(D_i(z),\mathbf D_{-i} (\mathbf Z_{-i}))\mid Z_i=z\right].
\]

Applying the law of iterated expectations and condition \eqref{iv:consistency} yields
\begin{align}
\label{expansion_Zminus_general}
&\mathbbm{E}\!\left[
Y_i\!\left(D_i(z),\mathbf D_{-i}(\mathbf Z_{-i})\right)\mid Z_i=z
\right]\\
&=
\sum_{d\in\{0,1\}}
\mathbbm{E}\!\left[
Y_i\!\left(d,\mathbf D_{-i}(\mathbf Z_{-i})\right)\mid Z_i=z,\ D_i(z)=d
\right]\,
\Pr\!\big(D_i(z)=d\mid Z_i=z\big)
\nonumber\\
&=
\sum_{d\in\{0,1\}}
\sum_{\mathbf z_{-i}}
\mathbbm{E}\!\left[
Y_i\!\left(d,\mathbf D_{-i} (\mathbf z_{-i})\right)\mid Z_i=z,\ D_i(z)=d,\ \mathbf{Z}_{-i}=\mathbf z_{-i}
\right]\,
\Pr\!\big(\mathbf{Z}_{-i}=\mathbf z_{-i}\mid Z_i=z,\ D_i(z)=d\big)
\nonumber\\
&\qquad\qquad\qquad\qquad\cdot
\Pr\!\big(D_i(z)=d\mid Z_i=z\big)
\nonumber\\
&=
\sum_{d\in\{0,1\}}
\sum_{\mathbf z_{-i}}
\sum_{\mathbf d_{-i}}
\mathbbm{E}\!\left[
Y_i\!\left(d,\mathbf d_{-i}\right)
\mid Z_i=z,\ D_i(z)=d,\ \mathbf{Z}_{-i}=\mathbf z_{-i},\ \mathbf D_{-i}(\mathbf z_{-i})=\mathbf d_{-i}
\right]
\nonumber\\
&\qquad\qquad\cdot
\Pr\!\big(\mathbf D_{-i}(\mathbf z_{-i})=\mathbf d_{-i}\mid Z_i=z,\ D_i(z)=d,\ \mathbf{Z}_{-i}=\mathbf z_{-i}\big)\,
\Pr\!\big(\mathbf{Z}_{-i}=\mathbf z_{-i}\mid Z_i=z,\ D_i(z)=d\big)
\nonumber\\
&\qquad\qquad\qquad\qquad\cdot
\Pr\!\big(D_i(z)=d\mid Z_i=z\big)
\nonumber\\
&=
\sum_{d\in\{0,1\}}
\sum_{\mathbf z_{-i}}
\sum_{\mathbf d_{-i}}
\mathbbm{E}\!\left[
Y_i\!\left(d,\mathbf d_{-i}\right)
\mid Z_i=z,\ D_i(z)=d,\ \mathbf{Z}_{-i}=\mathbf z_{-i},\ 
\mathbf D_{-i}(\mathbf z_{-i})=\mathbf d_{-i}
\right]
\nonumber\\
&\qquad\qquad\cdot
\Pr\!\big(\mathbf D_{-i}(\mathbf z_{-i})=\mathbf d_{-i}
\mid Z_i=z,\ D_i(z)=d,\ \mathbf{Z}_{-i}=\mathbf z_{-i}\big)\,
\Pr\!\big(\mathbf{Z}_{-i}=\mathbf z_{-i}\mid Z_i=z,\ D_i(z)=d\big)
\nonumber\\
&\qquad\qquad\qquad\qquad\cdot
\Pr\!\big(D_i(z)=d\mid Z_i=z\big).
\end{align}

By condition~\eqref{iv indep}, note that
\begin{enumerate}
\item[(i)] $D_i(z)\perp\!\!\!\perp (Z_i,\mathbf{Z}_{-i})$, so that
\[
\Pr\!\big(D_i(z)=d\mid Z_i=z\big)=\Pr\!\big(D_i(z)=d\big).
\]
\item[(ii)] $\mathbf D_{-i}(\mathbf z_{-i})\perp\!\!\!\perp (Z_i,\mathbf{Z}_{-i})\mid D_i(z)$, so that
\[
\Pr\!\big(\mathbf D_{-i}(\mathbf z_{-i})=\mathbf d_{-i}
\mid Z_i=z,\ D_i(z)=d,\ \mathbf{Z}_{-i}=\mathbf z_{-i}\big)
=
\Pr\!\big(\mathbf D_{-i}(\mathbf z_{-i})=\mathbf d_{-i}
\mid D_i(z)=d\big).
\]
\item[(iii)] $Y_i(d, d_{-i})\perp\!\!\!\perp (Z_i,\mathbf{Z}_{-i})\mid (D_i(z),\mathbf D_{-i}(\mathbf z_{-i}))$ so that
\begin{align*}
    &\mathbbm{E}\!\left[Y_i\!\left(d,\mathbf d_{-i}\right)\mid Z_i=z,\ D_i(z)=d,\ \mathbf{Z}_{-i}=\mathbf z_{-i},\ \mathbf D_{-i}(\mathbf z_{-i})=\mathbf d_{-i}\right]\\
    &\quad \quad=\mathbbm{E}\!\left[Y_i\!\left(d,\mathbf d_{-i}\right)\mid  D_i(z)=d,\mathbf D_{-i}(\mathbf z_{-i})=\mathbf d_{-i}\right]
\end{align*}
\end{enumerate}

Hence, under Assumption \ref{ass: IV conditions}, we obtain
\begin{align}
\label{expansion_Zminus_simplified}
\mathbbm{E}\!\left[
Y_i\!\left(D_i(z),\mathbf D_{-i}\right)\mid Z_i=z
\right]
&=
\sum_{d\in\{0,1\}}
\sum_{\mathbf z_{-i}}
\sum_{\mathbf d_{-i}}
\mathbbm{E}\!\left[
Y_i\!\left(d,\mathbf d_{-i}\right)
\mid D_i(z)=d,\ 
\mathbf D_{-i}(\mathbf z_{-i})=\mathbf d_{-i}
\right]
\nonumber\\
&\qquad\qquad\cdot
\Pr\!\big(\mathbf D_{-i}(\mathbf z_{-i})=\mathbf d_{-i}\mid D_i(z)=d\big)\,
\Pr\!\big(\mathbf{Z}_{-i}=\mathbf z_{-i}\mid Z_i=z\big)\,\nonumber\\
&\qquad\qquad\cdot
\Pr\!\big(D_i(z)=d\big).
\end{align}

For each $d\in\{0,1\}$, define
\[
\bar{Y}\!\left(d,\mathbf d_{-i};\mathbf z_{-i}\right)
:=
\mathbbm{E}\!\left[
Y_i\!\left(d,\mathbf d_{-i}\right)
\mid D_i(z)=d,\ 
\mathbf D_{-i}(\mathbf z_{-i})=\mathbf d_{-i}
\right].
\]
Then \eqref{expansion_Zminus_simplified} becomes
\begin{align}
\label{EY_given_Zi_z_mu}
\mathbbm{E}[Y_i\mid Z_i=z]
&=
\sum_{d\in\{0,1\}}
\sum_{\mathbf z_{-i}}
\sum_{\mathbf d_{-i}}
\bar{Y}\!\left(d,\mathbf d_{-i};\mathbf z_{-i}\right)\,
\Pr\!\big(\mathbf D_{-i}(\mathbf z_{-i})=\mathbf d_{-i}\mid D_i(z)=d\big)\,\nonumber\\
&\qquad\qquad\cdot
\Pr\!\big(\mathbf{Z}_{-i}=\mathbf z_{-i}\mid Z_i=z\big)\,
\Pr\!\big(D_i(z)=d\big).
\end{align}

Next, add and subtract 

\begin{align*}
\sum_{d\in\{0,1\}}
\sum_{\mathbf z_{-i}}
\sum_{\mathbf d_{-i}}
\bar{Y}\!\left(d,\mathbf d_{-i};\mathbf z_{-i}\right)\,
\Pr\!\big(\mathbf D_{-i}(\mathbf z_{-i})=\mathbf d_{-i}\mid D_i(z)=d\big)
\Pr\!\big(\mathbf{Z}_{-i}=\mathbf z_{-i}\big)\,
\Pr\!\big(D_i(z)=d\big).
\end{align*}
to \eqref{EY_given_Zi_z_mu},
and define
\[
\delta_z(\mathbf z_{-i})
:=
\Pr(\mathbf{Z}_{-i}=\mathbf z_{-i}\mid Z_i=z)
-
\Pr(\mathbf{Z}_{-i}=\mathbf z_{-i}).
\]
This gives
\begin{align}
\label{EY_given_Zi_decomp}
&\mathbbm{E}[Y_i\mid Z_i=z]\\
&\quad=
\underbrace{
\sum_{d\in\{0,1\}}
\sum_{\mathbf z_{-i}}
\sum_{\mathbf d_{-i}}
\bar{Y}\!\left(d,\mathbf d_{-i};\mathbf z_{-i}\right)\,
\Pr\!\big(\mathbf D_{-i}(\mathbf z_{-i})=\mathbf d_{-i}\mid D_i(z)=d\big)\,
\Pr(\mathbf{Z}_{-i}=\mathbf z_{-i})\,
\Pr\!\big(D_i(z)=d\big)
}_{=:~\mathbbm{E}\!\left[Y_i\!\left(D_i(z),\mathbf D_{-i}\right)\right]}
\;+\;
B_z,
\end{align}
where the bias term equals
\begin{align}
\label{Bz_def_new}
B_z
:=
\sum_{d\in\{0,1\}}
\sum_{\mathbf z_{-i}}
\sum_{\mathbf d_{-i}}
\bar{Y}\!\left(d,\mathbf d_{-i};\mathbf z_{-i}\right)\,
\Pr\!\big(\mathbf D_{-i}(\mathbf z_{-i})=\mathbf d_{-i}\mid D_i(z)=d\big)\,
\delta_z(\mathbf z_{-i})\,
\Pr\!\big(D_i(z)=d\big).
\end{align}

Taking differences across $z=1$ and $z=0$ yields the reduced-form decomposition
\begin{align}
\label{RF_decomp_new}
\mathbbm{E}[Y_i\mid Z_i=1]-\mathbbm{E}[Y_i\mid Z_i=0]
&=
\mathbbm{E}\!\left[
Y_i\!\left(D_i(1),\mathbf D_{-i}\right)
-
Y_i\!\left(D_i(0),\mathbf D_{-i}\right)
\right]
+
(B_1-B_0).
\end{align}

As in the standard IV argument, partitioning units into principal strata and invoking monotonicity gives

\begin{align}
&\mathbbm{E}\!\left[Y_i\!\left(D_i(1),\mathbf D_{-i}\right)\right]
-\mathbbm{E}\!\left[Y_i\!\left(D_i(0),\mathbf D_{-i}\right)\right] \notag\\
&\qquad=
\mathbbm{E}\!\left[Y_i(1,\mathbf D_{-i})-Y_i(1,\mathbf D_{-i})\mid D_i(1)=D_i(0)=1\right]\Pr(D_i(1)=D_i(0)=1)\notag\\
&\qquad \quad+\mathbbm{E}\!\left[Y_i(1,\mathbf D_{-i})-Y_i(0,\mathbf D_{-i})\mid D_i(1)>D_i(0)\right]\Pr(D_i(1)>D_i(0)) \notag\\
&\qquad\quad+
\mathbbm{E}\!\left[Y_i(0,\mathbf D_{-i})-Y_i(0,\mathbf D_{-i})\mid D_i(1)=D_i(0)=0\right]\Pr(D_i(1)=D_i(0)=0).
\label{eq:strata-decomp}
\end{align}

The terms with the conditioning events $\{D_i(1)=D_i(0)=1\}$ and $\{D_i(1)=D_i(0)=0\}$ terms are identically zero, so~\eqref{eq:strata-decomp} reduces to
\begin{equation}
\label{eq:RF-compliers}
\mathbbm{E}\!\left[
Y_i\!\left(D_i(1),\mathbf D_{-i}\right)
-
Y_i\!\left(D_i(0),\mathbf D_{-i}\right)
\right]
=
\Pr(D_i(1)>D_i(0))\cdot
\mathbbm{E}\!\left[Y_i(1,\mathbf D_{-i})-Y_i(0,\mathbf D_{-i})\mid D_i(1)>D_i(0)\right].
\end{equation}

Thus, 
\[
\mathbbm{E}\!\left[
Y_i\!\left(D_i(1),\mathbf D_{-i}\right)
-
Y_i\!\left(D_i(0),\mathbf D_{-i}\right)
\right]
=
\Pr(D_i(1)>D_i(0))\cdot \mathrm{LADE}.
\]
Moreover, since $D_i=D_i(1)Z_i+D_i(0)(1-Z_i)$ and $D_i(z)\perp\!\!\!\perp Z_i$ under
Assumption~\eqref{ass: IV conditions}
\[
\mathbbm{E}[D_i\mid Z_i=1]-\mathbbm{E}[D_i\mid Z_i=0]
=\mathbbm{E}[D_i(1)-D_i(0)]=
\Pr(D_i(1)>D_i(0)).
\]
Dividing \eqref{RF_decomp_new} by the first-stage difference yields
\[
\frac{\mathbbm{E}[Y_i\mid Z_i=1]-\mathbbm{E}[Y_i\mid Z_i=0]}
{\mathbbm{E}[D_i\mid Z_i=1]-\mathbbm{E}[D_i\mid Z_i=0]}
=
\mathrm{LADE}
+
\frac{B_1-B_0}{\Pr(D_i(1)>D_i(0))},
\]
which establishes the claimed decomposition.

\medskip
\textbf{Part (ii):}\\
If, in addition, $Z_i\perp\!\!\!\perp \mathbf{Z}_{-i}$, then $\delta_z(\mathbf z_{-i})=0$
for all $\mathbf z_{-i}$ and hence $B_1=B_0=0$, recovering the Wald
estimand for $\mathrm{LADE}$ without the additional bias term, i.e., 
\[
\frac{\mathbbm{E}[Y_i\mid Z_i=1]-\mathbbm{E}[Y_i\mid Z_i=0]}
{\mathbbm{E}[D_i\mid Z_i=1]-\mathbbm{E}[D_i\mid Z_i=0]}
=\mathrm{LADE}
\]
as required.
\end{proof}

\paragraph{Proof of Theorem \ref{prop:rd}}
\begin{proof}
\textbf{Part (i):}\\
Let $e>0$ be arbitrarily small. Note that if the design is sharp, we have
\[
\lim_{w\downarrow w_0}\mathbbm{E}[D_i\mid W_i=w]=1,
\qquad
\lim_{w\uparrow w_0}\mathbbm{E}[D_i\mid W_i=w]=0.
\]
Under arbitrary interference, the consistency assumption implies that the observed outcome admits the representation
\[
Y_i
=
\sum_{\mathbf d_{-i}\in\{0,1\}^{N-1}}
\Big(
\alpha_i(\mathbf d_{-i})+\Delta_i(\mathbf d_{-i})D_i
\Big)\,
\mathbbm{I}(\mathbf D_{-i}=\mathbf d_{-i}),
\]
where $\alpha_i(\mathbf d_{-i}) := Y_i(0,\mathbf d_{-i})$ and
$\Delta_i(\mathbf d_{-i}) := Y_i(1,\mathbf d_{-i})-Y_i(0,\mathbf d_{-i})$.

Thus, if we consider the right-hand limit of the Wald ratio.  Take the expectation conditional on $W_i=w_0+e$, we have 
\small
\begin{align}
&\mathbbm{E}[Y_i\mid W_i=w_0+e]\nonumber\\
&=\sum_{\mathbf d_{-i}}
\mathbbm{E}[\alpha_i(\mathbf d_{-i})\mid W_i=w_0+e]\cdot\Pr(\mathbf D_{-i}=\mathbf d_{-i}\mid W_i=w_0+e)\nonumber\\
&\qquad+ 
\sum_{\mathbf d_{-i}}\Delta(\mathbf d_{-i})\cdot
\mathbbm{E}[D_i\cdot \mathbbm{1}(\mathbf D_{-i}=\mathbf d_{-i})\mid W_i=w_0+e]
\label{eq:EY_RD}
\end{align}
\normalsize
where $\mathbbm{E}[\alpha_i(\mathbf d_{-i})\mathbbm{1}(\mathbf D_{-i}=\mathbf d_{-i})\mid W_i=w_0+e]=\mathbbm{E}[\alpha_i(\mathbf d_{-i})\mid W_i=w_0+e]\cdot\Pr(\mathbf D_{-i}=\mathbf d_{-i}\mid W_i=w_0+e)$ due to condition \eqref{rdd int: uncon} of Assumption \ref{rdd continuity}. $\mathbbm{E}[\Delta_i(\mathbf d_{-i})D_i\cdot\mathbbm{1}(\mathbf D_{-i}=\Delta(\mathbf d_{-i})\mid W_i=w_0+e]=
\Delta(\mathbf d_{-i})\cdot\mathbbm{E}[D_i\cdot \mathbbm{1}(\mathbf D_{-i}=\mathbf d_{-i})\mid W_i=w_0+e]$ due to the assumption of homogeneous treatment effects for any $\mathbf d_{-i}.$ 

Similarly, for the left-hand limit, 
\small
\begin{align}
&\mathbbm{E}[Y_i\mid W_i=w_0-e]\nonumber\\
&=\sum_{\mathbf d_{-i}}
\mathbbm{E}[\alpha_i(\mathbf d_{-i})\mid W_i=w_0-e]\cdot\Pr(\mathbf D_{-i}=\mathbf d_{-i}\mid W_i=w_0-e)\nonumber\\
&\qquad+ 
\sum_{\mathbf d_{-i}}\Delta(\mathbf d_{-i})\cdot
\mathbbm{E}[D_i\cdot \mathbbm{1}(\mathbf D_{-i}=\mathbf d_{-i})\mid W_i=w_0-e].
\label{eq:EY_RD 2}
\end{align}
\normalsize

Subtracting \eqref{eq:EY_RD} from \eqref{eq:EY_RD 2} yields
\small
\begin{align*}
&\mathbbm{E}[Y_i\mid W_i=w_0+e]-\mathbbm{E}[Y_i\mid W_i=w_0-e]\nonumber\\
&=\sum_{\mathbf d_{-i}}
\mathbbm{E}[\alpha_i(\mathbf d_{-i})\mid W_i=w_0+e]\cdot\Pr(\mathbf D_{-i}=\mathbf d_{-i}\mid W_i=w_0+e)\nonumber\\
&\quad-\sum_{\mathbf d_{-i}}
\mathbbm{E}[\alpha_i(\mathbf d_{-i})\mid W_i=w_0-e]\cdot\Pr(\mathbf D_{-i}=\mathbf d_{-i}\mid W_i=w_0-e)\nonumber\\
&\quad+ 
\sum_{\mathbf d_{-i}}\Delta(\mathbf d_{-i})\cdot(
\mathbbm{E}[D_i\cdot \mathbbm{1}(\mathbf D_{-i}=\mathbf d_{-i})\mid W_i=w_0+e]-\mathbbm{E}[D_i\cdot \mathbbm{1}(\mathbf D_{-i}=\mathbf d_{-i})\mid W_i=w_0-e])
\end{align*}
\normalsize
Thus, 
\begin{align}
&\lim_{e\to 0} (\mathbbm{E}[Y_i\mid W_i=w_0+e]-\mathbbm{E}[Y_i\mid W_i=w_0-e])=\lim_{w\downarrow w_0}\mathbbm{E}[Y_i\mid W_i=w]
-
\lim_{w\uparrow w_0}\mathbbm{E}[Y_i\mid W_i=w]\nonumber \\
&=\sum_{\mathbf d_{-i}}
\lim_{e\to 0}\mathbbm{E}[\alpha_i(\mathbf d_{-i})\mid W_i=w_0+e]\cdot\lim_{e\to 0}\left(\Pr(\mathbf D_{-i}=\mathbf d_{-i}\mid W_i=w_0+e)-\Pr(\mathbf D_{-i}=\mathbf d_{-i}\mid W_i=w_0-e)\right)\nonumber\\
&\qquad + \sum_{\mathbf d_{-i}}\Delta(\mathbf d_{-i})\cdot\lim_{e\to 0}(
\mathbbm{E}[D_i\cdot \mathbbm{1}(\mathbf D_{-i}=\mathbf d_{-i})\mid W_i=w_0+e]-\mathbbm{E}[D_i\cdot \mathbbm{1}(\mathbf D_{-i}=\mathbf d_{-i})\mid W_i=w_0-e]),\label{rdd expansion}
\end{align}
where the first term is due to the continuity of $\mathbbm{E}[\alpha_i(\mathbf d_{-i})\mid W_i]$ at $w_0$.

Adding and substracting $\sum_{\mathbf d_{-i}}\Delta(\mathbf d_{-i})\cdot\Pr(\mathbf D_{-i}=\mathbf d_{-i})$ to \eqref{rdd expansion}, we have 
\begin{align*}
&\lim_{w\downarrow w_0}\mathbbm{E}[Y_i\mid W_i=w]-\lim_{w\uparrow w_0}\mathbbm{E}[Y_i\mid W_i=w]\nonumber \\
&\quad=\sum_{\mathbf d_{-i}}
\lim_{e\to 0}\mathbbm{E}[\alpha_i(\mathbf d_{-i})\mid W_i=w_0+e]\cdot\lim_{e\to 0}\left(\Pr(\mathbf D_{-i}=\mathbf d_{-i}\mid W_i=w_0+e)-\Pr(\mathbf D_{-i}=\mathbf d_{-i}\mid W_i=w_0-e)\right)\nonumber\\
&\qquad + \sum_{\mathbf d_{-i}}\Delta(\mathbf d_{-i})\cdot\Big(\lim_{e\to 0}(
\mathbbm{E}[D_i\cdot \mathbbm{1}(\mathbf D_{-i}=\mathbf d_{-i})\mid W_i=w_0+e]-\mathbbm{E}[D_i\cdot \mathbbm{1}(\mathbf D_{-i}=\mathbf d_{-i})\mid W_i=w_0-e])\\ 
&\qquad\qquad\qquad\qquad\qquad\qquad\qquad\qquad\qquad\qquad\qquad\qquad\qquad\qquad\qquad\qquad- \Pr(\mathbf D_{-i}=\mathbf d_{-i})\Big)\\
&\qquad+ \sum_{\mathbf d_{-i}}\Delta(\mathbf d_{-i})\cdot\Pr(\mathbf D_{-i}=\mathbf d_{-i})\\
&\quad=\sum_{\mathbf d_{-i}}
\lim_{e\to 0}\mathbbm{E}[\alpha_i(\mathbf d_{-i})\mid W_i=w_0+e]\cdot\lim_{e\to 0}\left(\Pr(\mathbf D_{-i}=\mathbf d_{-i}\mid W_i=w_0+e)-\Pr(\mathbf D_{-i}=\mathbf d_{-i}\mid W_i=w_0-e)\right)\nonumber\\
&\qquad + \sum_{\mathbf d_{-i}}\Delta(\mathbf d_{-i})\cdot\Big(\lim_{e\to 0}
\Pr(\mathbf D_{-i}=\mathbf d_{-i}\mid W_i=w_0+e)- \Pr(\mathbf D_{-i}=\mathbf d_{-i})\Big)\\
&\qquad+ \sum_{\mathbf d_{-i}}\Delta(\mathbf d_{-i})\cdot\Pr(\mathbf D_{-i}=\mathbf d_{-i})\\
&\quad= \mathrm{bias_{1,RDD}} + \mathrm{bias_{2,RDD}} + \mathbbm{E}[\Delta(\mathbf D_{-i})],
\end{align*}
where the second equality uses the fact that under the sharp design, $\mathbbm{E}[D_i\cdot \mathbbm{1}(\mathbf D_{-i}=\mathbf d_{-i})\mid W_i=w_0+ e]=\mathbbm{E}[\mathbbm{1}(\mathbf D_{-i}=\mathbf d_{-i})\mid W_i=w_0+e]=\Pr(\mathbf D_{-i}=\mathbf d_{-i}\mid W_i=w_0+e)$ and $\mathbbm{E}[D_i\cdot \mathbbm{1}(\mathbf D_{-i}=\mathbf d_{-i})\mid W_i=w_0-e]=0\cdot\mathbbm{E}[\mathbbm{1}(\mathbf D_{-i}=\mathbf d_{-i})\mid W_i=w_0-e]=0$.\\

\medskip
 \textbf{Part (ii):}\\
Let $e>0$ be arbitrarily small. From \eqref{rdd expansion} of Part(i), we have 
\begin{align*}
&\lim_{w\downarrow w_0}\mathbbm{E}[Y_i\mid W_i=w]
-
\lim_{w\uparrow w_0}\mathbbm{E}[Y_i\mid W_i=w]\nonumber \\
&\quad=\sum_{\mathbf d_{-i}}
\lim_{e\to 0}\mathbbm{E}[\alpha_i(\mathbf d_{-i})\mid W_i=w_0+e]\cdot\lim_{e\to 0}\left(\Pr(\mathbf D_{-i}=\mathbf d_{-i}\mid W_i=w_0+e)-\Pr(\mathbf D_{-i}=\mathbf d_{-i}\mid W_i=w_0-e)\right)\nonumber\\
&\qquad + \sum_{\mathbf d_{-i}}\Delta(\mathbf d_{-i})\cdot\lim_{e\to 0}(
\mathbbm{E}[D_i\cdot \mathbbm{1}(\mathbf D_{-i}=\mathbf d_{-i})\mid W_i=w_0+e]-\mathbbm{E}[D_i\cdot \mathbbm{1}(\mathbf D_{-i}=\mathbf d_{-i})\mid W_i=w_0-e]).
\end{align*}
Under the assumption of cross-sectional independence of $W_i$, so that
$\mathbf D_{-i}\perp\!\!\!\perp W_i$,\\
$\lim_{e\to 0}\left(\Pr(\mathbf D_{-i}=\mathbf d_{-i}\mid W_i=w_0+e)-\Pr(\mathbf D_{-i}=\mathbf d_{-i}\mid W_i=w_0-e)\right)=\\
\lim_{e\to 0}\left(\Pr(\mathbf D_{-i}=\mathbf d_{-i})-\Pr(\mathbf D_{-i}=\mathbf d_{-i})\right)=0.$ 

Also, $\lim_{e\to 0}(
\mathbbm{E}[D_i\cdot \mathbbm{1}(\mathbf D_{-i}=\mathbf d_{-i})\mid W_i=w_0+e]-\mathbbm{E}[D_i\cdot \mathbbm{1}(\mathbf D_{-i}=\mathbf d_{-i})\mid W_i=w_0-e])=\lim_{e\to 0}(
\mathbbm{E}[\mathbbm{1}(\mathbf D_{-i}=\mathbf d_{-i})\mid W_i=w_0+e]-0\cdot\mathbbm{E}[ \mathbbm{1}(\mathbf D_{-i}=\mathbf d_{-i})\mid W_i=w_0-e])=\lim_{e\to 0}(
\Pr(\mathbf D_{-i}=\mathbf d_{-i}\mid W_i=w_0+e)=\Pr(\mathbf D_{-i}=\mathbf d_{-i})$

Hence, 
\begin{align*}
&\lim_{w\downarrow w_0}\mathbbm{E}[Y_i\mid W_i=w]
-
\lim_{w\uparrow w_0}\mathbbm{E}[Y_i\mid W_i=w]\nonumber \\
&\quad=\sum_{\mathbf d_{-i}}\Delta(\mathbf d_{-i})\cdot\Pr(\mathbf D_{-i}=\mathbf d_{-i})\\
&\quad=\mathbbm{E}[\Delta(\mathbf D_{-i})].
\end{align*}
\end{proof}

\paragraph{Proof of Theorem \ref{thm:did}}
\begin{proof}
\textbf{Part (i):}\\
Fix any unit $i$. By the definition of the covariate-adjusted DiD identifying formula,  we may write
\begin{align}
\Phi_{\mathrm{ITR}}(P)
&=
\mathbbm{E}\Big[
\mathbbm{E}[Y_{i1}\mid W_i,D_i=1] 
-
\mathbbm{E}[Y_{i0}\mid W_i,D_i=1]\mid D_i =1\big]
 \nonumber \\
& \qquad-\mathbbm{E}\Big[
\mathbbm{E}[Y_{i1}\mid W_i,D_i=0]
-
\mathbbm{E}[Y_{i0}\mid W_i,D_i=0]\big)
\Big |D_i=1 \Big].
\label{eq:did-start}
\end{align}

Under arbitrary interference and the consistency assumptions, recall that 
\[
Y_{i1}=Y_{i1}(D_{i},\mathbf D_{-i})
\qquad \text{and} \qquad
Y_{i0}= Y_{i0}(\mathbf 0),
\]

hence,
\begin{align*}
\mathbbm{E}[Y_{i1}\mid W_i,D_i=1]&=\mathbbm{E}[Y_{i1}(1,\mathbf D_{-i})\mid W_i,D_i=1],\\
\mathbbm{E}[Y_{i1}\mid W_i,D_i=0]&= \mathbbm{E}[Y_{i1}(0,\mathbf D_{-i})\mid W_i,D_i=0],\\
\mathbbm{E}[Y_{i0}\mid W_i,D_i=d] &=\mathbbm{E}[Y_{i0}(\mathbf 0)\mid W_i,D_i=d],
\qquad d\in\{0,1\}.
\end{align*}

Substituting these expressions into \eqref{eq:did-start} yields
\begin{align}
\Phi_{\mathrm{ITR}}(P)
&=
\mathbbm{E}\Big[
\mathbbm{E}[Y_{i1}(1,\mathbf D_{-i})\mid W_i,D_i=1]
-
\mathbbm{E}[Y_{i0}(\mathbf 0)\mid W_i,D_i=1]\mid D_i=1\big]
 \notag\\
&\quad  \quad-
\mathbbm{E}\Big[
\mathbbm{E}[Y_{i1}(0,\mathbf D_{-i})\mid W_i,D_i=0]
-
\mathbbm{E}[Y_{i0}(\mathbf 0)\mid W_i,D_i=0]\Big)\Big |D_i=1
\Big].
\label{eq:did-interf}
\end{align}
Add and subtract $\mathbbm{E}\Big[\mathbbm{E}[Y_{i1}(0,\mathbf D_{-i})\mid W_i,D_i=1]\Big |D_i=1\Big]$ to the equation in \eqref{eq:did-interf}.  We have 
\begin{align}
\Phi_{\mathrm{ITR}}(P)
&=
\underbrace{
\mathbbm{E}\!\left[
\mathbbm{E}\!\left[
Y_{i1}(1,\mathbf D_{-i})
-
Y_{i1}(0,\mathbf D_{-i})
\mid W_i,D_i=1
\right]\Big |D_i=1\right]
}_{\mathrm{ADTT}}\notag\\
&\quad+ \mathbbm{E}\Big[\mathbbm{E}[Y_{i1}(0,\mathbf D_{-i})\mid W_i,D_i=1]-  \mathbbm{E}[Y_{i0}(\mathbf 0)\mid W_i,D_i=1]\Big |D_i=1
\Big] \notag\\
&\quad-
\mathbbm{E}\Big[
\mathbbm{E}[Y_{i1}(0,\mathbf D_{-i})\mid W_i,D_i=0]
-
\mathbbm{E}[Y_{i0}(\mathbf 0)\mid W_i,D_i=0]\Big |D_i=1
\Big].
\label{eq:did-interf 1}
\end{align}

Note that
\begin{align*}
\mathbbm{E}[Y_{i1}(0,\mathbf D_{-i})\mid W_i,D_i=d]
&=
\sum_{\mathbf d_{-i}} \mathbbm{E}[Y_{i1}(0,\mathbf d_{-i})\mid W_i,D_i=d]\,
\Pr(\mathbf D_{-i}=\mathbf d_{-i}\mid W_i,D_i=d)
\end{align*}
and 
\begin{align*}
\mathbbm{E}[Y_{i0}(\mathbf 0)\mid W_i,D_i=d]
&=
\sum_{\mathbf d_{-i}} \mathbbm{E}[Y_{i0}(\mathbf 0)\mid W_i,D_i=d]\,
\Pr(\mathbf D_{-i}=\mathbf d_{-i}\mid W_i,D_i=d).
\end{align*}

Thus, 
\begin{align*}
& \mathbbm{E}[Y_{i1}(0,\mathbf D_{-i})\mid W_i,D_i=1]-\mathbbm{E}[Y_{i0}(\mathbf 0)\mid W_i,D_i=1]\\
&\quad =
\sum_{\mathbf d_{-i}} \left(\mathbbm{E}[Y_{i1}(0,\mathbf d_{-i})\mid W_i,D_i=1]\,
-\mathbbm{E}[Y_{i0}(\mathbf 0)\mid W_i,D_i=1]\right)\cdot
\Pr(\mathbf D_{-i}=\mathbf d_{-i}\mid W_i,D_i=1)
\end{align*}
and 
\begin{align*}
& \mathbbm{E}[Y_{i1}(0,\mathbf D_{-i})\mid W_i,D_i=0]-\mathbbm{E}[Y_{i0}(\mathbf 0)\mid W_i,D_i=0]\\
&\quad =
\sum_{\mathbf d_{-i}} \left(\mathbbm{E}[Y_{i1}(0,\mathbf d_{-i})\mid W_i,D_i=0]\,
-\mathbbm{E}[Y_{i0}(\mathbf 0)\mid W_i,D_i=0]\right)\cdot
\Pr(\mathbf D_{-i}=\mathbf d_{-i}\mid W_i,D_i=0)
\end{align*}
Therefore, 
\begin{align*}
& \mathbbm{E}\Bigg[\mathbbm{E}[Y_{i1}(0,\mathbf D_{-i})\mid W_i,D_i=1]-\mathbbm{E}[Y_{i0}(\mathbf 0)\mid W_i,D_i=1]\Big|D_i=1\Bigg]\\
&\qquad \qquad \qquad \qquad \qquad \qquad \qquad \quad -\mathbbm{E}\Bigg[\mathbbm{E}[Y_{i1}(0,\mathbf D_{-i})\mid W_i,D_i=0]-\mathbbm{E}[Y_{i0}(\mathbf 0)\mid W_i,D_i=0]\Big| D_i=1 \Bigg]\\
& =
\mathbbm{E}\Bigg[\sum_{\mathbf d_{-i}} \left(\mathbbm{E}[Y_{i1}(0,\mathbf d_{-i})\mid W_i,D_i=1]\,
-\mathbbm{E}[Y_{i0}(\mathbf 0)\mid W_i,D_i=1]\right)\cdot
\Pr(\mathbf D_{-i}=\mathbf d_{-i}\mid W_i,D_i=1)\Big|D_i=1\Bigg]\\
&\quad  -\mathbbm{E}\Bigg[\sum_{\mathbf d_{-i}} \left(\mathbbm{E}[Y_{i1}(0,\mathbf d_{-i})\mid W_i,D_i=0]
-\mathbbm{E}[Y_{i0}(\mathbf 0)\mid W_i,D_i=0]\right)\cdot
\Pr(\mathbf D_{-i}=\mathbf d_{-i}\mid W_i,D_i=0)\Big |D_i=1\Bigg]\\
& =\mathrm{Bias_{DiD}}
\end{align*}

\textbf{Part (ii):}\\
Under the additional assumption  $\mathbf D_{-i}\perp\!\!\!\perp D_i\mid W_i$
$$\Pr(\mathbf D_{-i}=\mathbf d_{-i}\mid W_i,D_i=0)=\Pr(\mathbf D_{-i}=\mathbf d_{-i}\mid W_i,D_i=1)= \Pr(\mathbf D_{-i}=\mathbf d_{-i}\mid W_i)$$

The bias from part (i) equals 
\begin{align*}
& \mathrm{Bias_{DiD}}\\
& =
\mathbbm{E}\Bigg[\sum_{\mathbf d_{-i}} \left(\mathbbm{E}[Y_{i1}(0,\mathbf d_{-i})\mid W_i,D_i=1]\,
-\mathbbm{E}[Y_{i0}(\mathbf 0)\mid W_i,D_i=1]\right)\cdot
\Pr(\mathbf D_{-i}=\mathbf d_{-i}\mid W_i,D_i=1)\Big| D_i=1\Bigg]\\
& \quad -\mathbbm{E}\Bigg[\sum_{\mathbf d_{-i}} \left(\mathbbm{E}[Y_{i1}(0,\mathbf d_{-i})\mid W_i,D_i=0]
 -\mathbbm{E}[Y_{i0}(\mathbf 0)\mid W_i,D_i=0]\right)\cdot
\Pr(\mathbf D_{-i}=\mathbf d_{-i}\mid W_i,D_i=0)\Big |D_i=1\Bigg]\\
& =
\mathbbm{E}\Bigg[\sum_{\mathbf d_{-i}} \left(\mathbbm{E}[Y_{i1}(0,\mathbf d_{-i})\mid W_i,D_i=1]\,
-\mathbbm{E}[Y_{i0}(\mathbf 0)\mid W_i,D_i=1]\right)\cdot
\Pr(\mathbf D_{-i}=\mathbf d_{-i}\mid W_i)\Big | D_i=1\Bigg]\\
& \quad \quad-\mathbbm{E}\Bigg[\sum_{\mathbf d_{-i}} \left(\mathbbm{E}[Y_{i1}(0,\mathbf d_{-i})\mid W_i,D_i=0]
-\mathbbm{E}[Y_{i0}(\mathbf 0)\mid W_i,D_i=0]\right)\cdot
\Pr(\mathbf D_{-i}=\mathbf d_{-i}\mid W_i)\Big |D_i=1\Bigg]\\
& =
\mathbbm{E}\Bigg[\sum_{\mathbf d_{-i}}\Big( \left(\mathbbm{E}[Y_{i1}(0,\mathbf d_{-i})\mid W_i,D_i=1]\,
-\mathbbm{E}[Y_{i0}(\mathbf 0)\mid W_i,D_i=1]\right)\\
& \quad \quad \quad \quad \quad \quad  - \left(\mathbbm{E}[Y_{i1}(0,\mathbf d_{-i})\mid W_i,D_i=0] -\mathbbm{E}[Y_{i0}(\mathbf 0)\mid W_i,D_i=0]\right)\Big)\cdot
\Pr(\mathbf D_{-i}=\mathbf d_{-i}\mid W_i)\Big |D_i=1\Bigg]\\
&\quad =0,
\end{align*}
where the last equality follows from the conditional parallel trend Assumption (Assumption \ref{did: parallel trend}).
\end{proof}

\paragraph{Proof of Proposition \ref{prop:cov_two_step}}
\begin{proof}
Fix $i\in\{1,\dots,n_s\}$. By the law of iterated expectations,
\[
\Pr(D_i=1)=\mathbbm{E}\big[\Pr(D_i=1\mid M_s)\big].
\]
Conditional on $M_s=m_s$, exactly $m_s$ of the $n_s$ units are treated, and each subset of size $m_s$ is equally likely. By symmetry,
\[
\Pr(D_i=1\mid M_s=m_s)=\frac{m_s}{n_s}.
\]
Therefore,
\[
\Pr(D_i=1)
=
\mathbbm{E}\left[\frac{M_s}{n_s}\right]
=
\frac{\mathbbm{E}[M_s]}{n_s}.
\]

Now fix $i\neq j$. Again, by iterated expectations,
\[
\Pr(D_i=1,D_j=1)
=
\mathbbm{E}\big[\Pr(D_i=1,D_j=1\mid M_s)\big].
\]
Conditional on $M_s=m_s$, the event $\{D_i=1,D_j=1\}$ occurs precisely when both units $i$ and $j$ belong to the uniformly chosen treated subset of size $m_s$. The number of such subsets is $\binom{n_s-2}{m_s-2}$, while the total number of subsets of size $m_s$ is $\binom{n_s}{m_s}$. Hence
\[
\Pr(D_i=1,D_j=1\mid M_s=m_s)
=
\frac{\binom{n_s-2}{m-2}}{\binom{n_s}{m_s}}
=
\frac{m_s(m_s-1)}{n_s(n_s-1)}.
\]
It follows that
\[
\Pr(D_i=1,D_j=1)
=
\mathbbm{E}\left[\frac{M_s(M_s-1)}{n_s(n_s-1)}\right]
=
\frac{\mathbbm{E}[M_s(M_s-1)]}{n_s(n_s-1)}.
\]

Therefore,
\begin{align*}
\mathrm{Cov}(D_i,D_j)
&=
\Pr(D_i=1,D_j=1)-\Pr(D_i=1)\Pr(D_j=1) \\
&=
\frac{\mathbbm{E}[M_s(M_s-1)]}{n_s(n_s-1)}
-
\left(\frac{\mathbbm{E}[M_s]}{n_s}\right)^2.
\end{align*}

Finally, suppose that $\mathbbm{E}[M_s]=n_se(s)$. Using the identity
\[
\mathbbm{E}[M_s(M_s-1)]
=
\mathbbm{E}[M_s^2]-\mathbbm{E}[M_s]
=
\mathrm{Var}(M_s)+(\mathbbm{E}[M_s])^2-\mathbbm{E}[M_s],
\]
we obtain
\[
\mathbbm{E}[M_s(M_s-1)]
=
\mathrm{Var}(M_s)+n_s^2e(s)^2-n_se(s).
\]
Substituting into the covariance expression yields
\begin{align*}
\mathrm{Cov}(D_i,D_j)
&=
\frac{\mathrm{Var}(M_s)+n_s^2e(s)^2-n_se(s)}{n_s(n_s-1)}-e(s)^2 \\
&=
\frac{\mathrm{Var}(M_s)+n_s^2e(s)^2-n_se(s)-n_s(n_s-1)e(s)^2}{n_s(n_s-1)} \\
&=
\frac{\mathrm{Var}(M_s)-n_se(s)(1-e(s))}{n_s(n_s-1)}.
\end{align*}
\end{proof}

\section{Difference-in-Differences: Dynamic setup} \label{sec:did_dynamic}
In this section, we extend the difference-in-differences framework with potential outcomes that depend on the whole treatment history \citep{baker2025difference} to allow for arbitrary interference. As in section \ref{sec:did}, we focus on the canonical two-group (treated and untreated) and two-period ($t=0,1$) design, in which no unit is treated at period 0 and units in the treated group become treated at period 1. The treatment indicator is denoted by $D_i \in \{0,1\}$ with taking value one if the individual $i$ belongs to the treated group and is treated in period 1.  For clarity, we define the treatment indicator at period $t$, for individual $i$ as $h_{it}= D_i \cdot A_t$ where $A_t$ takes the value 1 if the time period equals 1, and 0 otherwise.  Under interference, the potential outcomes depend on own treatment and the treatment assignment of the others such that $Y_{it}([h_{i0},h_{i1}], [\mathbf h_{-i,0},\mathbf h_{-i,1}] )$ with $\mathbf h_{-i,t}$ collecting the treatment assignments at period $t$ for other individuals. Since the design is such that no one is treated at period 0, we have the following possible potential outcomes $Y_{it}([0,0], [\mathbf 0_{-i},\mathbf h_{-i,1}] )$, and 
$Y_{it}([0,1], [\mathbf 0_{-i},\mathbf h_{1,-i}] )$.  For notation simplicity, we re-call the potential outcomes as follows  $Y_{it}([0,0], [\mathbf 0_{-i},\mathbf h_{-i,1}] ) = Y_{it}(0,\mathbf h_{-i,1})=Y_{it}(0,\mathbf d_{-i})$, and 
$Y_{it}([0,1], [\mathbf 0_{-i},\mathbf h_{1,-i}] )=Y_{it}(1, \mathbf h_{1,-i})=Y_{it}(1,\mathbf d_{-i})$.  Because outcomes may depend on others' treatment assignments, the classical parallel trends assumption must be strengthened accordingly. We impose the following analogue.

\begin{assumption}\label{did: parallel trend dyn}
For all $i\in[N]$,
\vspace{-0.7cm}
\begin{align}
&\mathbbm{E}[Y_{i1}(0,\mathbf d_{-i})\mid W_i,D_i=1]
-
\mathbbm{E}[Y_{i0}(0,\mathbf d_{-i})\mid W_i,D_i=1] \nonumber\\
&\quad=
\mathbbm{E}[Y_{i1}(0,\mathbf d_{-i})\mid W_i,D_i=0]
-
\mathbbm{E}[Y_{i0}(0,\mathbf d_{-i})\mid W_i,D_i=0],
\qquad
\forall\,\,\mathbf{d}_{-i}\in\{0,1\}^{N-1}.\\
&\text{There exists some}\quad \varepsilon>0,\quad \varepsilon<\Pr(D_{i}=1\mid W_i)<1-\varepsilon \quad \quad \quad \quad \quad \quad\text{(Strong overlap)}\label{SO dyn}.
\end{align}
\end{assumption}

This assumption extends conditional parallel trends to environments with interference by requiring that, conditional on covariates, the treated and control groups would have exhibited the same average evolution of untreated outcomes across periods for all others' treatment assignments. This is a strong assumption as it requires parallel trends for potential outcomes under no treatment for all different treatment assignments of others.  This assumption is different from Assumption \ref{did: parallel trend}, which states that the difference between untreated potential outcomes, conditional on others' treatment assignments, and the untreated potential outcome when all other individuals are not treated is equal across treatment groups. In addition, we assume that there is a strong overlap.

\begin{assumption}\label{did: no anticipation dyn}
For all $i\in[N]$,
\vspace{-0.7cm}
\begin{align}
&\mathbbm{E}[Y_{i0}(0,\mathbf d_{-i})\mid W_i,D_i=1]
=
\mathbbm{E}[Y_{i0}(1,\mathbf d_{-i})\mid W_i,D_i=1]
,
\qquad
\forall\,\,\mathbf{d}_{-i}\in\{0,1\}^{N-1}.
\end{align}
\end{assumption}

This assumption extends the conditional anticipation assumption to a setup with interference by requiring that at period 0, conditional on covariates, the treated and untreated potential outcomes are equal, on average, for the treated group. This assumption is needed in this setup, in contrast to the static setup discussed in Section \ref{sec:did}, because we allow potential outcomes to depend on the whole treatment history. Therefore, there is a need to rule out anticipation effects.

\begin{theorem}
\label{thm:did_dynamic}
Suppose potential outcomes exhibit arbitrary interference and consistency.
\begin{itemize}
    \item[(i)] If Assumptions~\ref{did: parallel trend dyn}, \ref{did: no anticipation dyn} holds and Assumption \ref{ass:CAI} fails then  
    \begin{align*}
\Phi_{\mathrm{ITR}}(P)
&=
\underbrace{
\mathbbm{E}\!\left[
\mathbbm{E}\!\left[
Y_{i1}(1,\mathbf D_{-i})
-
Y_{i1}(0,\mathbf D_{-i})
\mid W_i,D_i=1
\right]\Big |D_i=1\right]
}_{\mathrm{ADTT}}
\;+\;
\mathrm{Bias}_{\mathrm{DiD_{dyn}}},
\end{align*}

\item[(ii)]  If Assumption~\ref{did: parallel trend dyn}, \ref{did: no anticipation dyn}, and  \ref{ass:CAI} hold, 
then $\mathrm{Bias}_{\mathrm{DiD_{dyn}}}=0$, and therefore
\[
\Phi_{\mathrm{ITR}}(P)
=
\mathbbm{E}\!\left[
\mathbbm{E}\!\left[
Y_{i1}(1,\mathbf D_{-i})
-
Y_{i1}(0,\mathbf D_{-i})
\mid W_i,D_i=1
\right]\mid D_i=1\right],
\]
which identifies the average direct effect on the treated (ADTT) in period~1.
\end{itemize}

\end{theorem}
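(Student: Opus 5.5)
The plan is to follow the proof of Theorem~\ref{thm:did} step by step, with one change. In the dynamic setup the period-0 outcome of a treated unit is a history-indexed potential outcome and may depend on $\mathbf d_{-i}$. So the no-anticipation condition in Assumption~\ref{did: no anticipation dyn} has to be used to turn observed period-0 outcomes into untreated potential outcomes.

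First, by consistency, I will write the observed outcomes for $t\in\{0,1\}$ as
$$Y_{it}=\sum_{\mathbf d_{-i}}\big(Y_{it}(1,\mathbf d_{-i})D_i+Y_{it}(0,\mathbf d_{-i})(1-D_i)\big)\mathbbm{1}(\mathbf D_{-i}=\mathbf d_{-i}).$$
This gives $\mathbbm{E}[Y_{it}\mid W_i,D_i=d]=\sum_{\mathbf d_{-i}}\mathbbm{E}[Y_{it}(d,\mathbf d_{-i})\mid W_i,D_i=d,\mathbf D_{-i}=\mathbf d_{-i}]\,P(\mathbf d_{-i};d,W_i)$. To pull the conditioning on $\mathbf D_{-i}$ out of the inner means, I will adopt the same convention as the static proof, where $\mathbbm{E}[Y_{i1}(0,\mathbf D_{-i})\mid W_i,D_i=d]$ was expanded as $\sum_{\mathbf d_{-i}}\mathbbm{E}[Y_{i1}(0,\mathbf d_{-i})\mid W_i,D_i=d]\,P(\mathbf d_{-i};d,W_i)$. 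Concretely, I will treat $Y_{it}(d,\mathbf d_{-i})$ as mean-independent of $\mathbf D_{-i}$ given $(W_i,D_i)$; if needed I will state this explicitly as a maintained condition. For treated units at $t=0$, Assumption~\ref{did: no anticipation dyn} then replaces $\mathbbm{E}[Y_{i0}(1,\mathbf d_{-i})\mid W_i,D_i=1]$ by $\mathbbm{E}[Y_{i0}(0,\mathbf d_{-i})\mid W_i,D_i=1]$ term by term in $\mathbf d_{-i}$.

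Second, I will substitute these expressions into \eqref{DiD indentification formula} and, exactly as in \eqref{eq:did-interf 1}, add and subtract $\mathbbm{E}\big[\mathbbm{E}[Y_{i1}(0,\mathbf D_{-i})\mid W_i,D_i=1]\mid D_i=1\big]$. This isolates the ADTT. The remainder is
$$\mathrm{Bias}_{\mathrm{DiD_{dyn}}}:=\mathbbm{E}\big[\mathrm{trend}^{dyn}_1(W_i)-\mathrm{trend}^{dyn}_0(W_i)\mid D_i=1\big],$$
with
$$\mathrm{trend}^{dyn}_d(W_i)=\sum_{\mathbf d_{-i}}\Big(\mathbbm{E}[Y_{i1}(0,\mathbf d_{-i})\mid W_i,D_i=d]-\mathbbm{E}[Y_{i0}(0,\mathbf d_{-i})\mid W_i,D_i=d]\Big)P(\mathbf d_{-i};d,W_i).$$
Since the theorem does not display the bias explicitly, part (i) is proved by defining it this way.

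For part (ii), CAI makes both weights equal to $P(\mathbf d_{-i};W_i)$. I will then combine the two sums into a single sum over $\mathbf d_{-i}$ whose summands are differences of trends. Each summand vanishes by Assumption~\ref{did: parallel trend dyn}, which holds for every $\mathbf d_{-i}$, so $\mathrm{Bias}_{\mathrm{DiD_{dyn}}}=0$.

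The main obstacle is the first step, namely justifying the factorization of conditional means over $\mathbf d_{-i}$ for period-0 outcomes, which now vary with $\mathbf d_{-i}$. I would first try to handle it with the same convention as the static proof, and otherwise state the mean-independence condition explicitly.
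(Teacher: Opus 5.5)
Your proposal is correct and follows the paper's proof essentially step for step: expand observed outcomes by consistency, use Assumption~\ref{did: no anticipation dyn} to replace the treated units' period-0 mean, add and subtract $\mathbbm{E}[\mathbbm{E}[Y_{i1}(0,\mathbf D_{-i})\mid W_i,D_i=1]\mid D_i=1]$ to isolate the ADTT, define $\mathrm{Bias}_{\mathrm{DiD_{dyn}}}$ as the difference of $D_i$-specific weighted trend sums, and eliminate it in part (ii) using CAI together with the $\mathbf d_{-i}$-wise parallel trends. The mean-independence of $Y_{it}(d,\mathbf d_{-i})$ from $\mathbf D_{-i}$ given $(W_i,D_i)$ that you flag is exactly what the paper uses silently, both when no-anticipation is said to make $\mathbbm{E}[Y_{i0}(1,\mathbf D_{-i})-Y_{i0}(0,\mathbf D_{-i})\mid W_i,D_i=1]$ vanish and when the trends are expanded over $\mathbf d_{-i}$, so stating it explicitly as a maintained condition is the right call.
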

\paragraph{Proof of Theorem \ref{thm:did_dynamic}}

\begin{proof}
\textbf{Part (i):}\\
Fix any unit $i$. The covariate-adjusted DiD identifying formula is given by: 
\begin{align}
\Phi_{\mathrm{ITR}}(P)
&=
\mathbbm{E}\Big[
\mathbbm{E}[Y_{i1}
- Y_{i0}\mid W_i,D_i=1]
-
\mathbbm{E}[Y_{i1}
-Y_{i0}\mid W_i,D_i=0]
\mid D_i=1\Big].
\label{eq:did-start-dyn}
\end{align}

Under arbitrary interference, let potential outcomes in period $t$ be indexed
by the entire treatment vector. If the consistency assumption holds in both periods,

\[
Y_{it}= Y_{it}(\mathbf D)=Y_{it}(D_{i},\mathbf D_{-i}).
\]
 The observed outcome is given by: $$Y_{it}=\sum_{\mathbf d_{-i}\in {\{1,0\}^{N-1}}}\{Y_{it}(0, \mathbf d_{-i})+(Y_{it}(1,\mathbf d_{-i})-Y_{it}(0, \mathbf d_{-i})) D_{i}\}\mathds{1}(\mathbf D_{-i}= \mathbf d_{-i}).$$ 
Now,
\begin{align*}
\mathbbm{E}[Y_{it}\mid W_i,D_i=1]&=\mathbbm{E}[Y_{it}(0,\mathbf D_{-i})\mid W_i,D_i=1]+\mathbbm{E}[Y_{it}(1,\mathbf D_{-i})-Y_{it}(0,\mathbf D_{-i})\mid W_i,D_i=1],\\
\mathbbm{E}[Y_{it}\mid W_i,D_i=0]&= \mathbbm{E}[Y_{it}(0,\mathbf D_{-i})\mid W_i,D_i=0].
\end{align*}

Substituting these expressions into \eqref{eq:did-start-dyn} yields
\begin{align}
\Phi_{\mathrm{ITR}}(P)
&=
\mathbbm{E}\Big[\mathbbm{E}[Y_{i1}(0,\mathbf D_{-i})\mid W_i,D_i=1]+
\mathbbm{E}[Y_{i1}(1,\mathbf D_{-i}) -Y_{i1}(0,\mathbf D_{-i})\mid W_i,D_i=1]\mid D_i=1\Big]\nonumber 
\\
&- \mathbbm{E}\Big[\mathbbm{E}[Y_{i0}(0,\mathbf D_{-i})\mid W_i,D_i=1]+\mathbbm{E}[Y_{i0}(1,\mathbf D_{-i})-Y_{i0}(0,\mathbf D_{-i})\mid W_i,D_i=1]\mid D_i=1
\Big] \notag\\
&\quad-
\mathbbm{E}\Big[
\mathbbm{E}[Y_{i1}(0,\mathbf D_{-i})\mid W_i,D_i=0]
-
\mathbbm{E}[Y_{i0}(0,\mathbf D_{-i})\mid W_i,D_i=0]\mid D_i=1
\Big].
\label{eq:did-interf-dyn}
\end{align}
 Re-arranging we have 
\begin{align}
\Phi_{\mathrm{ITR}}(P)
&=
\mathbbm{E}\Big[ \mathbbm{E}[Y_{i1}(1,\mathbf D_{-i}) -Y_{i1}(0,\mathbf D_{-i})\mid W_i,D_i=1]\mid D_i=1\Big]\nonumber 
\\
& + \mathbbm{E}\Big[\mathbbm{E}[Y_{i1}(0,\mathbf D_{-i})-Y_{i0}(0,\mathbf D_{-i})  \mid W_i,D_i=1]\mid D_i=1\Big] \nonumber \\
&-
\mathbbm{E}\Big[\mathbbm{E}[Y_{i1}(0,\mathbf D_{-i})-Y_{i0}(0,\mathbf D_{-i})  \mid W_i,D_i=0]\mid D_i=1\Big] \nonumber 
\\ & -\mathbbm{E}\Big[\mathbbm{E}[Y_{i0}(1,\mathbf D_{-i})-Y_{i0}(0,\mathbf D_{-i})\mid W_i,D_i=1]\mid D_i=1
\Big].  \notag\\
\label{eq:did-interf-dyn}
\end{align}

The last term in the previous expression vanishes under the conditional no anticipation assumption (Assumption \ref{did: no anticipation dyn}).

Note that for $d \in \{0,1\}$
\begin{align*}
\mathbbm{E}[Y_{it}(0,\mathbf D_{-i})\mid W_i,D_i=d]
&=
\sum_{\mathbf d_{-i,1}} \mathbbm{E}[Y_{it}(0,\mathbf d_{-i})\mid W_i,D_i=d]\,
\Pr(\mathbf D_{-i}=\mathbf d_{-i}\mid W_i,D_i=d).
\end{align*}

Thus, 
\begin{align*}
& \mathbbm{E}[Y_{i1}(0,\mathbf D_{-i})\mid W_i,D_i=1]-\mathbbm{E}[Y_{i0}(0,\mathbf D_{-i})\mid W_i,D_i=1]\\
&  =
\sum_{\mathbf d_{-i}} \left(\mathbbm{E}[Y_{i1}(0,\mathbf d_{-i})\mid W_i,D_i=1]\,
-\mathbbm{E}[Y_{i0}(0,\mathbf d_{-i})\mid W_i,D_i=1]\right)\cdot
\Pr(\mathbf D_{-i}=\mathbf d_{-i}\mid W_i,D_i=1)
\end{align*}
and 
\begin{align*}
& \mathbbm{E}[Y_{i1}(0,\mathbf D_{-i})\mid W_i,D_i=0]-\mathbbm{E}[Y_{i0}(0,\mathbf D_{-i})\mid W_i,D_i=0]\\
&\ =
\sum_{\mathbf d_{-i}} \left(\mathbbm{E}[Y_{i1}(0,\mathbf d_{-i})\mid W_i,D_i=0]\,
-\mathbbm{E}[Y_{i0}(0,\mathbf d_{-i})\mid W_i,D_i=0]\right)\cdot
\Pr(\mathbf D_{-i}=\mathbf d_{-i}\mid W_i,D_i=0)
\end{align*}
Therefore, the bias due to the  term :
\begin{equation}
\begin{split} 
\mathbbm{E}\Big[\mathbbm{E}[Y_{i1}(0,\mathbf D_{-i})-Y_{i0}(0,\mathbf D_{-i})  \mid W_i,D_i=1] 
 -
\mathbbm{E}[Y_{i1}(0,\mathbf D_{-i})-Y_{i0}(0,\mathbf D_{-i})  \mid W_i,D_i=0]\mid D_i=1\Big] 
 \end{split}
\end{equation}

can be written as: 
\begin{align*}
& \mathrm{Bias_{DiD_{dyn}}}=\\
&
\mathbbm{E}\Bigg[\sum_{\mathbf d_{-i,1}} \left(\mathbbm{E}[Y_{i1}(0,\mathbf d_{-i})\mid W_i,D_i=1]\,
-\mathbbm{E}[Y_{i0}(0, \mathbf d_{-i})\mid W_i,D_i=1]\right)\cdot
\Pr(\mathbf D_{-i}=\mathbf d_{-i}\mid W_i,D_i=1)\\
& -\sum_{\mathbf d_{-i}} \left(\mathbbm{E}[Y_{i1}(0,\mathbf d_{-i})\mid W_i,D_i=0]
 -\mathbbm{E}[Y_{i0}(0, \mathbf d_{-i})\mid W_i,D_i=0]\right)\cdot
\Pr(\mathbf D_{-i}=\mathbf d_{-i}\mid W_i,D_i=0)\mid D_i=1\Bigg]
\end{align*}

\textbf{Part (ii):}\\
Under the additional assumption  $\mathbf D_{-i}\perp\!\!\!\perp D_i\mid W_i$
$$\Pr(\mathbf D_{-i}=\mathbf d_{-i}\mid W_i,D_i=0)=\Pr(\mathbf D_{-i}=\mathbf d_{-i}\mid W_i,D_i=1)= \Pr(\mathbf D_{-i}=\mathbf d_{-i}\mid W_i)$$
The bias from part (i) equals
\begin{align*}
& \mathrm{Bias_{DiD_{dyn}}}\\
&\quad =
\mathbbm{E}\Bigg[\sum_{\mathbf d_{-i,1}} \left(\mathbbm{E}[Y_{i1}(0,\mathbf d_{-i})\mid W_i,D_i=1]\,
-\mathbbm{E}[Y_{i0}(0, \mathbf d_{-i})\mid W_i,D_i=1]\right)\cdot
\Pr(\mathbf D_{-i}=\mathbf d_{-i}\mid W_i,D_i=1)\\
& -\sum_{\mathbf d_{-i}} \left(\mathbbm{E}[Y_{i1}(0,\mathbf d_{-i})\mid W_i,D_i=0]
  -\mathbbm{E}[Y_{i0}(0, \mathbf d_{-i})\mid W_i,D_i=0]\right)\cdot
\Pr(\mathbf D_{-i}=\mathbf d_{-i}\mid W_i,D_i=0)\mid D_i=1\Bigg]\\
&\quad =
\mathbbm{E}\Bigg[\sum_{\mathbf d_{-i}} \left(\mathbbm{E}[Y_{i1}(0,\mathbf d_{-i,1})\mid W_i, D_i=1]\,
-\mathbbm{E}[Y_{i0}(0, \mathbf d_{-i})\mid W_i,D_i=1]\right)\cdot
\Pr(\mathbf D_{-i}=\mathbf d_{-i}\mid W_i)\\
&  -\sum_{\mathbf d_{-i}} \left(\mathbbm{E}[Y_{i1}(0,\mathbf d_{-i})\mid W_i,D_i=0]
 -\mathbbm{E}[Y_{i0}(0, \mathbf d_{-i})\mid W_i,D_i=0]\right)\cdot
\Pr(\mathbf D_{-i}=\mathbf d_{-i}\mid W_i)\mid D_i=1\Bigg]\\
&=
\mathbbm{E}\Bigg[\sum_{\mathbf d_{-i}}\Big( \left(\mathbbm{E}[Y_{i1}(0,\mathbf d_{-i})\mid W_i,D_i=1]\,
-\mathbbm{E}[Y_{i0}(0, \mathbf d_{-i})\mid W_i,D_i=1]\right)\\
&   - \left(\mathbbm{E}[Y_{i1}(0,\mathbf d_{-i})\mid W_i,D_i=0] -\mathbbm{E}[Y_{i0}(0, \mathbf d_{-i})\mid W_i,D_i=0]\right)\Big)\cdot
\Pr(\mathbf D_{-i,1}=\mathbf d_{-i}\mid W_i)\mid D_i=1\Bigg]\\
&\quad =0,
\end{align*}
where the last equality follows from Assumptions \ref{did: parallel trend dyn} and \ref{ass:CAI}.
\end{proof}

\newpage

\section{List of Papers where Interference is Plausible}\label{table of papers}
\begingroup
\setlength{\tabcolsep}{4pt}
\renewcommand{\arraystretch}{1.1}
\begin{longtable}{@{}>{\raggedright\arraybackslash}p{3.3cm}
                    >{\raggedright\arraybackslash}p{3.3cm}
                    >{\raggedright\arraybackslash}p{3.5cm}
                    >{\raggedright\arraybackslash}p{5.0cm}@{}}
\caption{Observational Studies where Interference is Plausible but Unmodelled.}
\label{tab:interference-literature} \\
\toprule
\textbf{Paper} & \textbf{Identification Strategy} & \textbf{Setting} & \textbf{Why Interference is Plausible} \\
\midrule
\endfirsthead

\multicolumn{4}{c}%
{\tablename\ \thetable\ -- \textit{Continued from previous page}} \\
\toprule
\textbf{Paper} & \textbf{Identification Strategy} & \textbf{Setting} & \textbf{Why Interference is Plausible} \\
\midrule
\endhead

\midrule
\multicolumn{4}{r}{\textit{Continued on next page}} \\
\endfoot

\bottomrule
\endlastfoot

Card and Krueger (1994, AER)
& Difference-in-differences (ATT under parallel trends)
& Minimum wage (New Jersey vs.\ Pennsylvania)
& Cross-border commuting, firm relocation, price and employment spillovers \\

Lalonde (1986, AER)
& Selection-on-observables
& NSW job training program on employment outcomes
& General equilibrium and peer effects \\


Dell (2010, Econometrica)
& Geographic regression discontinuity
& Historical mining mita in Peru
& Trade, migration, and institutional spillovers across geographic boundaries \\

Krueger (1999, QJE)
& Selection-on-observables
& Class size and test scores in the Tennessee STAR project
& Within- and across-school interactions between students in different treatment groups \\


Acemoglu et al.\ (2016, JPE)
& Instrumental variables (historical instruments)
& Institutions and long-run development
& Cross-country spillovers via trade, geopolitics, and technology diffusion \\

Lee (2008, JPE)
& Sharp regression discontinuity
& Electoral incumbency advantage
& Party resources, voter mobilization, and spillovers across districts \\

Mian and Sufi (2012, QJE)
& Selection-on-observables / regression adjustment
& ``Cash for Clunkers'' consumption stimulus
& Integrated product markets and spatial substitution across regions \\

Hornbeck (2012, AER)
& Difference-in-differences / selection-on-observables
& Dust Bowl and agricultural productivity
& Migration, land reallocation, and labor market equilibrium effects \\


\end{longtable}
\endgroup

\section{Properties of the Fisher Test of $H_0$}\label{extra sim}

This section examines the performance of the Fisher randomization test of $H_0$ that the proposed sensitivity analysis in this paper relies on. The Monte Carlo design closely follows the simulation setup used in Section~\ref{sen appl}, with one key modification in the way interference operates. As in that section, covariates are resampled from the \citet{lalonde1986evaluating} dataset and treatment is assigned according to a propensity score model estimated from the original data. Potential outcomes are calibrated using an outcome regression estimated on the control group, ensuring that the simulated outcomes resemble the scale and variability of the observed earnings data. Propensity score strata are then constructed using quantiles of the estimated propensity scores.

The main difference from the design in Section~\ref{sen appl} concerns the specification of the interference mechanism. In the present simulations, the treatment of other units affects an individual's outcome through the \emph{number} of treated units among the remaining individuals rather than the \emph{share} of treated units. Specifically, the exposure variable for unit $i$ is defined as the leave-one-out number of treated units,
$\Pi_i=\Pi_i(\mathbf{D}_{-i}) = \sum_{j \neq i} D_j .$
Recall that the observed outcome is generated according to
$Y_i = Y_i(0) + \tau D_i + \gamma \Pi_i,$
where $\tau$ captures the direct treatment effect and $\gamma$ governs the magnitude of spillover effects.

 We consider four data-generating processes corresponding to different configurations of $(\tau,\gamma)$, which determine whether the null hypothesis of no direct effect and no interference holds. In the first design, $(\tau,\gamma)=(0,0)$, the null hypothesis is true, and the experiment evaluates the size of the test. In the second design, $(\tau>0,\gamma=0)$, the null is violated through the presence of a direct treatment effect but no interference. In the third design, $(\tau=0,\gamma>0)$, the null is violated solely through interference, with no direct treatment effect. Finally, in the fourth design, $(\tau>0,\gamma<0)$, both a direct treatment effect and interference are present. These designs allow us to evaluate both the validity of the test under the null and its power against different forms of departures from the null hypothesis.

\begin{table}[htbp]
\centering
\caption{Monte Carlo Rejection Rates of the Fisher Randomization Test}
\label{tab:fisher_mc}

\begin{tabular}{lcccc}
\toprule
Direct Effect ($\tau$) & Interference ($\gamma$) & Null True? & Rejection Rate & Std. Error \\
\midrule
0    & 0     & Yes & $<0.01$ & 0.000 \\
4000 & 0     & No  & 0.857 & 0.020 \\
0    & 4000  & No  & 0.817 & 0.022 \\
4000 & -4000 & No  & 1.000 & 0.000 \\
\bottomrule
\end{tabular}

\vspace{2mm}
\begin{minipage}{0.95\linewidth}
\footnotesize
\textit{Notes:} Rejection rates are computed from 300 Monte Carlo replications at the
5\% significance level. For each replication, the p-value of the Fisher
randomization test is computed using 20{,}000 simulated treatment assignments.
The parameters $\tau$ and $\gamma$ denote the direct treatment effect and the
spillover effect through the number of treated units, among others.
\end{minipage}
\end{table}

Table \ref{tab:fisher_mc} reports the rejection rates of the Fisher randomization test across the four data-generating processes. When the null hypothesis holds, $(\tau,\gamma)=(0,0)$, the test exhibits no rejections in the simulation, indicating that it does not over-reject in this design. When the null is violated through a direct treatment effect only, $(\tau,\gamma)=(4000,0)$, the rejection rate is approximately $0.857$, showing substantial power against alternatives involving direct effects. Similarly, when the null is violated solely through interference, $(\tau,\gamma)=(0,4000)$, the test rejects in about $81.7\%$ of the replications, indicating that the procedure is also capable of detecting spillover effects. Finally, when both a direct effect and interference are present with opposite signs, $(\tau,\gamma)=(4000,-4000)$, the rejection rate reaches $1.000$, implying that the test detects the combined deviation from the null in all simulated samples.

\section{Sensitivity Analysis: The Implementation Guide}
\label{Implementation Guide}
Figure \ref{sen flowchart} and Algorithm \ref{alg:sensitivity} summarizes the practical implementation of the sensitivity analysis. How the Fisher $P$-Value Bounds are computed requires explanation. We do so in the following subsection.

\begin{figure}[!tb]
\caption{Flow chart for CAI sensitivity analysis procedure. 
}
\vspace{0.2cm}
\label{sen flowchart}
\centering
\begin{tikzpicture}[
every node/.style={font=\small},
box/.style={rectangle, draw, rounded corners, align=center,
minimum width=3.3cm, minimum height=0.8cm},
decision/.style={diamond, draw, aspect=2.2, align=center, inner sep=1pt},
arrow/.style={->, thick, shorten >=3pt, shorten <=3pt}
]

\node[box, fill=blue!15] (data)
{Observed data:\\$(\mathbf{Y},\mathbf{W},\mathbf{D})$};

\node[box, fill=blue!15] (strata)
[below=0.9cm of data]
{Stratify units\\using $W_i$ or $\hat{e}(W_i)$};

\node[decision, fill=yellow!20] (balance)
[below=0.9cm of strata]
{ Covariate balance?};

\node[box, fill=blue!10] (refine)
[left=2cm of balance]
{Refine\\stratification};

\node[box, fill=green!15] (stat)
[below=0.9cm of balance]
{Compute $T_{\mathrm{obs}}$};

\node[box, fill=green!15] (cai)
[below=0.9cm of stat]
{Benchmark CAI:\\compute $p(\xi=1)$};

\node[decision, fill=yellow!20] (reject)
[below=0.9cm of cai]
{$p(\xi=1)<\alpha$ ?};

\node[box, fill=red!20] 
[left=3cm of reject]
(stop) {No evidence of direct\\effect or interference. \\ Sensitivity analysis is\\not informative};

\node[box, fill=green!20] (sens)
[right=3cm of reject]
{CAI sensitivity\\analysis};

\node[box, fill=green!20] (xi)
[below=0.9cm of sens]
{Increase $\xi$};

\node[box, fill=purple!20] (robust)
[below=0.9cm of xi]
{Compute bounds\\and $\xi^*$};

\draw[arrow] (data) -- (strata);
\draw[arrow] (strata) -- (balance);

\draw[arrow] (balance) -- node[right]{Yes} (stat);
\draw[arrow] (balance) -- node[above]{No} (refine);
\draw[arrow] (refine) |- (strata);

\draw[arrow] (stat) -- (cai);
\draw[arrow] (cai) -- (reject);

\draw[arrow] (reject) -- node[above]{No} (stop);
\draw[arrow] (reject) -- node[above]{Yes} (sens);

\draw[arrow] (sens) -- (xi);
\draw[arrow] (xi) -- (robust);

\end{tikzpicture}
\end{figure}
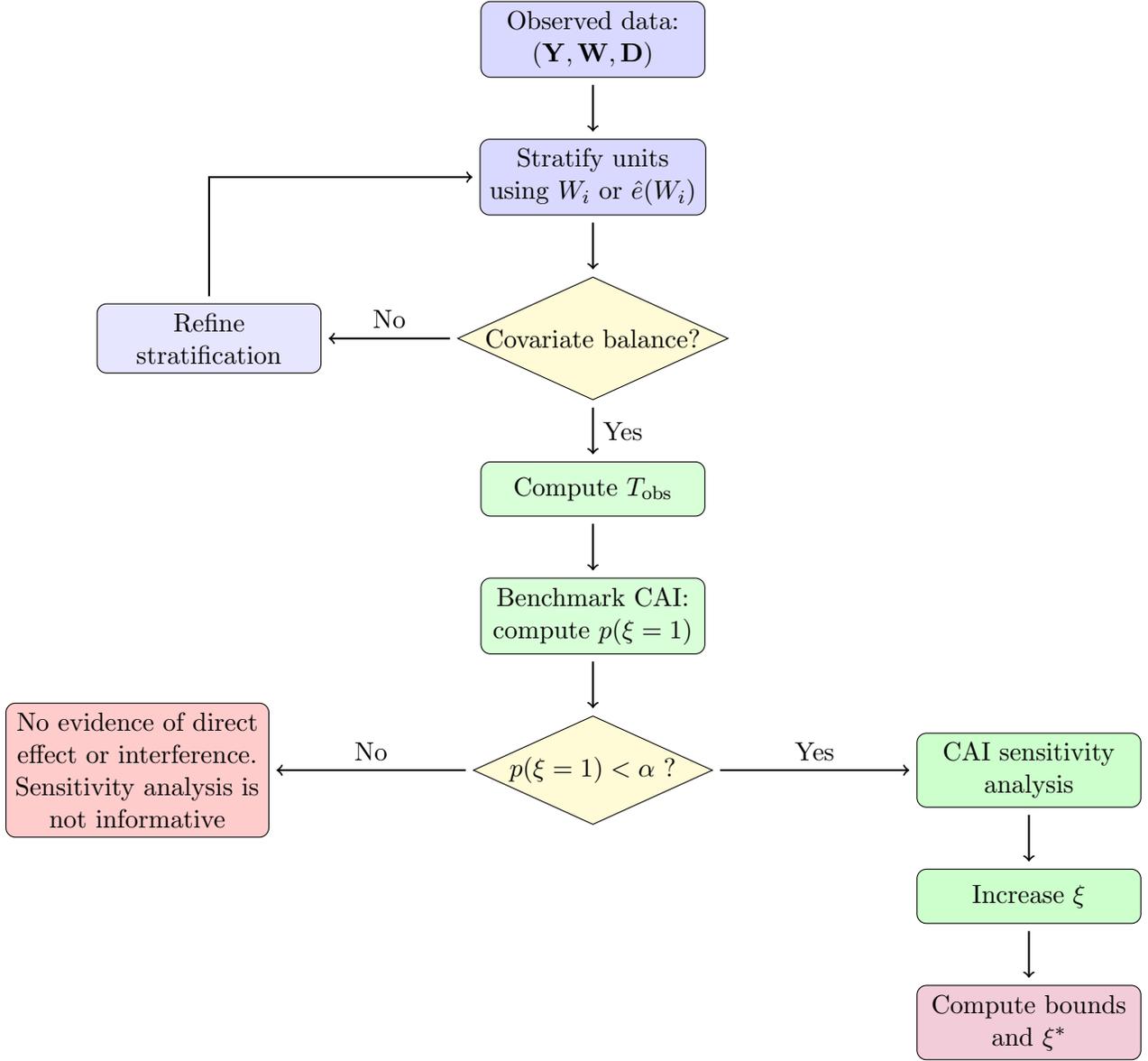

\SetKwInput{KwInput}{Inputs}
\SetKwInput{KwOutput}{Output}
\setcounter{algocf}{0}
\renewcommand{\algorithmcfname}{Algorithm}
\begin{algorithm}[!t]
\caption{CAI Sensitivity Analysis for the Stratified Rank-Sum Test}
\label{alg:sensitivity}
\KwInput{Observed data $(\mathbf{Y},\mathbf{W}, \mathbf{D})$; number of strata $K$; sensitivity grid $\boldsymbol{\xi}=\{\xi_1,\dots,\xi_J\}$; Monte Carlo sizes $(B_{\mathrm{base}},B_{\mathrm{inner}},B_{\mathrm{final}}, L)$; iterations $I$.}
\KwOutput{$P$-value bounds $\{(\underline p(\xi_j),\overline p(\xi_j))\}_{j=1}^J$}

1. Estimate propensity scores $\hat e_i=\Lambda(W_i^\top\hat\beta)$ from a logistic regression of $D_i$ on $W_i$.\;
2. Construct strata $S_i\in\{1,\dots,K\}$ by discretizing $\{\hat e_i\}_{i=1}^N$ into $K$ quantile bins.\;
3. For each stratum $s$, set $\mathcal{I}_s=\{i:S_i=s\}$, $n_s=|\mathcal{I}_s|$, and $\hat e_s=n_s^{-1}\sum_{i\in\mathcal{I}_s}D_i$.\;
4. Compute within-stratum ranks $q_i = r_s(Y_i),\, \text{for } i \in \mathcal{I}_s$, where $r_s(\cdot)$ assigns ranks\\\,\quad to $\{Y_j : j \in \mathcal{I}_s\}$ (using average ranks for ties).\;
5. Compute the observed test statistic: $T^{\mathrm{obs}}=\sum_{s=1}^K \sum_{i:S_i=s} q_i D_i$.\;
6. For each $s$,  precompute $\left[ T_s^{(l)}(m_s) \right]_{\substack{m_s=0,\dots,n_s\\ l=1,\dots,L}}$, the within-stratum test-statistic \\ \,\,\quad  contribution by uniformly selecting $m_s$ treated units replacement.\;
\vspace{0.2cm}
7. \For{$j=1,\dots,J$}{ Set the variable $\xi$ equal to $\xi_j$\;
  \eIf{$\xi = 1$}{
    \For{$s=1,\dots,K$}{
      Set $\pi_s(m_s)=\binom{n_s}{m_s}\hat e_s^{m_s}(1-\hat e_s)^{n_s-m_s}$ for $m_s=0,\dots,n_s$\;
    }
    \vspace{0.2cm}
    i. Simulate $B_{\mathrm{base}}$ draws $\{T^{(b)}\}_{b=1}^{B_{\mathrm{base}}}$ as follows: for each $b$ and each stratum $s$,\\\,\,\,\, draw $M_s^{(b)}\sim\pi_s$, then draw $T_s^{(b)}$ from the precomputed conditional\\\,\,\,\, distribution of $T_s$ given $M_s^{(b)}$, and set $T^{(b)}=\sum_{s=1}^K T_s^{(b)}$.\;
\vspace{0.3cm}
    ii. Set 
       $\underline p(\xi=1)=\overline p(\xi=1)=(1+\sum_{b=1}^{B_{\mathrm{base}}}\mathbf{1}\{T^{(b)}\ge T^{\mathrm{obs}}\})/(B_{\mathrm{base}}+1)$.\; 
  }{
    i. Initialize $\pi_s^{(0)}(m_s):=\binom{n_s}{m_s}\hat e_s^{m_s}(1-\hat e_s)^{n_s-m_s}$ to the binomial pmf for each $s$\;

    \For{$\ell=1,\dots,I$}{
      \For{$s=1,\dots,K$}{
        a. compute $\mu_s=n_s\hat e_s$ and $\mathrm{Var}_{\mathrm{B},s}=n_s\hat e_s(1-\hat e_s)$\;

        b. Impose moment constraints:
        $\mathbbm{E}_{\pi_s}[M_s]=\mu_s$ and\\\,
        \quad $\mathrm{Var}_{\pi_s}(M_s)\in\Bigl[\mathrm{Var}_{\mathrm{B},s},\; \min\{
        \xi\,\mathrm{Var}_{\mathrm{B},s}\,, \mu_s(n_s-\mu_s)\}\Bigr]$\;

        c. Using $B_{\mathrm{inner}}$ draws as in step 7.i., approximate \\\quad\, $a_s(m_s)\approx \Pr(T\ge T^{\mathrm{obs}}\mid M_s=m_s)$ under $\{\pi_r^{(\ell-1)}\}_{r\neq s}$\;  
        %

        d. Update $\pi_s^{(\ell)}$ by solving the LP that maximizes (for $\overline p(\xi)$) or\\\quad\,\,minimizes (for $\underline p(\xi)$) $\sum_{m_s=0}^{n_s} a_s(m_s)\pi_s(m_s)$ subject to the moment\\\quad\,\,constraints, $\sum_{m_s} \pi_s(m_s)=1$, and $\pi_s(m_s)\ge 0$\;
      }
    }

    ii. Using $\{\pi_s^{(I)}\}_{s=1}^K$, simulate $B_{\mathrm{final}}$ draws of $T$ and compute\\\quad 
      \,\,$\overline p(\xi)=(1+\sum_{b=1}^{B_{\mathrm{final}}}\mathbf{1}\{T^{(b)}\ge T^{\mathrm{obs}}\})/(B_{\mathrm{final}}+1)$.\; 
      \vspace{0.3cm}
    iii. Repeat the optimization with minimization updates to obtain $\underline p(\xi)$\;
  }
}
\end{algorithm}

\subsection*{Computation of $P$-Value Bounds}
For a fixed sensitivity parameter $\xi \ge 1$, we compute sharp upper and lower bounds on the randomization $P$-value over all assignment mechanisms consistent with the moment restrictions implied by the CAI sensitivity model. Within each stratum $s$, recall that $M_s=\sum_{i\in\mathcal{I}_s} D_i$ denotes the treated count, $n_s$ the stratum size, and $\hat e_s$ the observed treated share. Under CAI, $M_s$ follows a binomial distribution with mean $\mu_s= n_s\hat e_s$ and variance $\mathrm{Var}_{\mathrm{B},s}=n_s\hat e_s(1-\hat e_s)$.

For $\xi > 1$, our algorithm does not directly optimize the final Monte Carlo $P$-value in one shot. Instead, it uses a \textit{coordinate-descent} scheme over strata, where at each step it updates one stratum-specific treated-count distribution $\pi_s$
while holding the others fixed.

We initialize the treated-count distribution in each stratum at the binomial probability mass function,
\[
\pi_s^{(0)}(m_s)=\binom{n_s}{m_s}\hat e_s^{m_s}(1-\hat e_s)^{n_s-m_s},
\qquad m_s=0,\dots,n_s.
\]
For $\xi>1$, we compute the admissible assignment mechanisms by finding those distributions $\pi_s$ on $\{0,\dots,n_s\}$ that satisfy
\[
\mathbbm{E}_{\pi_s}[M_s]=\mu_s,
\qquad
\mathrm{Var}_{\pi_s}(M_s)\in
\bigl[\mathrm{Var}_{\mathrm{B},s},\min\{
        \xi\,\mathrm{Var}_{\mathrm{B},s}\,, \mu_s(n_s-\mu_s)\}\bigr].
\]
where $\mu_s(n_s-\mu_s)$ is the maximum possible variance.
Thus, $\xi$ bounds the degree of over- or under-dispersion in the treated counts relative to the binomial benchmark.

To compute extremal $P$-values, we perform $I$ iterations of a coordinate-wise optimization procedure. At iteration $\ell$ and for each stratum $s$, we fix the assignment distributions of all other strata at $\{\pi_r^{(\ell-1)}: r\neq s\}$ and update $\pi_s$ while holding the others fixed. Using $B_{\mathrm{inner}}$ Monte Carlo draws based on the two-stage assignment mechanism, we approximate
\[
a_s(m_s)\approx
\Pr\!\left(T \ge T^{\mathrm{obs}} \mid M_s=m_s\right),
\]
where the remaining strata are simulated according to their current assignment distributions. The quantity $a_s(m_s)$ represents the contribution of each possible treated count $m_s$ in stratum $s$ to the overall rejection probability.

Because the rejection probability is linear in $\pi_s$, we update $\pi_s$ by solving the linear program
\[
\max_{\pi_s}\ \sum_{m_s=0}^{n_s} a_s(m_s)\pi_s(m_s)
\quad
\text{or}
\quad
\min_{\pi_s}\ \sum_{m_s=0}^{n_s} a_s(m_s)\pi_s(m_s),
\]
subject to the moment and probability constraints
\[
\sum_{m_s=0}^{n_s}\pi_s(m_s)=1,
\qquad
\pi_s(m_s)\ge 0
\]

and the moment constraints.
Cycling over strata and iterating this procedure yields candidate worst-case assignment distributions $\{\pi_s^{(I)}\}_{s=1}^K$.

Finally, we simulate $B_{\mathrm{final}}$ draws of the test statistic under the optimized assignment mechanism and compute
$\overline p(\xi)$.
Repeating the optimization with the objective reversed yields the lower bound $\underline p(\xi)$.

The resulting bounds characterize the maximal and minimal rejection probabilities over the class of assignment mechanisms satisfying the dispersion constraint indexed by $\xi$. Since the one-sided $P$-value objective functions are linear in the treated-count distributions and the admissible set is convex, the extremal $P$-values are attained at solutions to the corresponding linear programs, ensuring that $\overline p(\xi)$ and $\underline p(\xi)$ provide valid worst- and best-case $P$-value bounds.

Note that the foregoing optimization procedure cannot be applied directly to two-sided $P$-values of the form
\[
P\text{-value}
=
\min\!\left\{1,\;2\min\!\left(
\Pr_{\mathcal A}(T \ge T_{\mathrm{obs}}),
\Pr_{\mathcal A}(T \le T_{\mathrm{obs}})
\right)\right\},
\]
since it is not a linear function of the treated-count distributions. In particular, it depends on the minimum of two tail probabilities, which introduces a nonlinearity that prevents the objective from being written as a linear functional of the distributions $\{\pi_s\}_{s=1}^K$. Consequently, the linear programming approach used above cannot be applied directly. Instead, when a two-sided $P$-value is of interest, we optimize the right- and left-tail $P$-values separately and combine the resulting bounds using the Bonferroni correction in the \texttt{caisensitivity} R package.

\subsection*{Choice of Implementation Parameters }
The foregoing procedure for computing the $P$-value bounds is computationally efficient. Consequently, the size of the sensitivity grid $\boldsymbol{\xi}$ and the implementation parameters $B_{\mathrm{base}}$, $B_{\mathrm{inner}}$, $B_{\mathrm{final}}$, $L$, and $I$ can be chosen to be relatively large at modest computational cost, thereby ensuring accurate and stable estimates of the bounds. In practice, the resulting bounds exhibit little to no sensitivity to the specific choice of these implementation parameters.

The number of strata $K$ should be chosen in a data-driven manner, reflecting the empirical distribution of the estimated propensity scores and the available sample size. In general, $K$ should be sufficiently large to ensure that units within each stratum have similar estimated propensity scores, thereby improving the approximation to conditioning on $W_i$. At the same time, $K$ must be small enough to guarantee adequate sample size within each stratum, since overly fine stratification can lead to sparse cells, unstable rank statistics, and noisy treated-count distributions.

\section{Empirical Application: Class Size and Student Achievement
(Project STAR)}\label{app::application}

\subsection*{Background and Data Description}
To further illustrate the proposed sensitivity analysis procedure, we present an additional empirical application using data from the Tennessee Student/Teacher Achievement Ratio experiment, known as Project STAR. Project STAR was a longitudinal study in which kindergarten students and their teachers were randomly assigned to one of three groups beginning in the 1985--1986 school year: small classes (13--17 students per teacher), regular-size classes (22--25 students), and regular-with-aide classes (22--25 students), which also included a full-time teacher's aide. The STAR experiment
was conducted at 79 schools across the state of Tennessee over four years. The data are publicly available through the \texttt{AER} package in \textsf{R} \citep{kleiber2020package}. which includes 11,598 student-level observations covering kindergarten through third grade.

Our analysis focuses on the kindergarten cohort and restricts the sample to students assigned to small or regular classes. The outcome of interest is the combined math and reading test score in kindergarten (\texttt{readk}$+$\texttt{mathk}). Treatment is defined as assignment to a small class ($D_i = 1$) versus a regular class ($D_i = 0$). The analysis sample consists of
3,730 students with non-missing outcome and covariate data, distributed across 79
schools.

\subsection*{Is the Analysis Sample Observational?}

Although Project STAR was designed as a randomized experiment, two features of the data for analysis violate the pure randomization assumption that would otherwise make covariate adjustment unnecessary.

\medskip
\noindent\textit{Non-random attrition:} Attrition could occur for several reasons, including students moving to another school, students repeating a grade, and students being advanced a grade. Crucially, attrition was differential across class types:
\citet{krueger1999experimental} documents that students in regular classes had higher withdrawal rates than students in small classes, indicating that missingness in outcome data is correlated with treatment status and student background. Selective attrition and non-random crossover between small and regular-sized classes can bias small-class effects, a concern prominently raised by \citet{hanushek1999some}.
The observed sample of 3,730 kindergarten students with complete data is therefore a selected subset of the original 6,323 students who were randomized at entry, and this selection is non-random with respect to both treatment assignment and individual
characteristics.

\medskip
\noindent\textit{Non-random class transitions:} For various reasons---such as sample attrition, behavioral challenges, and parental complaints---many students who were initially randomized into a small class later transitioned to a different class type or
withdrew from the experiment entirely. Of the 1,900 children initially assigned to a small class in kindergarten, only 857 (45\%) attended a small class for all four years \citep{krueger1999experimental}. About 10\% of students switched between class types for other reasons. These transitions were not random: they were driven by parental requests, administrative decisions, and student characteristics, all of which are correlated with background variables.

Together, these features mean that the probability of being observed in a given class type in the analytic sample is not constant across students, even within a school.
Standard randomization-based inference that ignores this selection may therefore yield misleading conclusions. To correct for this, we follow the approach of \citet{krueger1999experimental} and condition on a standard set of individual-level pre-treatment covariates----gender, ethnicity, birth quarter (age proxy), free-lunch eligibility (socioeconomic status proxy), and kindergarten teacher experience---alongside school fixed effects. The propensity score model is:
\begin{equation*}
    D_i=logit( \texttt{female}_i + \texttt{afam}_i + \texttt{hispanic}_i
    + \texttt{other\_eth}_i + \texttt{birth\_num}_i + \texttt{freelunch}_i
    + \texttt{exper}_i + \alpha_s),
    \label{eq:star_ps}
\end{equation*}
\noindent where $\alpha_s$ denotes school fixed effects. School fixed effects capture the within-school randomization intensity---the share of small-class slots allocated to each school---while the individual covariates adjust for the non-random selection arising from attrition and transitions. The estimated propensity score is used to form $K = 6$ strata via quantile binning, from which the sensitivity analysis is conducted.

\subsection*{Is Interference Plausible?}
A defining feature of Project STAR that has been systematically ignored in the literature is the possibility of interference among students' test scores. To the best of our knowledge, the entire body of work using the project STAR data---e.g., \citet{krueger1999experimental},  \citet{hanushek1999some}, and \citet{chetty2011does}---treats each student's outcome as depending solely on their own class-type assignment, thereby invoking the ITR assumption and estimating ATEs. Yet there are well-documented mechanisms by which a student's class-type assignment plausibly affects the test scores of other students. (e.g., see \citealt{hong2006evaluating}).

First, resource reallocation within a school implies that expanding the share of small classes necessarily reduces the resources available to students in regular classes. A school with a higher proportion of small classes diverts teachers, aides, and classroom space away from regular classes, reducing the quality of instruction available to students not assigned to small classes.

Second, peer spillovers arise because students in small and regular classes are not isolated from one another. Within the same school, students across different class types share common spaces, creating ample opportunity for social interaction that transmits learning spillovers across class-type boundaries. A high-achieving student assigned to a small class may reinforce a peer's skills in a regular class through these informal interactions, so that the regular-class student's test score outcome depends not only on their own class-type assignment but also on the assignments of classmates and schoolmates. Furthermore, interactions need not be confined to the same school: children from neighbouring schools interact through shared neighbourhoods, community programmes, and family social networks, opening channels through which class-type assignments in one school can propagate to affect the outcomes of students in another. These within- and across-school social channels imply that a student's outcome may depend on the full vector of class-type assignments in the population---precisely the form of arbitrary interference captured by Assumption~\ref{ass:arbitrary-interference}.

Because STAR contains no data on which students interact with which---there is no network structure, no classroom-level peer roster, and no information on social linkages---there is no basis for restricting the form of interference. Assumption~\ref{ass:arbitrary-interference}(arbitrary interference) is therefore the natural maintained assumption, and the ITR-based identification formulas and estimators employed throughout the STAR literature could be interpreted as the ADE if the CAI condition holds.

\subsection*{Results of the CAI Sensitivity Analysis}
Table~\ref{tab:star_sensitivity} reports the results of the CAI sensitivity analysis for the kindergarten cohort. The analysis sample contains $N = 3{,}730$ students distributed across $K = 6$ propensity-score strata.

\begin{table}[!tb]
\centering
\caption{CAI Sensitivity Analysis: Project STAR Data}
\label{tab:star_sensitivity}
\begin{tabular}{ccc}
\hline
$\xi$ & Lower bound of $P$-value & Upper bound of $P$-value \\
\hline
1.00 & 0.0012 & 0.0012 \\
1.25 & 0.0003 & 0.0526 \\
1.50 & 0.0004 & 0.0588 \\
2.00 & 0.0004 & 0.0763 \\
3.00 & 0.0003 & 0.1103 \\
5.00 & 0.0005 & 0.1550 \\
8.00 & 0.0016 & 0.2091 \\
\hline
\end{tabular}
\vspace{6pt}
\begin{minipage}{0.85\textwidth}
\small\textit{Notes:} We use the same implementation parameters as in the Lalonde data application.
\end{minipage}
\end{table}

At $\xi = 1$ (the Bernoulli assignment mechanism) under which treatment assignments are independent within strata conditional on the propensity score---the Fisher $P$-value is $0.0012$. This provides strong evidence against the sharp null of no direct effect and no interference, consistent with the well-documented finding in \citet{krueger1999experimental} that small-class assignment raises test scores by approximately 4 percentile points in kindergarten, or our argument that there is interference.

The robustness value is $\xi^{*} \approx 1.25$, meaning that the upper bound on the $P$-value first exceeds the conventional 5\% significance level at $\xi = 1.25$. The within-school fixed-quota mechanism in STAR is precisely the type of CAI violation the sensitivity parameter is designed to capture, and the direction of the violation. Thus, the finding that $\xi^{*} \approx 1.25$ demonstrates that the conclusion of the test is very sensitive to CAI as expected.

The result illustrates the paper's key message in a setting where we are confident that unconfoundedness holds in the presence of interference. The ITR-based identification formula used throughout the STAR literature identifies a well-defined ADE when CAI holds, but its robustness to within-school assignment dependence induced by the quota mechanism can introduce bias.

\end{document}